\numberwithin{equation}{section}
\newtheorem{defi}{Definition}[section]
\newtheorem{lemma}{Lemma}[section]
\newtheorem{cor}{Corollary}[section]
\newtheorem{prop}{Proposition}[section]
\newtheorem{theorem}[lemma]{Theorem}
\newtheorem{rem}[lemma]{Remark}
\begin{document}
\title[Rotating Clouds of Charged Vlasov Matter in GR]{Rotating Clouds of Charged Vlasov Matter in General Relativity}
\author{Maximilian Thaller}
\date{\today}

\begin{abstract}
The existence of stationary solutions of the Einstein-Vlasov-Maxwell system which are axially symmetric but not spherically symmetric is proven by means of the implicit function theorem on Banach spaces. The proof relies on the methods of \cite{akr14} where a similar result is obtained for uncharged particles. Among the solutions constructed in this article there are rotating and non-rotating ones. Static solutions exhibit an electric but no magnetic field. In the case of rotating solutions, in addition to the electric field, a purely poloidal magnetic field is induced by the particle current. The existence of toroidal components of the magnetic field turns out to be not possible in this setting.
\end{abstract}

\maketitle

\tableofcontents

\newpage 

\section{Introduction}
The Einstein-Vlasov-Maxwell system (EVM-system) describes an ensemble of charged particles whose motion is governed by gravity and an electro-magnetic field but which do not interact via collisions. In the framework of General Relativity gravity is described by the curvature of the manifold, the space-time, on which the particles live. Both the space-time curvature and the electro-magnetic field are generated collectively by the particles themselves. In contrast to the Einstein-Vlasov system, which only takes into account gravity, particles described by the EVM-system are not freely falling, i.e.~their trajectories are not geodesics. \par
In this article the existence of stationary, rotating solutions of the EVM-system is proven by means of the implicit function theorem. The proof is a generalisation of \cite{akr14}, where the existence of rotating, stationary solutions of the Einstein-Vlasov system with uncharged particles is proved, to the case where the particles are charged and hence induce an electro-magnetic field. In the context of kinetic theory this method has already been used in \cite{r00} to show the existence of stationary, rotating solutions of the Vlasov-Poisson system. The idea of this method is to introduce a parameter $\lambda$ to the system which can ``turn on'' rotation and to perturb the system around a spherically symmetric, static solution without rotation. To this end one considers a functional $\mathfrak F : \mathcal X \times [-\delta,\delta]\to \mathcal X$, where $\mathcal X$ is a suitable function space which will contain the solution and $[-\delta, \delta]$ is the interval in which the parameter $\lambda$ will lie. The operator is constructed such that if $\mathfrak F(\zeta,\lambda) = 0$ then $\zeta$ is a collection of functions which constitute a solution of the Vlasov-Poisson system with the parameter $\lambda$. The solution $\zeta_0$, corresponding to $\lambda=0$, is known and we have $\mathfrak F(\zeta_0,0)=0$. The main part of the work consists in showing that the implicit function theorem can be applied. Then it follows that to each $\lambda\in (-\delta,\delta)$ there exists $\zeta_\lambda \in \mathcal X$ such that $\mathfrak F(\zeta_\lambda, \lambda)=0$. This collection $\zeta_\lambda$ of functions consequently solves the Vlasov-Poisson system and this solution is axially symmetric but not spherically symmetric. It is in the nature of this method that the obtained rotating solutions have small overall angular momentum.\par
In \cite{akr11} a similar method with a different set up has been used to show the existence of axially but not spherically symmetric, static solutions of the Einstein-Vlasov system. In this context it was used that the Vlasov-Poisson system is the non-relativistic limit of the Einstein-Vlasov system, in the sense that a solution of the Einstein-Vlasov system converges to a solution of the Vlasov-Poisson system if the speed of light $c$ goes to infinity. So besides $\lambda$, the speed of light $c$ has been introduced to the system as a second parameter. Perturbing off a spherically symmetric, static solution of the Vlasov-Poisson system in those two parameters $\lambda$ and $c$ yields an axially but not spherically symmetric, static solution of the Einstein-Vlasov system. The deviation from spherical symmetry is small but by a scaling argument the solution can be made fully relativistic, i.e.~$c=1$. In \cite{akr14} further technical insights made it possible to include rotation into the picture. \par
Lichtenstein developed a method based on the implicit function theorem to construct rotating fluid bodies \cite{l18, l33} in Newtonian gravity. This approach has later been reformulated in a modern mathematical language \cite{h94} and improved \cite{h95}. In \cite{abs08, abs09} the authors use an implicit function argument to construct axially symmetric static and rotating elastic bodies in Einstein gravity. In a series of papers of which the last one is \cite{cdkkmr18} the authors construct stationary solutions of the Einstein equations with negative cosmological constant without any symmetries. Many different matter models can be included, such as a scalar field, Maxwell, or Yang-Mills. \par 
Space-times with rotating, charged matter configurations  have been studied in the literature by analytical and numerical means, see e.g.~\cite{bbgn95, cpl00, fr12}. An important motivation for these studies is the modelling of rotating stars or neutron stars with a magnetic field. In these articles the matter is modelled as a perfect fluid and different shapes of the magnetic field can be observed depending on the assumptions on the fluid, like an equation of state or conductivity properties. For example rotating solutions with no poloidal magnetic field can be constructed, cf.~\cite{fr12}. These works can serve as a source of intuition for the study of rotating clouds of Vlasov matter. There is however an important difference. When studying a perfect fluid, the Einstein-Euler system (which describes a space-time containing matter of the type of a perfect fluid) has to be supplemented by an equation of state which captures the physical properties of the fluid under consideration. Depending on the choice of the equation of state, different matter configurations and different electro-magnetic fields can be constructed. For Vlasov matter however there is much less variety in the physical properties of the solutions that can be obtained. The basic assumptions on the particles' behaviour and how the energy and the angular momentum is distributed among the particles (this is sometimes referred to as a {\em microscopic equation of state}) already determines the macroscopic character of the solutions. It turns out that rotating solutions of the EVM-system must have a poloidal magnetic field but no toroidal magnetic field. \par
We briefly mention that in the non-relativistic setting a variety of different axially symmetric solutions can be constructed explicitly, cf.~for example \cite{galactic_dynamics}. A well studied class of these solutions are disk solutions which serve as models for disk shaped galaxies and which are used to study some physical properties of these galaxies. The so called Morgan \& Morgan disk solutions, introduced in \cite{mm69}, are important in this context. In \cite{rap12} the authors construct comparable axially symmetric solutions in Newtonian gravity with general relativistic corrections. Surprisingly these general relativistic corrections account for changes of the solutions far from the galaxy core -- a region where it was expected that Newtonian gravity describes the physics well and general relativistic effects do not play a significant role. This observation adds to the motivation of studying axially symmetric configurations of collisionless particles in the fully general relativistic picture. \par
The present article generalises \cite{akr14} to the case of charged particles, i.e.~ solutions of the EVM-system are constructed by perturbing off a non-trivial, spherically symmetric, static solution of the Vlasov-Poisson system. It is assumed that the particles are charged with a particle charge $q$, i.e.~an electro-magnetic field is included into the framework. A priori this can be done in two different ways. Either one considers $q$ as a third (a priori small) parameter which ``turns on'' charge. In this case one still perturbs off a spherically symmetric, static, uncharged solution of the Vlasov-Poisson system. The other way is to use the fact that in the non-relativistic limit the Maxwell equations reduce to the Poisson equation as well and one perturbs around a charged solution of the Vlasov-Poisson system. It turns out that the first approach is easier from a technical point of view since the operator $\mathfrak F$ that the implicit function theorem will be applied to is changed only insignificantly by the included Maxwell equations. However, the result would be restricted to small particle charge parameters $q$. In the second approach arbitrary values $0 \leq q < m_p$ of the particle charge parameter can be treated, where $m_p$ denotes the mass of the particles. In this case the operator $\mathfrak F$ has additional terms. In this article the second approach is presented. \par
In an axially symmetric, static setting the EVM-system reduces to a system of coupled, non-linear Poisson equations in different dimensions and a first order PDE. The solution of this system consists in a collection of functions which we denote $\zeta$. For the construction of a well defined solution operator $\mathfrak F$ one has to assure for that the source terms of these Poisson equations are sufficiently regular. 
However, after the variable substitution $A_\varphi = \varrho^2 a$ one obtains for the $\varphi$-component of the electro-magnetic four potential $A$ the equation
\begin{equation} \label{intro_poisson_a}
\Delta_5 a = \frac{2}{1+h} \frac{a \partial_\varrho h}{\varrho} + \frac{2}{4\pi^2 c^2} \frac{a \partial_\varrho \nu}{\varrho} + \dots.
\end{equation}
On the right hand side only some a priori problematic terms are written out explicitly. The functions $\nu$ and $h$ are part of the collection $\zeta$ of solution functions of the EVM-system. These terms are a priori problematic because they are singular at the axis $\varrho = 0$. \par
Looking a bit closer one notices that the right member of equation (\ref{intro_poisson_a}) is not singular if $h$ and $\nu$ are axially symmetric functions of a certain regularity. However, by dividing by $\varrho$ one ``looses derivatives''. For this reason the function space $\mathcal X$ has to be chosen such that the individual functions of the collection $\zeta$ have a hierarchy in regularity. For equation (\ref{intro_poisson_a}) for example one needs that $h$ and $\nu$ are of higher regularity than $a$. \par
This article is a generalisation of \cite{akr14} and the proof follows the same scheme. Including charge into the framework does not only increase the number of equations in the system but it also increases significantly the number of terms in each equation. Some of these terms require some care in the analysis but clearly not all of them. Still all required properties of the system have to be checked term by term. In order to make the presentation more concise this article resorts more to shorthands and schematic or symbolic notation than \cite{akr11, akr14}. \par
In the next section the EVM-system will be introduced. Then, in Section \ref{sect_result}, the result of this article will be stated and an outline of the proof will be given. The rest of the article is devoted to the introduction of the technical setup, the definition of the relevant objects, i.e.~function spaces and solution operators, and the proofs of important properties of these operators.

\section{The Einstein-Vlasov-Maxwell system}

A solution of the Einstein-Vlasov-Maxwell system (EVM-system) for particles with mass $m_p\geq 0$ and charge $0 \leq q < 1$ is a Lorentzian metric $g\in T^*\mathscr M \otimes T^*\mathscr M$ defined on a four dimensional manifold $\mathscr M$, a particle distribution function $f \in C^1(T\mathscr M; \mathbb R_+)$, defined on the tangent bundle of $\mathscr M$, and an electro-magnetic field tensor $F\in\Lambda^2(T \mathscr M)$ such that the EVM-system,
\begin{align}
G_{\mu\nu} &= \frac{8\pi}{c^4} \left(T_{\mu\nu} + \tau_{\mu\nu}\right), \label{eq_einstein} \\
T_{\mu\nu} &= g_{\mu\alpha} g_{\nu\beta} \frac{c}{m_p} \int_{\mathscr P_x} f(x,p) p^\alpha p^\beta \, \mathrm{dvol}_{\mathscr P_x}, \label{eq_em_tensor} \\
\tau_{\mu\nu} &= \frac{1}{4\pi} \left(-\frac 1 4 g_{\mu\nu} F_{\alpha\beta} F^{\alpha\beta} + F_{\nu\alpha} F_{\mu}^{\;\;\alpha} \right), \label{el_em_tensor} \\
\mathfrak T (f) &=0, \label{eq_vlasov} \\
\mathrm dF &= 0, \label{maxwell_eq_1} \\
\nabla_\alpha F^{\alpha\beta} &= - 4\pi qJ^\beta,\quad J^\beta =\frac 1 c \int_{\mathscr P_x} f(x,p)p^\beta \mathrm{dvol}_{\mathscr P_x}, \label{maxwell_eq_2} 
\end{align}
is satisfied. Here $G_{\mu\nu}$ is the Einstein tensor and we choose units such that $G=1$ ($G$ is the gravitational constant) but we leave $c$ as parameter in the system. \par
We give a brief explanation of the involved quantities, consult however e.g.~\cite{sz14} for a more detailed introduction to the EVM-system. The particle distribution function $f=f(x,p)$ describes the particle number density at a certain point in $x \in \mathscr M$ with a certain four-momentum $p \in T_x\mathscr M$. The particle number can be obtained via integration. The quantity $m_p$, defined by the relation
\begin{equation}  \label{m_s_r}
g_{\mu\nu}(x) p^\mu p^\nu = -c^2 m_p^2, \qquad x \in \mathscr M, p \in T_x\mathscr M
\end{equation}
is interpreted as the particles' rest mass. It can be shown that it stays constant along the characteristic curves of the Vlasov equation (\ref{eq_vlasov}). Consequently the particle distribution function $f$ describing an ensemble of particles where all particles have the same rest mass $m_p$ can be assumed to be supported on the mass shell $\mathscr P_{m_p}$, a seven dimensional submanifold of $T\mathscr M$ which is defined to be
\begin{equation}
\mathscr P_{m_p} = \{(x,p)\in T\mathscr M \,:\, g_{\mu\nu}(x)p^\mu p^\nu = -c^2 m_p^2, \; p\,\mathrm{is\,future\,pointing}\}.
\end{equation}
In the remainder of this article we assume $m_p=1$ for all particles, and we denote the corresponding mass shell simply by $\mathscr P$. The volume form $\mathrm{dvol}_{\mathscr P_x}$ on the mass shell fibre $\mathscr P_x$ over $x\in \mathscr M$ is given by
\begin{equation}
\mathrm{dvol}_{\mathscr P_x} = \frac{\sqrt{|\det(g_{\mu\nu}(x))|}}{-p_0}\, \mathrm d p^1 \wedge \mathrm dp^2 \wedge \mathrm dp^3,
\end{equation}
and the transport operator $\mathfrak T$ is given by
\begin{equation} \label{def_transport_op}
\mathfrak T = p^\mu \partial_\mu + \left(q F^\gamma{}_{\mu} \, p^\mu - \Gamma^\gamma_{\alpha\beta} p^\alpha p^\beta\right) \partial_{p^\gamma}.
\end{equation}
It is tangent to any mass shell $\mathscr P$ \cite{sz14}. \par
Assume that we have a solution $(g,f,F)$ of the EVM system and that on $\mathscr M$ we have coordinates $t$, $x^1$, $x^2$, $x^3$, where $t$ is the time coordinate. Assume further that $\partial_t$ is a Killing field. Then the solution is asymptotically flat if the boundary conditions
\begin{equation} \label{bc_as_flat}
\lim_{|x|\to \infty} g = \eta, \quad \lim_{|x|\to \infty} f = 0, \quad \lim_{|x|\to \infty} F = 0
\end{equation}
are satisfied, where $\eta$ denotes the Minkowski metric.

\section{The result} \label{sect_result}

In this article we prove the following result.

\begin{theorem} \label{main_theorem}
There exist asymptotically flat, stationary solutions $(g, f, F) \in (T^*\mathscr M \otimes T^*\mathscr M) \times C_c^1 (\mathscr P; \mathbb R_+) \times \Lambda^2(\mathscr M)$ of the EVM-system (\ref{eq_einstein})--(\ref{maxwell_eq_2}) with particle charge parameters $q\in [0,1)$, which are axially symmetric but not spherically symmetric. Such a solution has no toroidal magnetic field and it has a poloidal magnetic field if and only if the solution is not static, i.e.~rotating.
\end{theorem}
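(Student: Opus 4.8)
The plan is to follow the scheme of \cite{akr14} adapted to the charged setting, proving the theorem by an application of the implicit function theorem on Banach spaces. First I would introduce coordinates adapted to stationarity and axial symmetry and write down the reduced EVM-system: the metric is parametrised by a handful of scalar potentials (a lapse-type function $\nu$, a function $h$ appearing in \eqref{intro_poisson_a}, a frame-dragging potential, and a conformal factor), while the electro-magnetic field is described by the electric potential and the $\varphi$-component $A_\varphi = \varrho^2 a$ of the vector potential. One checks that the Vlasov equation is solved by any function $f$ of the conserved quantities --- the particle energy $E$ and angular momentum $L$ built from the Killing fields, now including the charge contributions $qA_t$ and $qA_\varphi$ --- and makes an ansatz $f = \Phi(E,L)$ with prescribed cut-off behaviour so that $f\in C^1_c(\mathscr P)$. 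Substituting this ansatz turns the Einstein and Maxwell equations into a closed system of semilinear Poisson-type equations \eqref{intro_poisson_a} in various dimensions (the dimension being chosen so that the axially symmetric Laplacian becomes a radial Laplacian on $\mathbb R^n$) together with a first-order transport equation, whose unknowns form the tuple $\zeta$.

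Next I would set up the Banach space $\mathcal X$ in which $\zeta$ lives. As flagged after \eqref{intro_poisson_a}, the subtlety is the loss of a derivative caused by dividing by $\varrho$ in the source of the $a$-equation and in the analogous frame-dragging equation; so $\mathcal X$ must be a product of weighted Hölder (or Sobolev) spaces with a built-in \emph{hierarchy of regularity}, demanding that $\nu$ and $h$ lie in a higher-regularity space than $a$. One verifies that the elliptic operators (Laplacians on $\mathbb R^n$ with appropriate asymptotic/decay conditions encoding asymptotic flatness) are isomorphisms onto suitable target spaces, and that the nonlinear source terms --- now considerably more numerous because of the Maxwell sector and the charge-dependent terms in $E$, $L$ and in the energy-momentum tensor $\tau_{\mu\nu}$ --- define a $C^1$ map between these spaces. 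This is where most of the work lies: one must check term by term, using the regularity hierarchy and the symmetry-induced cancellations of the apparent $1/\varrho$ singularities, that every term maps $\mathcal X$ (with the rotation parameter $\lambda$ ranging over $[-\delta,\delta]$) continuously differentiably into the target. Then I would define $\mathfrak F:\mathcal X\times[-\delta,\delta]\to\mathcal X$ (or into the target space composed with the inverse elliptic operators) so that $\mathfrak F(\zeta,\lambda)=0$ is equivalent to $\zeta$ solving the reduced system with rotation parameter $\lambda$.

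The base point is $\lambda = 0$: here the frame-dragging potential and $a$ vanish, the configuration is the static, spherically symmetric solution obtained by perturbing off a Vlasov-Poisson steady state (as in the second approach announced in the introduction, which handles all $q\in[0,1)$ by perturbing around a \emph{charged} Vlasov-Poisson solution), and $\mathfrak F(\zeta_0,0)=0$. The decisive analytic step is to show that $D_\zeta\mathfrak F(\zeta_0,0)$ is an isomorphism of $\mathcal X$; this reduces, after the elliptic operators have been inverted, to showing that a certain linear operator of the form $\mathrm{Id} + (\text{compact})$ is injective, the injectivity coming from the same variational/monotonicity structure used in \cite{akr14}, now with the extra electrostatic terms, which are controlled precisely by the constraint $q<1$ (equivalently $q<m_p$) so that charge does not overwhelm gravity. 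The implicit function theorem then yields, for each $\lambda$ in a possibly smaller interval $(-\delta,\delta)$, a solution $\zeta_\lambda\in\mathcal X$ depending continuously (indeed $C^1$) on $\lambda$, hence a genuine solution $(g,f,F)$ of the full EVM-system. I expect the main obstacle to be exactly the verification that $\mathfrak F$ is well defined and $C^1$ on the hierarchical space $\mathcal X$ in the presence of the enlarged Maxwell sector --- i.e. taming the new $1/\varrho$ terms such as those in \eqref{intro_poisson_a} --- with a secondary obstacle in checking invertibility of the linearisation once the electrostatic contributions are included.

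Finally, the qualitative claims about the magnetic field follow from the structure of the ansatz rather than from the perturbation argument. Axial symmetry and stationarity force the current $J^\beta$ to have only $t$- and $\varphi$-components, so the source of any toroidal magnetic-field potential vanishes identically; combined with the boundary condition $F\to 0$ at infinity and a maximum-principle/uniqueness argument for the corresponding homogeneous elliptic equation, this rules out a toroidal magnetic field in this setting. For the poloidal field: it is governed by $a$ (equivalently $A_\varphi$), whose equation \eqref{intro_poisson_a} has a source proportional to the $\varphi$-current $J^\varphi$, which in turn is nonzero exactly when the particle distribution is not invariant under $L\mapsto -L$, i.e. when the configuration genuinely rotates; in the static case $J^\varphi=0$, the unique decaying solution is $a\equiv0$, and there is no magnetic field at all (only the electric field sourced by the nonzero charge density), while for $\lambda\neq0$ one reads off $a\not\equiv0$ from the implicit-function construction, giving a nontrivial poloidal magnetic field. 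Assembling these pieces establishes Theorem \ref{main_theorem}.
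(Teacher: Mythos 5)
Your overall strategy coincides with the paper's: reduce to a semilinear Poisson-type system for a tuple $\zeta$ of potentials, build a hierarchical H\"older space $\mathcal X$ to absorb the $1/\varrho$ terms, define a solution operator $\mathfrak F$ whose zeros are solutions, and apply the implicit function theorem, reading off the magnetic-field claims from the structure of the reduced equations. There is, however, one structural point where your setup, taken literally, has a gap. You define $\mathfrak F:\mathcal X\times[-\delta,\delta]\to\mathcal X$ with the single parameter $\lambda$ and take as base point at $\lambda=0$ a static, spherically symmetric solution of the \emph{relativistic} charged system (which you say is ``obtained by perturbing off a Vlasov--Poisson steady state''). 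The decisive step --- invertibility of $D_\zeta\mathfrak F$ at the base point --- is then required at a genuinely relativistic configuration, whereas the injectivity argument you invoke from \cite{akr14} (reduction to identity plus compact, together with a sign condition on the linearised potential equation) is only available at the \emph{Newtonian} solution: condition (3) on the ansatz, $6+4\pi(1-q^2)r^2\alpha_N(r)>0$, is a condition on the Vlasov--Poisson steady state $U_N=\nu_N+qA_N$, and no analogue at a relativistic base point is supplied. The paper avoids this by making $\gamma=c^{-2}$ a second perturbation parameter: $\mathfrak F$ is defined on $\mathcal X\times[0,1)\times(-1,1)$, the linearisation is computed at $(\zeta_0;0,0)$ with $\zeta_0=(\nu_N,0,0,0,A_N,0)$ the Newtonian solution of (\ref{newt_1})--(\ref{newt_2}), and the solutions obtained for small $\gamma>0$ are converted into solutions with $c=1$ by the scaling law of Lemma \ref{lem_scaling}. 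Your proposal needs this two-parameter structure (or an equivalent device) to make the invertibility step legitimate; as written, the base point and the place where the linearisation is controlled do not coincide.

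Two smaller remarks on the electro-magnetic characterisation. The toroidal equations (\ref{tor_maxwell_1})--(\ref{tor_maxwell_2}) for $\beta=A_{\varrho,z}-A_{z,\varrho}$ are first order, not elliptic; the paper integrates them explicitly to $\beta=Ce^{-(\ln(\varrho(1+h))+2(\gamma\nu-\xi))}$ and kills $C$ by regularity at the axis $\varrho=0$. Your decay-at-infinity argument also works (one computes $\mathscr B_\varphi=2cC$ identically, which is incompatible with $F\to0$ unless $C=0$), but the ``maximum principle for the homogeneous elliptic equation'' is not the right mechanism. For the poloidal field, the clean statement is the mutual coupling of $\omega$ and $a$ in (\ref{eq_omega}) and (\ref{final_eq_a}): neither can vanish identically without the other, which is exactly what yields ``poloidal field iff rotating''; the assertion that $\lambda\neq0$ forces $a\not\equiv0$ is not something the implicit function theorem hands you directly.
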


\begin{proof}
The proof which is given at this place is rather an outline of the poof, the technical details are given in the subsequent sections. The proof follows the same structure as in \cite{akr14} where the existence of stationary, rotating, axially symmetric solutions is proved for uncharged particles. Each step is however a bit more involved and some arguments have to be formulated differently due to the additional Maxwell equations. We comment on the modifications in the respective sections. \par
{\it Step 1: Elimination of the Vlasov equation.} For the particle distribution function we use the ansatz $f(x,p) = \phi(E(x,p))\psi(\lambda, L(x,p))$, see (\ref{ansatz_f}) below. So the particle distribution depends only on the particle energy $E(x,p)$ and the $z$-component of the angular momentum $L(x,p)$, see the definitions (\ref{formula_e}) and (\ref{formula_l}) below. Since the quantities $E$ and $L$ are conserved along its characteristics the Vlasov equation is automatically satisfied for such an ansatz, cf.~Section \ref{sect_char} below. Furthermore, we introduce a parameter $\lambda$ which ``turns on'' the dependency of $f$ on $L$. This means that if $\lambda = 0$ then $\psi \equiv 1$, i.e.~for each value of the $z$-component of the angular momentum there are equally many particles. \par
{\it Step 2: Reduction of the remaining system.} First we express the EVM-system (\ref{eq_einstein})--(\ref{maxwell_eq_2}) in cylindrical coordinates. The assumptions that the solution is asymptotically flat, axially symmetric, and time independent yield simplifications of the system of equations. We call this simplified system the {\em reduced EVM-system}, cf.~Definition \ref{def_red_evm} below, and it is stated in Section \ref{sect_red_sys}, equations (\ref{final_eq_nu})--(\ref{bc_center}), below, where any value of $c\in (0,\infty)$ is admitted. The solution of the reduced EVM-system is determined by the collection $\zeta = (\nu, h, \xi, \omega, A_t, a) \in \mathcal X$ of six functions, defined in a suitably chosen function space $\mathcal X$ (defined in Section \ref{sect_function_space} below). Proposition \ref{prop_equivalent} below states that a solution of the reduced EVM-system with any parameter $c$ can be converted into an axially symmetric, stationary solution of the EVM-system with the parameter $c=1$. \par
{\it Step 3: Introduction of the solution operator $\mathfrak F$.} A solution of the reduced EVM-system with parameters $\gamma:= c^{-2}, \lambda \in  [0,1) \times (-1,1)$ is then obtained as perturbation of a spherically symmetric solution of the Vlasov-Poisson system. This spherically symmetric solution of the Vlasov-Poisson system we denote by $\zeta_0\in\mathcal X$. \par
To this end in Section \ref{sect_def_f} an operator $\mathfrak F: \mathcal X \times [0,1) \times (-1,1) \to \mathcal X$ with the following properties is defined. Firstly, a collection of functions $\zeta\in\mathcal X$ is a solution of the reduced EVM-system with parameters $\gamma$, $\lambda$ if and only if $\mathfrak F[\zeta; \gamma, \lambda] = 0$. (The ``if''-direction is essential.) Secondly, $\mathfrak F[\zeta_0; 0, 0] = 0$. In Section \ref{sect_well_defined} we show that this operator is well defined. The mentioned properties are shown in Proposition \ref{prop_consistent} and Lemma \ref{lem_n_zero} below. \par
{\it Step 4: Application of the implicit function theorem.} The aim is to apply the implicit function theorem on Banach spaces, cf.~for example \cite[Theorem 15.1]{d85}. This theorem implies the existence of $\delta>0$ such that there exists a mapping $\mathfrak Z : [0,\delta) \times (-\delta, \delta) \to \mathcal X$ such that for all $(\gamma,\lambda)\in [0,\delta)\times (-\delta,\delta)$ we have 
\begin{equation}
\mathfrak F(\mathfrak Z(\gamma,\lambda); \gamma, \lambda) = 0,
\end{equation}
i.e.~$\mathfrak Z(\gamma,\lambda)$ is a solution of the reduced EVM-system with parameters $\gamma, \lambda$. This solution $\mathfrak Z(\gamma,\lambda)$ then gives rise to a solution of the EVM-system with the asserted properties, by Proposition \ref{prop_equivalent}. \par
The implicit function theorem can be applied in this way if the operator $\mathfrak F$ is continuous at $(\zeta_0; 0, 0)$, if its Fr\'echet derivative $\mathfrak L := D\mathfrak F[\zeta_0;0,0]: \mathcal X \to \mathcal X$ at the point $(\zeta_0; 0, 0)\in \mathcal X \times [0,\delta) \times (-\delta,\delta)$ exists and is continuous, and if this Fr\'echet derivative $\mathfrak L$ is a bijection. These properties are established in Section \ref{sect_frechet_dir}. Proposition \ref{prop_appli} below contains the details how it is made sure that the boundary conditions for an asymptotically flat solutions are satisfied. \par
{\it Step 5: Characterisation of the electro-magnetic field.} The assertion that the solution comprises a poloidal magnetic field if and only if the solution is rotating follows from the structure of the reduced EVM-system, see Remark \ref{rem_mag_rot}. For the assertion that there is no toroidal magnetic field, see Lemma \ref{lem_no_tor}. \par
\end{proof}

\section{Axial symmetry} \label{sect_axial}

Let $x^i$, $i=1,\dots,n$ be coordinates on $\mathbb R^n$. A function $f:\mathbb R^n \to \mathbb R$ is axially symmetric around the $x^n$-axis if and only if there exists a function $\hat f: [0,\infty) \times \mathbb R \to \mathbb R$ such that 
\begin{equation}
f\left(x^1,\dots,x^n\right) = \hat f\left(\varrho(x^1, \dots, x^{n-1}), x^n\right), 
\end{equation}
where
\begin{equation} \label{def_rho}
\varrho(x^1, \dots, x^{n-1}) := \sqrt{\left(x^1\right)^2 + \dots + \left(x^{n-1}\right)^2}.
\end{equation}
By abuse of notation, we will use the same symbol for the original function on $\mathbb R^2$, $\hat f$ in this example, and the induced axially symmetric functions $f$ on $\mathbb R^n$ for different dimensions $n$.

\begin{rem}\label{rem_even}
At some places in the analysis presented in this article it will be useful to view an axially symmetric function $f:\mathbb R^n \to \mathbb R$ as a function in $\varrho$ and $z$ defined on $\mathbb R^2$, by extending it as even function to negative values of $\varrho$. The obtained function on $\mathbb R^2$ then has the same regularity as the axially symmetric function on $\mathbb R^n$.
\end{rem}

We now introduce a coordinate gauge and the functions in terms of which we will formulate the reduced EVM-system. Consider the four dimensional manifold $\mathscr M$ which is assumed to be homeomorphic to $\mathbb R^4$ and which is equipped with the cylindrical coordinates $t$, $\varrho$, $z$, $\varphi$. A stationary Lorentzian metric is characterised by the four time independent, axially symmetric functions $\nu,\mu,\omega:\mathscr M \to \mathbb R$ and $H: \mathscr M \to \mathbb R_+$. It can be written in the form
\begin{equation} \label{ansatz_metric}
g = -c^2 e^{\frac{2\nu(\varrho, z)}{c^2}}\mathrm dt^2 + e^{2\mu(\varrho, z)}\mathrm d\varrho^2 + e^{2\mu(\varrho, z)}\mathrm dz^2 + \varrho^2H(\varrho, z)^2 e^{-\frac{2\nu(\varrho, z)}{c^2}} \left(\mathrm d\varphi - \omega(\varrho, z) \mathrm dt\right)^2,
\end{equation}
cf.~\cite{b72} for details. \par
The electro-magnetic field tensor $F$ is given as the exterior derivative of the electro-magnetic four potential $A\in\Lambda^1(\mathscr M)$, i.e.~$F = dA$. With respect to the coordinate co-basis of $t$, $\varrho$, $z$, $\varphi$ the electro-magnetic potential $A$ takes the form 
\begin{equation}
A = A_t \mathrm dt + A_\varrho \mathrm d\varrho + A_\varphi \mathrm d\varphi + A_z \mathrm dz.
\end{equation}
We assume that all components are time independent and axially symmetric. \par
In terms of the electro-magnetic field tensor $F$ the electric field $E\in \Lambda^1(\mathscr M)$ and the magnetic field $\mathscr B \in \Lambda^1(\mathscr M)$ are defined as follows. The electric field $E$ is defined by the splitting $F = E \wedge \mathrm dt + B$, where the two form $B$ includes no term with $\mathrm dt$. The magnetic field is defined by the splitting $\star F = \mathscr E - \mathscr B \wedge \mathrm dt$, where $\star : \Lambda^2(\mathscr M) \to \Lambda^2(\mathscr M)$ is the Hodge star operator and $\mathscr E$ is a two-form with no $\mathrm dt$-term. Cf.~\cite{f12} for details. Define $\beta := \partial_z A_{\varrho} - \partial_\varrho A_{z}$. Then a calculation yields that the toroidal magnetic field component $\mathscr B_\varphi$ takes the form
\begin{equation} \label{tor_mag}
\mathscr B_\varphi = 2c e^{-2\mu} \varrho H \beta,
\end{equation}
and the poloidal magnetic field components, $\mathscr B_\varrho$ and $\mathscr B_z$, contain only the $t$- and the $\varphi$-component of $A$. In fact a calculation yields
\begin{align}
\mathscr B_\varrho &=- \frac{2 e^{- 2\nu/c^2}}{c \varrho H} \left( c^2 e^{4\nu/c^2} A_{\varphi,z} - \varrho^2 H^2 \omega (A_{t,z} + \omega A_{\varphi,z}) \right), \\
\mathscr B_z &= \frac{2 e^{- 2\nu/c^2}}{c \varrho H} \left( c^2 e^{4\nu/c^2} A_{\varphi,\varrho} - \varrho^2 H^2 \omega (A_{t,\varrho} + \omega A_{\varphi,\varrho}) \right).
\end{align}
\par
Next we introduce the parameter $\gamma = \frac{1}{c^2}$ and the orthonormal frame $e_a = e_a{}^\alpha \partial_\alpha$, $\alpha = t,\varrho, z, \varphi$, where the non-trivial matrix elements are
\begin{equation} \label{frame_matrix}
e_0{}^t = e^{-\gamma \nu}, \quad e_0{}^\varphi = e^{-\gamma \nu} \omega, \quad e_1{}^\varrho = e^{-\mu}, \quad e_2{}^z = e^{-\mu}, \quad e_3{}^\varphi = \frac{e^{\gamma \nu}}{\varrho H}.
\end{equation}
The corresponding co-frame reads $\alpha^a = e^a{}_\alpha \mathrm dx^\alpha$, where $(e^a{}_\alpha) = (e_a{}^\alpha)^{-1}$ (the inverse matrix), 
and via the relation $p^\mu \partial_\mu = v^\mu e_\mu$ this frame introduces the new momentum variables $v^0, v^1, v^2, v^3$, given by
\begin{equation} \label{def_frame}
v^0 = e^{\gamma\nu} p^t, \quad v^1 = e^\mu p^\varrho, \quad v^2 = e^\mu p^z, \quad v^3 = \varrho H e^{-\gamma\nu} \left(p^\varphi - \omega p^t\right).
\end{equation} 
In the remainder of this article we work with the coordinates 
\begin{equation} \label{the_coords}
t \in \mathbb R, \quad \varrho \in [0, \infty), \quad \varphi \in [0,2\pi), \quad z \in \mathbb R, \quad (v^0, v^1, v^2, v^3) \in \mathbb R^4
\end{equation}
on the tangent bundle $T\mathscr M$. In these frame coordinates the mass shell relation (\ref{m_s_r}) becomes
\begin{equation}
-c^2 = -c^2 \left(v^0\right)^2 + \left(v^1\right)^2 + \left(v^2\right)^2 + \left(v^3\right)^2
\end{equation}
and on $\mathscr P$ we consequently have
\begin{equation} \label{mass_shell_frame}
v^0 = \sqrt{1+\gamma |v|^2}, \quad \mathrm{where}\, |v| = \sqrt{\left(v^1\right)^2 + \left(v^2\right)^2 + \left(v^3\right)^2}.
\end{equation}

\section{The method of characteristics} \label{sect_char}

The Vlasov equation (\ref{eq_vlasov}) can be dealt with by the method of characteristics which is now described. 

\begin{lemma} \label{lem_conserved_quan}
The quantities $E$ and $L$, defined on the tangent bundle $T \mathscr M$, by
\begin{align}
L &:= \varrho H e^{ -\gamma\nu} v^3 - q A_\varphi, \label{formula_l} \\
E &:= \frac{e^{\gamma\nu} v^0 - 1}{\gamma} + \omega \varrho H e^{ -\gamma\nu} v^3 + q A_t, \label{formula_e}
\end{align}
are conserved along the characteristic curves of the Vlasov equation, i.e.
\begin{equation}
\mathfrak TE = 0, \quad \mathfrak TL = 0.
\end{equation}
\end{lemma}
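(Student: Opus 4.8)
The plan is to verify the two conservation laws $\mathfrak T E = 0$ and $\mathfrak T L = 0$ by direct computation, exploiting the structure of the stationary, axially symmetric setup. First I would rewrite the transport operator $\mathfrak T = p^\mu \partial_\mu + (q F^\gamma{}_\mu p^\mu - \Gamma^\gamma_{\alpha\beta} p^\alpha p^\beta)\partial_{p^\gamma}$ in terms of the frame momentum variables $v^0, v^1, v^2, v^3$ introduced in~(\ref{def_frame}). Since $E$ and $L$ are expressed in the $v$-variables and in $\varrho, z$ only (they are $t$- and $\varphi$-independent by stationarity and axial symmetry, and $A_t, A_\varphi$ likewise depend only on $\varrho, z$), the action of $\mathfrak T$ on them reduces to the derivatives with respect to $\varrho$, $z$ and the $v^a$. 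The key structural input is that $\partial_t$ and $\partial_\varphi$ are Killing fields for the metric~(\ref{ansatz_metric}), and that $\mathcal L_{\partial_t} F = \mathcal L_{\partial_\varphi} F = 0$ for the electro-magnetic field.

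Second, I would recall (or re-derive) the standard fact that for a Killing vector field $\mathbf K$ and a charged particle trajectory, the quantity $g(\mathbf K, p) + q\, \iota_{\mathbf K} A$ is conserved along the flow of $\mathfrak T$, \emph{provided} $A$ is chosen in a gauge adapted to the symmetry, i.e.\ $\mathcal L_{\mathbf K} A = 0$. Concretely: for geodesic motion, $\mathfrak T\big(g_{\mu\nu} K^\mu p^\nu\big) = 0$ follows from Killing's equation $\nabla_{(\mu} K_{\nu)} = 0$ together with the geodesic part of $\mathfrak T$; the Lorentz-force correction $q F^\gamma{}_\mu p^\mu \partial_{p^\gamma}$ acting on $g_{\mu\nu} K^\mu p^\nu$ produces $q F_{\gamma\mu} K^\gamma p^\mu$, and one checks that $\mathfrak T\big(q\, K^\alpha A_\alpha\big)$ produces the canceling term using $dF = 0$ (equivalently $F_{[\alpha\beta,\gamma]} = 0$) and $\mathcal L_{\mathbf K} A = 0$. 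Applying this with $\mathbf K = \partial_\varphi$ gives $g(\partial_\varphi, p) + q A_\varphi$ conserved, and with $\mathbf K = \partial_t$ gives $g(\partial_t, p) + q A_t$ conserved.

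Third, the remaining work is purely algebraic: I would compute $g(\partial_\varphi, p)$ and $g(\partial_t, p)$ from the metric~(\ref{ansatz_metric}) and substitute the frame relations~(\ref{def_frame}) to identify
\[
g(\partial_\varphi, p) = \varrho^2 H^2 e^{-2\gamma\nu}(p^\varphi - \omega p^t) = \varrho H e^{-\gamma\nu} v^3,
\]
so that $g(\partial_\varphi, p) + q A_\varphi$ agrees with $-L$ up to sign (matching the sign convention in~(\ref{formula_l})); and similarly $g(\partial_t, p) = -c^2 e^{2\gamma\nu} p^t - \omega\, g(\partial_\varphi, p)/(\text{coefficient})$, which after using $v^0 = e^{\gamma\nu}p^t$ and the mass-shell relation rearranges into the combination $-\big(\tfrac{e^{\gamma\nu} v^0 - 1}{\gamma} + \omega \varrho H e^{-\gamma\nu} v^3\big)$, so that $g(\partial_t, p) + q A_t$ matches $-E$ up to the additive constant $1/\gamma$ (which is annihilated by $\mathfrak T$ anyway). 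The shift by $1$ and division by $\gamma$ in~(\ref{formula_e}) is just a convenient renormalization chosen so that $E$ has a finite Newtonian limit as $\gamma \to 0$; it does not affect the conservation.

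I expect the main obstacle to be bookkeeping rather than conceptual: one must be careful that the gauge condition $\mathcal L_{\partial_t} A = \mathcal L_{\partial_\varphi} A = 0$ is actually in force (it is, since all components $A_t, A_\varrho, A_z, A_\varphi$ were assumed time-independent and axially symmetric), and one must handle the passage between coordinate momenta $p^\mu$ and frame momenta $v^a$ with care, since $\mathfrak T$ written in $v$-variables picks up extra connection-like terms from differentiating the frame matrix~(\ref{frame_matrix}). The cleanest route is to do the whole computation in the $p^\mu$ coordinates where the Killing-conservation argument is manifestly clean, and only at the very end convert the conserved expressions to the $v$-form using~(\ref{def_frame}); then no derivatives of the frame ever appear. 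Alternatively one can cite the general charged-particle Killing conservation result directly and reduce the lemma to the final algebraic identification, which is the presentation I would favor for brevity.
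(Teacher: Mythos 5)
Your route is genuinely different from the paper's. The paper proves the lemma by brute force: it transforms the transport operator $\mathfrak T$ into the orthonormal frame variables $v^a$ (deriving the transformation laws for the Christoffel symbols and for $\partial_{x^\mu}$, $\partial_{p^\nu}$ along the way), writes out $\mathfrak T$ explicitly as a long expression in the $\Omega^V_{ij}$ operators, and then applies it term by term to $E$ and $L$. You instead invoke the canonical-momentum conservation law for charged particles associated with a Killing field $\mathbf K$ satisfying $\mathcal L_{\mathbf K}A=0$, work entirely in the coordinate momenta $p^\mu$ where this argument is clean, and reduce the lemma to the algebraic identification of $E$ and $L$ with $\mp g(\partial_t,p)$, $g(\partial_\varphi,p)$ plus electromagnetic corrections. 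This is exactly the structure the paper itself points out \emph{after} the lemma (the remark introducing $\tilde E$ and $\tilde L$), but does not use as the proof. Your approach is shorter and conceptually cleaner; the paper's computation has the side benefit of producing the explicit frame form of $\mathfrak T$, which is recorded in the appendix.

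One point is treated too casually and needs to be nailed down: the sign of the $qA$ correction. You assert that $g(\partial_\varphi,p)+qA_\varphi$ is conserved and that this "agrees with $-L$ up to sign", but the relative sign between the kinetic part and the $qA_\varphi$ part is not a free convention --- it is fixed by the definition (\ref{def_transport_op}) of $\mathfrak T$ together with $F=\mathrm dA$, and the paper's $L$ in (\ref{formula_l}) is $g(\partial_\varphi,p)-qA_\varphi$, which is neither $g(\partial_\varphi,p)+qA_\varphi$ nor its negative. Carrying out your own second step literally with $F^\gamma{}_\mu=g^{\gamma\delta}F_{\delta\mu}$ gives $\mathfrak T(p_\varphi)=qF_{\varphi\mu}p^\mu=-q\,p^\mu\partial_\mu A_\varphi$, hence conservation of $p_\varphi+qA_\varphi$; only with the opposite index-raising convention $F^\gamma{}_\mu=g^{\gamma\delta}F_{\mu\delta}$ does one obtain $p_\varphi-qA_\varphi$, matching (\ref{formula_l}) and, consistently, $-p_t+qA_t$ matching the electromagnetic term in (\ref{formula_e}). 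So before citing the general Killing conservation result you must fix this convention and check that both signs come out agreeing with (\ref{formula_l})--(\ref{formula_e}) exactly, not "up to sign". The rest of your algebraic identification (reading off $g(\partial_\varphi,p)=\varrho He^{-\gamma\nu}v^3$ and $-g(\partial_t,p)=\tfrac{1}{\gamma}e^{\gamma\nu}v^0+\omega\varrho He^{-\gamma\nu}v^3$ from the metric ansatz and (\ref{def_frame}), and absorbing the constant $1/\gamma$) is correct, and no mass-shell relation is actually needed there.
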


\begin{proof}
The assertion of this lemma can be shown via a direct calculation and it is moved to the appendix.
\end{proof}

\begin{rem}
Unlike the uncharged case, in the charged case the characteristic curves of the Vlasov equations are not the lifts of the geodesics to $T\mathscr M$. Consequently the conserved quantities cannot be obtained by $g(X, p)$, where $X$ is a Killing vector field and $p$ is the canonical momentum. However, this structure can still be recognised in the present case. If we define 
\begin{align}
\tilde E &:= -g(\partial_t, p), \label{tilde_e} \\
\tilde L &:= g(\partial_\varphi, p), \label{tilde_l}
\end{align}
it turns out that the quantities $E$ and $L$ can be obtained from $\tilde E$ and $\tilde L$ by taking into account a suitable correction due to the electro-magnetic field. We have
\begin{equation}
E = \tilde E - \frac{1}{\gamma} + qA_t, \qquad L = \tilde L - qA_\varphi.
\end{equation}
\end{rem}

\begin{cor} \label{cor_char}
Every function $f:\mathscr P \to \mathbb R_+$ which can be expressed as
\begin{equation} \label{product_strucutre}
f(t,\varrho,\varphi,z,v^0,v^1,v^2,v^3) = \phi(E) \tilde \psi(L)
\end{equation}
with some functions $\phi, \tilde \psi \in C^1(\mathbb R ; \mathbb R_+)$, solves the Vlasov equation (\ref{eq_vlasov}) and is axially symmetric and time independent.
\end{cor}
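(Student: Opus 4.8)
The plan is to reduce the whole statement to Lemma~\ref{lem_conserved_quan} together with the elementary observation that the transport operator $\mathfrak T$ defined in (\ref{def_transport_op}) is a first-order differential operator on $T\mathscr M$, i.e.\ a vector field, and hence a derivation obeying the chain rule. First I would record that, since $f$ is of the product form (\ref{product_strucutre}) with $\phi,\tilde\psi\in C^1(\mathbb R;\mathbb R_+)$, and since $E$ and $L$ are $C^1$ functions on $\mathscr P$ — they are built in (\ref{formula_l})--(\ref{formula_e}) from the metric functions $\nu,H,\omega$, the potential components $A_t,A_\varphi$, and the momentum coordinates $v^0,v^3$, all of which are at least $C^1$ in the chosen gauge — the composition $f=\phi(E)\tilde\psi(L)$ is a $C^1$ map $\mathscr P\to\mathbb R_+$, positivity of the target being immediate from positivity of $\phi$ and $\tilde\psi$.

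Next I would apply $\mathfrak T$ and use that it is a derivation, which gives
\begin{equation*}
\mathfrak T f = \phi'(E)\,(\mathfrak T E)\,\tilde\psi(L) + \phi(E)\,\tilde\psi'(L)\,(\mathfrak T L).
\end{equation*}
By Lemma~\ref{lem_conserved_quan} both $\mathfrak T E$ and $\mathfrak T L$ vanish, hence $\mathfrak T f = 0$, which is exactly the Vlasov equation (\ref{eq_vlasov}). Here it is essential that $\mathfrak T$ is tangent to the mass shell $\mathscr P$ (as stated below (\ref{def_transport_op})), so that restricting both the conserved quantities and the operator to $\mathscr P$ is unproblematic; on $\mathscr P$ one may in addition eliminate $v^0$ via (\ref{mass_shell_frame}) without affecting the argument.

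Finally, for the symmetry assertions I would simply inspect the formulas (\ref{formula_l}) and (\ref{formula_e}): neither $E$ nor $L$ contains any explicit dependence on the coordinates $t$ or $\varphi$, because the functions $\nu,\mu,\omega,H$ and the potential components $A_t,A_\varphi$ were assumed time independent and axially symmetric, while the frame momenta $v^0,\dots,v^3$ of (\ref{def_frame}) are the natural fibre coordinates attached to the rotation-adapted orthonormal frame (\ref{frame_matrix}). Thus $E$ and $L$ are time independent and axially symmetric as functions on $T\mathscr M$, and therefore so is any function of $E$ and $L$, in particular $f$.

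There is essentially no serious obstacle in this corollary: it is a direct consequence of Lemma~\ref{lem_conserved_quan} and of the derivation property of $\mathfrak T$. The single point meriting a line of care is precisely that derivation property on the submanifold $\mathscr P$ — one must apply the chain rule to the restriction of $\mathfrak T$ to $\mathscr P$, which is legitimate exactly because $\mathfrak T$ is tangent to $\mathscr P$; all of the genuine computational effort, namely the verification that $\mathfrak T E = \mathfrak T L = 0$, has already been delegated to the appendix proof of Lemma~\ref{lem_conserved_quan}.
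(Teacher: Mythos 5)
Your argument is correct and coincides with the paper's proof: both apply the chain rule (the derivation property of $\mathfrak T$) together with Lemma \ref{lem_conserved_quan} to get $\mathfrak T f=0$, and both deduce the time independence and axial symmetry of $f$ from the corresponding properties of the metric and potential functions entering $E$ and $L$. Your version merely spells out the regularity and tangency-to-$\mathscr P$ points that the paper leaves implicit.
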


\begin{proof}
Since $\mathfrak T E = \mathfrak T L = 0$ we have by the chain rule $\mathfrak Tf = 0$. The remaining asserted properties of $f$ are inherited from the metric functions $\nu$, $\mu$, $H$, and $\omega$.
\end{proof}

A more general statement than Corollary \ref{cor_char} is true, for ansatz functions that do not have the product structure (\ref{product_strucutre}). The corollary is however stated this way because in this article only ansatz functions of the form (\ref{product_strucutre}) are considered. \par
From now on we work with the ansatz
\begin{equation} \label{ansatz_f}
f(x,v) = \phi\left(E \right)\psi(\lambda, L),
\end{equation}
where $E$ and $L$ are the conserved quantities, given in (\ref{formula_e}) and (\ref{formula_l}), respectively, and $\lambda \in [0,1]$ is the parameter which ``turns on'' anisotropy in momentum of the particle distribution. The functions $\phi$ and $\psi$ are assumed to fulfil the assumptions listed below. For an integrable function $U$ and $\phi \in C^1(\mathbb R;\,\mathbb R_+)$, where $\mathrm{supp}(\phi)\subset (-\infty, E_0]$ for some $0\leq E_0 < \infty$, we define
\begin{align}
\rho_U(r) := \int_{\mathbb R_v^3} \phi\left(\frac{|v|^2}{2} + U(r)\right) \,\mathrm dv^1 \mathrm dv^2 \mathrm dv^3,  \label{def_rho_u} \\
\alpha_U(r) :=  \int_{\mathbb R_v^3} \phi'\left(\frac{|v|^2}{2} + U(r)\right) \,\mathrm dv^1 \mathrm dv^2 \mathrm dv^3. \label{def_alpha_u}
\end{align} 
We assume that the functions $\phi$ and $\psi$ in (\ref{ansatz_f}) have the following properties.
\begin{enumerate}
\item $\phi\in C^2(\mathbb R)$ and there exists $E_0 > 0$ such that $\phi(E)=0$ for $E\geq  E_0$ and $\phi(E) > 0$ for $E < E_0$. \label{cond_phi}
\item The ansatz $f(x,v)=\phi\left(\frac 12 |v|^2 + U_N(x)\right), x, v \in\mathbb R^3$, leads to a compactly supported, spherically symmetric steady state $(f_N, U_N)$ of the Vlasov-Poisson system for particles with mass $1-q^2$, i.e., there exists a solution $U_N \in C^2(\mathbb R^3)$, of the equation $\Delta U_N = 4\pi (1-q^2) \rho_{N}(x)$, $U_N(0)=0$, where we used the shorthand $\rho_N := \rho_{U_N}$. This solution is spherically symmetric, $U_N(x) = U_N(|x|)$, and the support of $\rho_N \in C_c^2(\mathbb R^3)$ is the closed ball $\overline B_{R_N}(0)$ where $U_N(R_N)=E_0$ and $U_N(r) < E_0$ for $0\leq r < R_N < \infty$, and $U_N(r) > E_0$ for $r> R_N$.
\item We have $6+ 4\pi (1-q^2) r^2 \alpha_N(r) > 0$ for all $r\in[0,\infty)$.
\item $\psi\in C_c^\infty (\mathbb R^2)$ is compactly supported, $\psi \geq 0$, $\partial_L \psi(\lambda, 0) = 0$ for $\lambda \in \mathbb R$, and $\psi(0, L) = 1$ on an open neighbourhood of the set $\{L = L_N(x,v) \, | \, (x,v) \in \mathrm{supp}(f_N)\}$, where $L_N := \varrho v^3$ is the $z$-component of the Newtonian angular momentum.
\end{enumerate}

\begin{lemma}
There exist ansatz functions $\phi \in C_c^2(\mathbb R)$ and $\psi\in C_c^\infty((-1/2, 1/2) \times \mathbb R)$ satisfying the upper conditions.
\end{lemma}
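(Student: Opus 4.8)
Fix $q\in[0,1)$; throughout, $(s)_+:=\max\{s,0\}$. The plan is to construct $\phi$ and $\psi$ in two separate steps: conditions (1)--(3) concern $\phi$ only, while condition (4) concerns $\psi$ and refers to $\phi$ only through the compact set $\mathrm{supp}(f_N)$, so I would fix an admissible $\phi$ first. The key observation is that everything reduces to the uncharged case $q=0$: conditions (1) and (4) do not involve $q$, and conditions (2), (3) involve it only through the Vlasov--Poisson coupling constant $1-q^2\in(0,1]$ in place of $1$. Indeed, if $(f_N^0,U_N^0)$ is a spherically symmetric, compactly supported steady state for the ansatz $f=\phi(\tfrac12|v|^2+U)$ with coupling $1$, then $U_N(x):=U_N^0(\sqrt{1-q^2}\,x)$ solves $\Delta U_N=4\pi(1-q^2)\rho_{U_N}$ with $U_N(0)=0$, is again spherically symmetric, and has $\rho_{U_N}$ supported in the ball of radius $R_N^0/\sqrt{1-q^2}$; moreover $\alpha_{U_N}(r)=\alpha_{U_N^0}(\sqrt{1-q^2}\,r)$, so the substitution $s=\sqrt{1-q^2}\,r$ gives $4\pi(1-q^2)\,r^2\alpha_{U_N}(r)=4\pi\,s^2\,\alpha_{U_N^0}(s)$, whence condition (3) for coupling $1-q^2$ holds if and only if it does for coupling $1$. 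Thus it suffices to exhibit an admissible $\phi$ in the uncharged situation, which is exactly the setting of \cite{r00, akr11, akr14}.

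For $\phi$ I would take the classical polytropic ansatz $\phi(E)=\kappa\,(E_0-E)_+^{k}$ with constants $\kappa,E_0>0$ and a fixed exponent $2<k<\tfrac72$. Then $\phi\in C^2(\mathbb R)$ (because $k>2$), $\phi>0$ on $(-\infty,E_0)$, and $\phi\equiv0$ on $[E_0,\infty)$, which is condition (1). (If literal compact support of $\phi$ is wanted, multiply it by a smooth cut-off equal to $1$ on a neighbourhood of $[0,E_0]$ and vanishing for $E\le-1$; since the argument $\tfrac12|v|^2+U_N$ of $\phi$ in the steady state never drops below $U_N(0)=0$, this changes neither the steady state nor any quantity used later.) For such $\phi$ the reduced semilinear Poisson equation reads $\Delta U_N=C\,(E_0-U_N)_+^{\,k+3/2}$, $U_N(0)=0$, with a constant $C>0$; the bound $k<\tfrac72$, i.e.\ Lane--Emden index $n=k+\tfrac32<5$, is precisely the threshold under which this equation admits a global, spherically symmetric solution with compact support and positive, bounded density. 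This is the classical existence theory for polytropic Vlasov--Poisson steady states, cf.\ \cite{r00, akr11} and the references therein; it yields condition (2).

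The step I expect to be the main obstacle is condition (3), $6+4\pi(1-q^2)\,r^2\alpha_N(r)>0$ for all $r\geq0$. By the reduction above it is $q$-independent, and it is in fact scaling invariant: differentiating $\rho_U(r)=\int\phi(\tfrac12|v|^2+U(r))\,dv$ gives $\alpha_U=\rho_U'/U'$, and combined with $r^2U_N'(r)=4\pi(1-q^2)\int_0^{r}s^2\rho_N(s)\,ds$ the condition becomes
\begin{equation*}
6+\frac{r^4\,\rho_N'(r)}{\int_0^{r}s^2\,\rho_N(s)\,ds}>0,
\end{equation*}
which for the polytrope depends only on the dimensionless Lane--Emden profile of index $n$. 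It equals $6$ for $r$ outside the support and tends to $6$ as $r\to0$ and as $r\uparrow R_N$ (at the edge because $\rho_N$ vanishes there to order $k+\tfrac32>1$ whereas $\int_0^r s^2\rho_N$ has a finite positive limit), so it only has to be checked on the interior of the support, where it is a finite comparison. I would verify this directly for a suitable admissible exponent; should a direct check for a $C^2$ polytrope prove awkward, I would instead take a polytrope of small exponent --- where decreasing $k$ flattens $\rho_N$ and the inequality holds with room to spare --- and restore $\phi\in C^2$ by a $C^2$-mollification near $E_0$ fine enough to preserve the strict inequality, noting that such a mollification perturbs $\rho_N$ only near $r=R_N$, where the left-hand side is already close to $6$. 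Either way the verification is of the type carried out in \cite{akr11, akr14}. This settles conditions (1)--(3).

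It remains to construct $\psi$. Since $\mathrm{supp}(f_N)$ is compact and $L_N=\varrho v^3$ is continuous, $\ell_0:=\max\{\,|L_N(x,v)| : (x,v)\in\mathrm{supp}(f_N)\,\}$ is finite, the set $\{\,L_N(x,v):(x,v)\in\mathrm{supp}(f_N)\,\}$ lies in $[-\ell_0,\ell_0]$, and $(-\ell_0-1,\ell_0+1)$ is an open neighbourhood of it. I would pick $\chi\in C_c^\infty\big((-1/2,1/2)\big)$ with $0\le\chi\le1$ and $\chi\equiv1$ on a neighbourhood of $0$, and $a\in C_c^\infty(\mathbb R)$ even with $0\le a\le1$ and $a\equiv1$ on $(-\ell_0-1,\ell_0+1)$, and set $\psi(\lambda,L):=\chi(\lambda)\,a(L)$. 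Then $\psi\in C_c^\infty\big((-1/2,1/2)\times\mathbb R\big)$ and $\psi\ge0$; $\partial_L\psi(\lambda,0)=\chi(\lambda)\,a'(0)=0$ because $a$ is even; and $\psi(0,L)=\chi(0)\,a(L)=a(L)\equiv1$ on $(-\ell_0-1,\ell_0+1)$. This is condition (4), and completes the construction; note that here only $\psi$ depends on $q$, through $\ell_0$.
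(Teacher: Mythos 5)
Your overall route coincides with the paper's: polytropic $\phi(E)=(E_0-E)_+^k$ for conditions (1)--(3) and a smooth product cut-off for $\psi$ in condition (4). Two parts of your write-up are actually more explicit than the paper's proof: the scaling reduction $U_N(x)=U_N^0(\sqrt{1-q^2}\,x)$, which cleanly shows that conditions (2) and (3) for coupling $1-q^2$ are equivalent to the uncharged case (the paper instead just replaces $4\pi$ by $4\pi(1-q^2)$ inside the cited proof), and the explicit construction of $\psi$, which the paper omits entirely. Conditions (1), (2) and (4) are fine as you present them.

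The gap is condition (3), which is the only genuinely non-trivial point of the lemma, and your treatment of it does not close. Your primary plan is to ``verify this directly for a suitable admissible exponent,'' which is not carried out, and your fallback heuristic --- take $k$ small because ``decreasing $k$ flattens $\rho_N$ and the inequality holds with room to spare'' --- points in the wrong direction. In Lane--Emden variables ($n=k+\tfrac32$, $E_0-U_N=E_0\theta$, $r=a\xi$) the condition $6+4\pi(1-q^2)r^2\alpha_N>0$ becomes $n\,\xi^2\theta(\xi)^{n-1}<6$ on $[0,\xi_1]$, and this quantity does \emph{not} improve as the exponent decreases: for index $n=1$ it equals $\xi^2$, which reaches $\pi^2>6$ at the boundary of the support, and for $n<1$ the edge behaviour of $\rho_N'$ makes it blow up. The proof in \cite[Lemma 7.1]{akr11}, which the paper invokes, works in the opposite regime: one takes $k$ sufficiently close to $7/2$, i.e.\ $n$ close to $5$, where the profile converges to the explicit Plummer solution $\theta=(1+\xi^2/3)^{-1/2}$ and one computes $\sup_\xi 5\xi^2\theta^4=15/4<6$, with continuity in $n$ giving the strict inequality for nearby exponents. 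You should replace the ``small exponent'' argument by this one (or by the direct citation of \cite[Lemma 7.1]{akr11} together with your scaling reduction, which renders the constant $1-q^2$ irrelevant exactly as the paper asserts).
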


\begin{proof}
Consider the polytropes $\phi(E) = [E_0-E]_+^k$ for $k\in[2,7/2)$. Condition (1) is clearly satisfied. Condition (2) is also satisfied, cf.~\cite{bfh86, rr00}.\par
By the same proof as for \cite[Lemma 7.1]{akr11} it can be shown that the third condition is satisfied for polytropes with exponent $k$ sufficiently close to $7/2$. To this end one uses the equation $\Delta U_N = 4\pi (1-q^2) \rho_N$ instead of $\Delta U_N = 4\pi \rho_N$. Then merely the constant $4\pi$ has to be replaced by $4\pi(1-q^2)$ in the proof of \cite[Lemma 7.1]{akr11}. It is essential that $1-q^2 > 0$, the precise value is however irrelevant for the argument.
\end{proof}

\section{The reduced system of equations} \label{sect_red_sys}

Before the reduced system of equations is presented some notation and shorthands shall be introduced. Partial derivatives $\partial_\varrho \nu$, $\partial_z A_\varphi$, etc.~will be denoted as $\nu_{,\varrho}$, $A_{\varphi,z}$, etc. We define the functions $\xi, h, a$ by the following changes of variables:
\begin{align}
\xi &= \mu + \gamma \nu, \\
H &= 1 + h, \\
A_\varphi &= \varrho^2 a.
\end{align}
Further, we call $(\nu, h, \xi, \omega, A_t, a)$ the {\em solution functions} and in the remainder of this article we will use the shorthand
\begin{equation}
\zeta := (\nu, h, \xi, \omega, A_t, a).
\end{equation}
We do not include the components $A_\varrho$ and $A_z$ of the four-potential $A$ into the solution functions $\zeta$ since it will turn out that in the current setting they must vanish everywhere, cf.~Lemma \ref{lem_no_tor} below. \par
In \cite{akr11, akr14}, where the existence of axially symmetric solutions of the Einstein-Vlasov system with uncharged particles is proven, a reduced system of equations is considered as well. The reduced EVM-system presented below coincides with the reduced system in \cite{akr14} if the charge parameter $q$ is set to zero. When the Maxwell equations are added to the framework not only the number of equations increases but also the number of terms in the Einstein equations increases by a multiple. For this reason, below, we are going to introduce {\em source functions} to collect these terms. This allows to present the reduced system in a compact way and also facilitates the presentation of the subsequent analysis. Moreover, we will introduce {\em matter functions} which basically consist in combinations of components $T_{\mu\nu}$ of the Vlasov part of the energy momentum tensor, as in \cite{akr14}. \par
In the subsequent analysis it will be necessary to show different properties of the matter functions and the source functions, like regularity with respect to the coordinates $\varrho$ and $z$, decay properties, symmetries, or Fr\'echet differentiability with respect to the solution functions $\zeta$. This means that at some occasions the source functions and the matter functions have to be seen as functions of $\varrho$ and $z$ which are parameterised by the solution functions. At other occasions they have to be seen as functions which take both the coordinate $\varrho$ and the solution functions $\zeta$ (and their derivatives) as arguments. Moreover, for the analysis of the matter functions several different integral representations will be necessary. In order to give a clear presentation we deem it favourable to resort to symbolic notation in a larger extent than in \cite{akr14}. \par
Now we define the matter functions. These matter functions depend on the solution functions $\nu$, $h$, $\xi$, $\omega$, $A_t$, $a$. At different places in this article we want to see them either as functions taking the evaluated solution functions as argument ($M_i^{(\gamma,\lambda)}$ below) or as families of functions which are parameterised by the solution functions ($\mathfrak M_i[\zeta; \gamma, \lambda]$ below) and which only depend on $(\varrho,z)$. We define
\begin{align}
M_1^{(\gamma,\lambda)}(\varrho, \zeta) &:= 4\pi e^{2(\xi-\gamma\nu)} \int_{\mathbb R_v^3} \phi(E)\psi(\lambda, L) \frac{1 + 2 \gamma |v|^2}{\sqrt{1+ \gamma |v|^2}} \, \mathrm d^3v, \label{def_hat_m_1}  \\
M_2^{(\gamma,\lambda)}(\varrho, \zeta) &:= 8\pi \gamma^2 (1+h) e^{2(\xi-\gamma\nu)} \int_{\mathbb R_v^3} \phi(E)\psi(\lambda, L) \frac{(v^1)^2+(v^2)^2}{\sqrt{1+\gamma |v|^2}} \, \mathrm d^3v, \label{def_hat_m_2}\\
M_4^{(\gamma,\lambda)}(\varrho, \zeta) &:= -\frac{16\pi \gamma}{\varrho(1+h)} e^{2\xi - 4\gamma\nu}\int_{\mathbb R_v^3} \phi(E)\psi(\lambda, L) v^3\,\mathrm d^3v, \label{def_hat_m_4}  \\
M_5^{(\gamma,\lambda)}(\varrho, \zeta) &:= 4\pi q e^{2\xi - 3\gamma\nu} \int_{\mathbb R_v^3} \phi(E)\psi(\lambda, L) \left(e^{2\gamma\nu} + \frac{\gamma \varrho (1+h) \omega v^3}{\sqrt{1+\gamma |v|^2}}\right) \, \mathrm d^3v, \label{def_hat_m_5} \\
M_6^{(\gamma,\lambda)}(\varrho, \zeta) &:= -\frac{4\pi q \gamma (1+h)}{\varrho} e^{2\xi-3\gamma\nu} \int_{\mathbb R_v^3} \phi(E)\psi(\lambda, L) \frac{v^3}{\sqrt{1+\gamma |v|^2}}\, \mathrm d^3v, \label{def_hat_m_6}
\end{align}
where $\mathrm d^3v = \mathrm d v^1 \mathrm dv^2 \mathrm dv^3$ and $E$ and $L$ are seen as functions of $\varrho$, $\zeta$ and $v^1,v^2, v^3$, according to the formulas (\ref{formula_e}) and (\ref{formula_l}) whereas $\xi$, $\nu$, $h$, $\varrho$ are seen as variables. Moreover let
\begin{equation} \label{def_all_ms}
\mathfrak M_i[\zeta; \gamma, \lambda](\varrho, z) := M_i^{(\gamma,\lambda)}(\varrho, \zeta(\varrho, z)), \quad i = 1, 2, 4, 5, 6. 
\end{equation}
We remark that if $\psi$ is even in $L$, then 
\begin{equation} \label{m4_m6_0}
M_4^{(\gamma,\lambda)}(\varrho,\zeta) = M_6^{(\gamma,\lambda)}(\varrho,\zeta) = 0, \quad \mathrm{if}\;\omega=0.
\end{equation}
This follows immediately since the integrand in $M_4^{(\gamma,\lambda)}$ and $M_6^{(\gamma,\lambda)}$ is antisymmetric in $v^3$. \par
In the same spirit as the matter functions we define for $\gamma\in [0,1]$ the source functions $g_i^{(\gamma)} : \mathbb R^{19} \to \mathbb R$, $i=1, \dots, 6$. The source functions take the solution functions $\zeta_i$, $i=1,\dots,6$ and their derivatives $\zeta_{i,\varrho}$ and $\zeta_{i,z}$, $i=1,\dots,6$ as separate arguments, i.e.~they are considered as independent variables. We denote
\begin{align*}
\zeta_{,\varrho} = (\zeta_{1,\varrho}, \dots, \zeta_{6,\varrho}) = (\partial_{\varrho} \zeta_1, \dots, \partial_\varrho \zeta_6), \quad 
\zeta_{,z} = (\zeta_{1,z}, \dots, \zeta_{6,z}) = (\partial_{z} \zeta_1, \dots, \partial_z \zeta_6).
\end{align*}
Then the source functions are defined to be
\begin{align}
g_1^{(\gamma)}(\varrho, \zeta, \zeta_{,\varrho}, \zeta_{,z}) &:= -\frac{ h_{,\varrho}  \nu_{,\varrho} +  h_{,z}  \nu_{,z}}{1+ h} + \frac{\varrho^2}{2}(1+ h)^2 e^{-4\gamma \nu} \left( \omega_{,\varrho}^2 +  \omega_{,z}^2\right) \label{for_g_1} \\
&\quad -\gamma^2 e^{-2\gamma\nu}  \left( ( A_{t,\varrho} + 2\omega\varrho  a + \omega \varrho^2  a_{,\varrho})^2 + ( A_{t,z} + \omega \varrho^2  a_{,z})^2 \right) \nonumber \\
&\quad - \gamma \frac{e^{2\gamma \nu}}{(1+ h)^2} \left((2 a+\varrho  a_{,\varrho})^2 + \varrho^2  a_{,z}^2 \right), \nonumber
\end{align}
\begin{align}
&g_3^{(\gamma)}(\varrho, \zeta, \zeta_{,\varrho}, \zeta_{,z}) := \left(\left(1 + \partial_{\varrho}(\varrho  h)\right)^2 + \varrho^2  h_{,z}^2\right)^{-1} \label{for_g_3} \\
&\times \bigg( (1+\partial_{\varrho} (\varrho  h)) \nonumber \\
&\qquad \times \left[ \frac{\varrho}{2}( h_{\varrho\varrho} -  h_{zz}) +  h_{,\varrho} -\gamma^2 (1+ h) \varrho ( \nu_{,z}^2 -  \nu_{,\varrho}^2) - \gamma\varrho^3 (1+ h)^3 e^{-4\gamma \nu} ( \omega_{,\varrho}^2 -  \omega_{,z}^2) \right] \nonumber \\
&\qquad + \varrho  h_{,z} \left[\partial_{,\varrho} (\varrho  h_{,z}) + 2 \gamma^2 (1+ h) \varrho  \nu_{,\varrho}  \nu_{,z} + \frac 12 \gamma e^{-4 \gamma  \nu} \varrho^3 (1+ h)^3  \omega_{,\varrho}  \omega_{,z}\right] \nonumber  \\
&\qquad - 2\gamma^3 e^{-2 \gamma  \nu} (1+ h) \varrho^2  h_{,z} \left(\left( A_{t,\varrho} + 2\varrho  \omega  a + \varrho^2  \omega  a_{,\varrho} \right) \left( A_{t,z} + \varrho^2  \omega  a_{,z}\right) \right) \nonumber \\
&\qquad + \gamma^3 e^{-2 \nu \gamma} (1+ h) \varrho (1+\partial_{,\varrho}(\varrho  h)) \left(\left( A_{t,z}+ \varrho^2  \omega  a_{,z} \right)^2 - \left( A_{t,\varrho} + 2\varrho  \omega  a + \varrho^2  \omega  a_{,\varrho}\right)^2 \right) \nonumber \\
&\qquad + \gamma^2 \varrho^3 e^{2 \gamma \nu} \left(2\frac{ h_{,z}}{1 +  h} (2 a  a_{,z} + \varrho  a_{,\varrho}  a_{,z}) + \left(1 + \varrho \frac{h_{,\varrho}}{1+h} \right) \left(a_{,\varrho}^2 - a_{,z}^2\right)\right) \bigg), \nonumber
\end{align}
\begin{align}
g_4^{(\gamma)}(\varrho, \zeta, \zeta_{,\varrho}, \zeta_{,z}) &:= - \left(3\frac{ h_{,\varrho} \omega_{,\varrho} +  h_{,z}  \omega_{,z}}{1+ h} - 4\gamma (\nu_{,\varrho}  \omega_{,\varrho} +  \nu_{,z}  \omega_{,z}) \right) \label{for_g_4} \\
&\quad \;\, + 4 \gamma^2 \frac{e^{2\gamma\nu}}{(1+ h)^2} \Big( \frac{2}{\varrho}  A_{t,\varrho}  a +  A_{t,\varrho}  a_{,\varrho} +  A_{t,z}  a_{,z} + 4 \omega  a^2 \nonumber \\
&\hspace{5cm} + 2\omega  \varrho  a  a_{,\varrho} +  \omega \varrho^2  a_{,\varrho}^2 + \omega \varrho^2  a_{,z}^2\Big), \nonumber 
\end{align}
\begin{align}
&g_5^{(\gamma)}(\varrho, \zeta, \zeta_{,\varrho}, \zeta_{,z}) \label{for_g_5} \\
&\;\; := 2\gamma \left(\nu_{,\varrho} A_{t,\varrho} + \nu_{,z} A_{t,z}\right) +4\gamma\omega \left(2\varrho  \nu_{,\varrho}  a + \varrho^2 \nu_{,\varrho}  A_{,\varrho} + \varrho^2 \nu_{,z}  a_{,z}\right) \nonumber \\
&\qquad - \frac{h_{,\varrho} A_{t,\varrho} + h_{,z} A_{t,z}}{1+ h} -2\omega \frac{2\varrho  h_{,\varrho}  a + \varrho^2 h_{,\varrho} a_{,\varrho} + \varrho^2  h_{,z}  a_{,z}}{1+ h} \nonumber \\
&\qquad - \left(2\varrho  a \omega_{,\varrho} + \varrho^2 a_{,\varrho} \omega_{,\varrho} + \varrho^2 a_{,z} \omega_{,z}\right) -2 (2\omega  a + \varrho  \omega  a_{,\varrho}) \nonumber \\
&\qquad -\gamma \omega \varrho^2 (1+ h)^2 e^{-4\gamma\nu} \left( \omega_{,\varrho}( A_{t,\varrho} + 2 \varrho  \omega  a + \varrho^2  \omega  a_{,\varrho}) +  \omega_{,z}( A_{t,z} + \varrho^2  \omega  a_{,z})\right), \nonumber
\end{align}
\begin{align}
g_6^{(\gamma)}(\varrho, \zeta, \zeta_{,\varrho}, \zeta_{,z}) &:= \gamma (1+ h)^2 e^{-4\gamma \nu} \left( \omega_{,\varrho} ( A_{t,\varrho} + 2\varrho  \omega  a + \varrho^2  \omega  a_{,\varrho}) + \omega_{,z} ( A_{t,z} + \varrho^2 \omega  a_{,z})\right) \label{for_g_6}\\
&\qquad + \frac{\frac 2 \varrho  h_{,\varrho}  a +  h_{,\varrho}  a_{,\varrho} +  h_{,z}  a_{,z}}{1+ h} + \frac{\gamma}{4\pi^2} \left(\frac{2}{\varrho} \nu_{,\varrho}  a + \nu_{,\varrho}  a_{,\varrho} +  \nu_{,z}  a_{,z}\right). \nonumber
\end{align}
Furthermore we define
\begin{equation} \label{def_mathfrak_g}
\mathfrak g_i[\zeta; \gamma](\varrho,z) := g_i^{(\gamma)}(\varrho, \zeta(\varrho,z), \zeta_{,\varrho}(\varrho,z), \zeta_{,z}(\varrho,z)), \quad i = 1,3,\dots, 6
\end{equation}
as families of source functions which depend only on $\varrho$ and $z$ but which are parameterised by the solution functions $\zeta$. Moreover we define the operators
\begin{equation}
\Delta_n := \partial_{\varrho\varrho} + \frac{n-2}{\varrho} \partial_\varrho + \partial_{zz}, \quad n=3,4,5.
\end{equation}
As the notation indicates, these operators correspond to the Laplace operator for axially symmetric functions in three, four, and five dimensions. We consider the following boundary value problem, consisting in the Einstein equations,
\begin{align}
\Delta_3 \nu(\varrho,z) &= \mathfrak g_1[\zeta; \gamma](\varrho,z) + \mathfrak M_1[\zeta; \gamma, \lambda](\varrho,z) \label{final_eq_nu}  \\
\Delta_4 h(\varrho,z)  &= \mathfrak M_2[\zeta; \gamma, \lambda](\varrho,z), \label{eq_b} \\
\xi_{,\varrho}(\varrho,z) &= \mathfrak g_3[\zeta; \gamma](\varrho,z), \label{eq_xi} \\ 
\Delta_5 \omega(\varrho,z)  &= \mathfrak g_4[\zeta; \gamma](\varrho,z) + \mathfrak M_4[\zeta; \gamma, \lambda](\varrho,z), \label{eq_omega}
\end{align}
poloidal Maxwell equations,
\begin{align}
\Delta_3 A_t(\varrho,z) &= \mathfrak g_5[\zeta; \gamma](\varrho,z) + \mathfrak M_5[\zeta; \gamma, \lambda](\varrho,z), \label{eq_a0}\\
\Delta_5 a(\varrho, z) &= \mathfrak g_6[\zeta; \gamma](\varrho,z) + \mathfrak M_6[\zeta; \gamma, \lambda](\varrho,z), \label{final_eq_a}
\end{align}
toroidal Maxwell equations,
\begin{align}
\left(\frac{h_{,z}}{1+h} + 2(\gamma \nu_{,z}-\xi_{,z})\right)\left(A_{z,\varrho} - A_{\varrho,z}\right) + \partial_z \left(A_{z,\varrho} - A_{\varrho,z}\right) &= 0, \label{tor_maxwell_1} \\
\left(\frac 1 \varrho + \frac{h_{,\varrho}}{1+h} + 2(\gamma \nu_{,\varrho} - \xi_{,\varrho})\right)\left(A_{\varrho,z} - A_{z,\varrho}\right) + \partial_\varrho \left(A_{\varrho,z} - A_{z,\varrho}\right) &= 0, \label{tor_maxwell_2}
\end{align}
and the boundary conditions,
\begin{equation} \label{bc_infinity}
\lim_{|(\varrho,z)| \to \infty} (|\nu| + |\xi| + |\omega| + |h| + |A_t| + |A_\varrho| + |A_z| + |a|)(\varrho,z) = 0
\end{equation}
at spatial infinity and
\begin{align}
\xi(0,z) &= \ln(1+h(0,z)), \qquad z\in\mathbb R \label{bc_center} 
\end{align}
at the centre of symmetry.

\begin{rem}
The connection between equations (\ref{final_eq_nu})--(\ref{bc_center}) and the EVM-system is addressed in Proposition \ref{prop_equivalent} below.
\end{rem}

\begin{rem}  \label{rem_mag_rot}
If the ansatz function $f = \phi(E)\psi(\lambda, L)$ for the matter distribution satisfies in addition to the conditions listed on page \pageref{cond_phi} that $\psi$ is even in $L$, then the equations (\ref{final_eq_nu})--(\ref{bc_center}) possess solutions such that $\omega \equiv a \equiv 0$, i.e.~static solutions without rotation. Note that the corresponding matter functions vanish, cf.~(\ref{m4_m6_0}). \par
So the equations exhibit the physical connection between rotation and the magnetic field. Intuitively one would think of this connection in the following way. If there is no overall rotation, i.e.~$\omega \equiv 0$, then there is consequently no electric current and no magnetic field is induced. If there is rotation, however, the moving charges induce a poloidal magnetic field. Inspecting equations (\ref{eq_omega}) and (\ref{final_eq_a}), we see that $\omega\equiv a \equiv 0$ is a solution, whereas it is not possible that only one of these functions is zero everywhere because they appear mutually as source terms in the equation of each other. 
\end{rem}

\begin{lemma} \label{lem_no_tor}
For each continuous solution of (\ref{final_eq_nu})--(\ref{bc_center}) the combination $\beta = A_{\varrho,z} -  A_{z,\varrho}$ vanishes everywhere, i.e.~there is no toroidal magnetic field. (The toroidal component of the magnetic field is given in (\ref{tor_mag})).
\end{lemma}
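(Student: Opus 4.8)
The plan is to treat the toroidal Maxwell equations (\ref{tor_maxwell_1})--(\ref{tor_maxwell_2}) as a linear homogeneous first-order system for the single unknown $\beta = A_{\varrho,z} - A_{z,\varrho}$ and to exhibit an integrating factor. First I would rewrite both equations directly in terms of $\beta$, using $A_{z,\varrho} - A_{\varrho,z} = -\beta$; they become
\begin{align*}
\partial_z \beta &= -\left( \frac{h_{,z}}{1+h} + 2\gamma \nu_{,z} - 2\xi_{,z} \right)\beta, \\
\partial_\varrho \beta &= -\left( \frac{1}{\varrho} + \frac{h_{,\varrho}}{1+h} + 2\gamma \nu_{,\varrho} - 2\xi_{,\varrho} \right)\beta.
\end{align*}
The key observation is that the coefficient functions on the right are exact partial derivatives of one and the same potential, namely $\chi := \ln\varrho + \ln(1+h) + 2\gamma\nu - 2\xi = \ln\!\big(\varrho (1+h) e^{2(\gamma\nu - \xi)}\big)$, which is well defined on $\{\varrho > 0\}$ because $1+h = H > 0$: indeed $\partial_z \chi$ equals the coefficient in the first line and $\partial_\varrho \chi$ the coefficient in the second. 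Hence the system is equivalent to $\nabla\beta = -\beta\,\nabla\chi$, i.e. the function $w := e^{\chi}\beta = \varrho(1+h)e^{2(\gamma\nu-\xi)}\beta$ satisfies $\nabla w = e^{\chi}\big(\nabla\beta + \beta\nabla\chi\big) = 0$ on $\{\varrho > 0\}$.

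Since the reduced domain $\{(\varrho, z) : \varrho > 0,\ z \in \mathbb R\}$ is connected, $w$ is a constant $c_0$ there, so $\beta(\varrho, z) = c_0\, \varrho^{-1}(1+h)^{-1}e^{-2(\gamma\nu-\xi)}$ for $\varrho > 0$. It then remains to read off $c_0$ from the behaviour at the symmetry axis. Because the solution is continuous, $\beta$, $h$, $\nu$, $\xi$ are bounded near any point $(0, z_0)$ and $(1+h)^{-1}e^{-2(\gamma\nu-\xi)}$ stays bounded there (using that $H > 0$ is continuous, hence bounded below by a positive constant on a neighbourhood of $(0,z_0)$); consequently $w = \varrho(1+h)e^{2(\gamma\nu-\xi)}\beta \to 0$ as $\varrho \to 0^+$, which forces $c_0 = 0$. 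Therefore $\beta \equiv 0$ on $\{\varrho > 0\}$, and by continuity $\beta(0, z) = \lim_{\varrho\to 0^+}\beta(\varrho, z) = 0$ as well, so $\beta$ vanishes everywhere. By the formula (\ref{tor_mag}), $\mathscr B_\varphi = 2c\,e^{-2\mu}\varrho H\beta = 0$, i.e. there is no toroidal magnetic field.

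I do not expect a genuine obstacle here; the content is a short computation, and the only points needing care are (i) recognising the potential structure of the coefficients -- equivalently, that the integrability condition $\partial_z\big(\tfrac1\varrho + \tfrac{h_{,\varrho}}{1+h} + 2\gamma\nu_{,\varrho} - 2\xi_{,\varrho}\big) = \partial_\varrho\big(\tfrac{h_{,z}}{1+h} + 2\gamma\nu_{,z} - 2\xi_{,z}\big)$ holds automatically, which is exactly what makes the overdetermined system (\ref{tor_maxwell_1})--(\ref{tor_maxwell_2}) consistent with a one-dimensional solution space -- and (ii) the singular coefficient $1/\varrho$ at the axis, which is precisely what eliminates the free constant. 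I would also note that a single one of the two equations already suffices: integrating the linear ODE $\partial_\varrho\beta = -\partial_\varrho\chi\,\beta$ along each ray $\{z = \mathrm{const}\}$ gives $\beta(\varrho, z) = \beta(\varrho_0, z)\,e^{\chi(\varrho_0, z) - \chi(\varrho, z)}$, and letting $\varrho_0 \to 0^+$ (where $e^{\chi(\varrho_0,z)} \to 0$ while $\beta(\varrho_0,z)$ stays bounded) yields $\beta(\varrho, z) = 0$ for all $\varrho > 0$. Here "continuous solution" is understood so that the first-order equations (\ref{tor_maxwell_1})--(\ref{tor_maxwell_2}) hold in the classical sense, which is the regularity carried by the function space $\mathcal X$.
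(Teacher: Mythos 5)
Your proof is correct and follows essentially the same route as the paper: both rewrite (\ref{tor_maxwell_1})--(\ref{tor_maxwell_2}) as $\nabla\beta=-\beta\,\nabla\chi$ with $\chi=\ln(\varrho(1+h))+2(\gamma\nu-\xi)$, obtain $\beta=Ce^{-\chi}$, and use the $\varrho^{-1}$ singularity of $e^{-\chi}$ at the axis to force $C=0$. Your way of killing the constant (boundedness of $\beta$ from continuity, so that $w=e^{\chi}\beta\to 0$ as $\varrho\to 0^{+}$) is in fact slightly cleaner than the paper's appeal to the divergence of the toroidal field, since in (\ref{tor_mag}) the factor $\varrho H$ cancels the singularity of $\beta$ and it is really $\beta=-F_{\varrho z}$ itself whose blow-up contradicts axis regularity.
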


\begin{proof}
If we consider the quantity $\beta = A_{\varrho,z} - A_{z,\varrho}$, then equations (\ref{tor_maxwell_1})--(\ref{tor_maxwell_2}) read $\nabla \beta = -\beta \nabla\left( \ln(\varrho(1+h)) + 2(\gamma \nu + \xi)\right)$. This admits the solution
\begin{equation}
\beta = C e^{-(\ln(\varrho(1+h))+2(\gamma\nu-\xi))}.
\end{equation}
Since $-(\ln(\varrho(1+h))+2(\gamma\nu-\xi)) \to \infty$, as $\varrho \to 0$ we deduce that $C=0$ since otherwise the toroidal component of the magnetic field would diverge as $\varrho \to 0$, hence the assumption of a regular $\{ \varrho = 0 \}$-axis would be violated.
\end{proof}

Taking account for the fact that the magnetic field is purely poloidal, we exclude the corresponding equations (\ref{tor_maxwell_1})--(\ref{tor_maxwell_2}) from our notion of the reduced EVM system, i.e. we make the following definition. 

\begin{defi} \label{def_red_evm}
The reduced EVM-system with parameters $\gamma$, $\lambda$ is defined as equations (\ref{final_eq_nu})--(\ref{final_eq_a}), equipped with the boundary conditions (\ref{bc_infinity})--(\ref{bc_center}).
\end{defi}

The axially symmetric solutions of the EVM-system which are constructed in this article are obtained as perturbations around spherically symmetric solutions of the Vlasov-Poisson system. For this reason we discuss the non-relativistic limit of the EVM-system, i.e.~the limit where $\gamma \to 0$. \par 
Define for the spherically symmetric steady state of the Vlasov Poisson system for particles of mass $1-q^2$ the potential at infinity $U_\infty$ by
\begin{equation}
U_\infty := \lim_{|x|\to \infty} U_N(x).
\end{equation}
Then by condition (2) on $\phi$ we clearly have $U_\infty > E_0$ and there exists $R \in (R_N, \infty)$ such that
\begin{equation} \label{def_cap_r}
U_N(r) > \frac{E_0 + U_\infty}{2}, \quad \mathrm{for\, all} \; r > R.
\end{equation}
(Recall that $R_N$ is such that $U_N(R_N)=E_0$.) It turns out, that in the limit $\gamma\to 0$, only the equations (\ref{final_eq_nu}) and (\ref{eq_a0}) of the reduced EVM-system remain non-trivial and they reduce to the Poisson equations
\begin{align}
\Delta \nu_N &= 4\pi\rho_{\nu_N + q A_N}, \label{newt_1}\\
\Delta A_N &= -4\pi q \rho_{\nu_N + q A_N}, \label{newt_2}
\end{align}
where we use the notation $\rho_{\nu_N + q A_N}$, introduced in (\ref{def_rho_u}), on the right hand side. See the proof of Lemma \ref{lem_n_zero} for details. \par
The system (\ref{newt_1})--(\ref{newt_2}) equipped with the boundary conditions
\begin{equation}
\nu_N(0)=0, \quad A_N(0)=0,
\end{equation}
and the equation
\begin{equation} \label{poisson_eq_un}
\Delta U_N = 4\pi (1-q^2) \rho_N, \quad U_N(0)=0
\end{equation}
are equivalent in the sense that a solution of (\ref{newt_1})--(\ref{newt_2}) gives rise to a solution of (\ref{poisson_eq_un}) via $U_N = \nu_N + q A_N$ and a solution of (\ref{poisson_eq_un}) gives rise to a solution of (\ref{newt_1})--(\ref{newt_2}) via $\nu_N = (1-q^2)^{-1} U_N$, $A_N = -q(1-q^2)^{-1} U_N$. In Lemma \ref{lem_sol_decay} below we will furthermore see that the limits $\nu_\infty = \lim_{|x|\to \infty} \nu$ and $A_\infty = \lim_{|x|\to \infty} A_t$ exist for any $\gamma\in(0,\infty)$ and that in the limit $\gamma\to 0$ there holds $A_\infty = -q\nu_\infty$, which is consistent. \par
 We are going to linearise around a solution of the system in the limit $(\gamma, \lambda)\to (0,0)$. We denote this solution by $\zeta_0$, i.e.
\begin{equation}
\zeta_0 = (\nu_N,0,0,0,A_N,0).
\end{equation}

\begin{lemma} \label{lem_comp_supp}
If $\gamma > 0$ is sufficiently small, then the matter quantities of a solution $\zeta$ of the reduced EVM-system are supported within a ball of radius $R$ around the origin.
\end{lemma}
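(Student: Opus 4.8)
The plan is to trace where the matter functions $M_i^{(\gamma,\lambda)}$ can be non-zero. By the ansatz \eqref{ansatz_f} and the definitions \eqref{def_hat_m_1}--\eqref{def_hat_m_6}, each matter function has an integrand proportional to $\phi(E)\psi(\lambda,L)$, which vanishes unless $E < E_0$ (by condition (1) on $\phi$). So the first step is to establish a pointwise lower bound on the conserved energy $E$ from \eqref{formula_e} in terms of the metric and potential functions. Concretely, since $\phi(E)\psi(\lambda,L)=0$ whenever $E\geq E_0$, the support of the matter in $(\varrho,z)$-space is contained in the set where $\inf_{v}E(\varrho,z,v) < E_0$. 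Using $v^0=\sqrt{1+\gamma|v|^2}$ from \eqref{mass_shell_frame} one has $\frac{e^{\gamma\nu}v^0-1}{\gamma}\geq \frac{e^{\gamma\nu}-1}{\gamma}$, which tends to $\nu$ as $\gamma\to 0$; the cross term $\omega\varrho H e^{-\gamma\nu}v^3$ vanishes at the minimising $v$ (or can be absorbed), and the term $qA_t$ remains. Thus $\inf_v E \geq \frac{e^{\gamma\nu}-1}{\gamma} + qA_t + (\text{error})$, so the matter support lies in $\bigl\{(\varrho,z) : \tfrac{e^{\gamma\nu}-1}{\gamma}+qA_t < E_0 + o(1)\bigr\}$.

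The second step is to control $\nu$ and $A_t$ uniformly for small $\gamma$. Here I would invoke the equivalence with the Vlasov-Poisson limit discussed just before the lemma: as $\gamma\to 0$ the solution functions converge to $\zeta_0=(\nu_N,0,0,0,A_N,0)$, and $\nu_N + qA_N = U_N$ with $U_N$ the spherically symmetric Newtonian potential satisfying $U_N(r) > \tfrac{E_0+U_\infty}{2}$ for $r>R$ by \eqref{def_cap_r}. Since $E \approx \nu + qA_t \to U_N$ in the limit (using that $q A_t \approx q A_N$ and $\tfrac{e^{\gamma\nu}-1}{\gamma}\approx \nu_N$), and $U_N(r) > \tfrac{E_0+U_\infty}{2} > E_0$ for $r>R$, the matter integrand must vanish for $r = |(\varrho,z)| > R$ once $\gamma$ is small enough that the $o(1)$ errors are smaller than the gap $\tfrac{U_\infty - E_0}{2}$. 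This gives the claimed containment in the ball of radius $R$.

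The main obstacle is making the perturbative bound on $E$ genuinely uniform and rigorous: at this stage of the paper the existence of the solution $\zeta$ for small $\gamma$ has been obtained from the implicit function theorem (Step 4 of the proof of Theorem \ref{main_theorem}), so one knows $\zeta = \mathfrak{Z}(\gamma,\lambda)$ is close to $\zeta_0$ in the norm of $\mathcal X$; I would need that the $\mathcal X$-norm controls the sup-norms of $\nu$ and $A_t$ on the relevant region (which it should, by the hierarchy-of-regularity design of $\mathcal X$ mentioned in the introduction). The delicate points are: (i) handling the $\omega\varrho H e^{-\gamma\nu}v^3$ term in $E$ — one should check it does not destroy the lower bound, e.g.\ by noting $\omega$ is itself $O(\text{small})$ near $\zeta_0$ where $\omega_0=0$, or by completing the square in $v^3$; and (ii) ensuring the estimate is uniform in $\lambda$ as well, which is immediate since $|\psi|$ is bounded and $\psi$ only enters multiplicatively. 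Once the lower bound $E \geq U_N - \varepsilon(\gamma)$ with $\varepsilon(\gamma)\to 0$ is in hand on the support of the iterate, the conclusion follows by comparing with \eqref{def_cap_r} and choosing $\gamma$ small enough that $\varepsilon(\gamma) < \tfrac{U_\infty - E_0}{2}$.
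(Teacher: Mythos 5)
Your proposal is correct and follows essentially the same route as the paper: bound $E$ from below by $\nu+qA_t$ (up to errors that vanish with $\gamma$), use the uniform convergence $\|\nu+qA_t-U_N\|_\infty\to 0$, and conclude from (\ref{def_cap_r}) that $E>E_0$ for $|x|>R$ once $\gamma$ is small, so $\phi(E)=0$ there. Your explicit attention to the cross term $\omega\varrho He^{-\gamma\nu}v^3$ is a point the paper's own proof passes over silently, but it does not change the argument.
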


\begin{proof}
The particle energy $E$ converges to the Newtonian particle energy $E_N$, given by
\begin{equation} \label{newtonian_energy}
E_N := \frac{|v|^2}{2} + \nu_N + \omega L_N +  qA_N, \quad L_N=\varrho v^3
\end{equation}
in the non-relativistic limit where $\gamma \to 0$. Using the expansions $e^x = 1+ x+ \dots$ and $\sqrt{1+x} = 1+\frac 12 x + \dots$ we obtain
\begin{align}
E &= \frac{e^{\gamma \nu} \sqrt{1+\gamma |v|^2}-1}{\gamma} + \omega\tilde L + qA_t \label{newtonian_l_e} \\
&= \frac{|v|^2}{2} + \nu + \left(\frac{\nu^2}{2} - \frac{|v|^4}{4} + \frac{\nu|v|^2}{2}\right)\gamma + \dots + \omega \tilde L  + qA_t 
\end{align}
and since $\nu \to \nu_N$, $A_t \to A_N$, $\omega \to 0$, we see $E \to E_N$ as $\gamma \to 0$. which is the Newtonian particle energy with potential $U_N = \nu_N + qA_N$. \par
Now, since $\|\nu + qA_t - U_N\|_\infty \to 0$, as $\gamma \to 0$, there is $\gamma_0 > 0$ such that for all $0 \leq \gamma \leq \gamma_0$ we have $E > \nu + qA_t > E_0$ for all $|x| > R$.
\end{proof}

\section{The function space of the solution} \label{sect_function_space}

In this paragraph the function spaces are defined in which a solution $\zeta = (\nu, h, \xi, \omega, A_t, a)$ of the reduced EVM-system will be constructed. In \cite{akr11, akr14} the considered function spaces contain axially symmetric functions on $\mathbb R^3$. Taking account for the fact that the reduced EVM-system is formulated as Poisson equations in different dimensions we define the function spaces for functions in the according dimensions. Furthermore, for the analysis of the source terms of these Poisson equations a hierarchy in regularity among the individual solution functions is needed, cf.~Lemma \ref{lem_reg} below. For this reason the assumed regularity is a bit stronger than in \cite{akr14}. \par
Let $\alpha\in(0, 1/2)$ be a fixed parameter and $Z_R = \{(x^1, x^2, x^3) \in \mathbb R^3 \, : \, \varrho(x^1,x^2) \leq R\}$. We define the following spaces of axially symmetric functions,
\begin{align}
\mathcal X_1 &:= \{ \nu \in C^{3,\alpha}(\mathbb R^3)\, |\, \nu =  \nu(\varrho, z) = \nu(\varrho, -z),\; \mathrm{and} \; \| \nu\|_{\mathcal X_1} < \infty\}, \\
\mathcal X_2 &:= \{ h \in C^{3,\alpha}(\mathbb R^4)\, |\,  h = h(\varrho, z) = h(\varrho, -z),\; \mathrm{and} \; \|  h \|_{\mathcal X_2} < \infty\}, \\
\mathcal X_3 &:= \{ \xi \in C^{1,\alpha}(Z_R)\, |\, \xi = \xi(\varrho,z) = \xi(\varrho, -z),\; \mathrm{and} \; \|  \xi \|_{\mathcal X_3} < \infty\}, \\
\mathcal X_4 &:= \{ \omega \in C^{2,\alpha}(\mathbb R^5)\, |\, \omega = \omega(\varrho, z) = \omega(\varrho, -z), \; \mathrm{and} \; \| \omega\|_{\mathcal X_4} < \infty\},
\end{align}
and
\begin{equation}
\mathcal X := \mathcal X_1 \times \mathcal X_2 \times \mathcal X_3  \times \mathcal X_4 \times \mathcal X_1 \times \mathcal X_4.
\end{equation}
Let $\beta \in (0,1)$ be another fixed parameter. Then the corresponding norms are defined to be
\begin{align}
\|  \nu \|_{\mathcal X_1} &:= \| \nu \|_{C^{3,\alpha}(\mathbb R^3)} + \left\|(1 + |x|)^{1+\beta} \nabla  \nu\right\|_\infty, \\
\|  h \|_{\mathcal X_2} &:= \| h \|_{C^{3,\alpha}(\mathbb R^4)} + \left\|(1+|x|)^3 \nabla h \right\|_\infty, \\
\|  \xi \|_{\mathcal X_3} &:= \| \xi \|_{C^{1,\alpha}(Z_R)}, \\
\|  \omega \|_{\mathcal X_4} &:= \| \omega \|_{C^{2,\alpha}(\mathbb R^5)} + \|(1+|x|)^3 \omega\|_\infty + \|(1+|x|)^4 \nabla  \omega\|_\infty, \label{def_norm_x4}
\end{align}
and
\begin{equation}
\|\zeta\|_{\mathcal X} := \| \nu \|_{\mathcal X_1} + \| h \|_{\mathcal X_2} + \| \xi \|_{\mathcal X_3} + \| \omega \|_{\mathcal X_4} + \| A_t \|_{\mathcal X_1} + \| a \|_{\mathcal X_4}.
\end{equation}
Finally we define
\begin{equation}
\mathcal U := \{(\zeta,p) \in \mathcal X \times [0,\delta) \times (-\delta,\delta))\, |\, \| \zeta -\zeta_0) \|_{\mathcal X} < \delta_0\},
\end{equation}
where $\delta_0 > 0$ is sufficiently small such that for all $(\zeta;\gamma,\lambda) \in \mathcal U$, we have $1+h(\varrho,z) > 1/2$ for all $(\varrho,z) \in [0,\infty) \times \mathbb R$.

\section{Solutions of the reduced system solve the full EVM-system}

In this article we construct solutions to the reduced EVM-system (\ref{final_eq_nu})--(\ref{bc_center}). These solutions to the reduced EVM-system correspond to spherically symmetric, time independent solutions of the EVM-system (\ref{eq_einstein})--(\ref{maxwell_eq_2}). The relations between these systems is the subject of the following proposition. As already mentioned, this article generalises \cite{akr14} to the case of charged particles and the reduced system treated here coincides with the reduced system considered in \cite{akr14} if the charge parameter $q$ is set to zero.

\begin{prop} \label{prop_equivalent}
A solution $\zeta \in \mathcal X$ of the reduced EVM-system (\ref{final_eq_nu})--(\ref{bc_center}) with parameters $\lambda$, $\gamma$ gives rise to a time independent, axially symmetric solution $(g,f,A)$ of the EVM-system (\ref{eq_einstein})--(\ref{maxwell_eq_2}) where $g$ is of the form (\ref{ansatz_metric}) and $f$ is of the form (\ref{ansatz_f}).
\end{prop}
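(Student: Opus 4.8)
The plan is to run the reduction of Sections~\ref{sect_axial}--\ref{sect_red_sys} in reverse. Given a solution $\zeta=(\nu,h,\xi,\omega,A_t,a)\in\mathcal X$ of the reduced EVM-system, I would set $\mu:=\xi-\gamma\nu$, $H:=1+h$, $A_\varphi:=\varrho^2 a$, $A_\varrho:=A_z:=0$, and define $g$ by (\ref{ansatz_metric}), $A:=A_t\,\mathrm dt+A_\varphi\,\mathrm d\varphi$, $F:=\mathrm dA$, and $f:=\phi(E)\psi(\lambda,L)$ with $E$, $L$ as in (\ref{formula_e})--(\ref{formula_l}); here $\mu$, hence $g$, is extended beyond the cylinder $Z_R$ by integrating (\ref{eq_xi}) outward from the axis using the other solution functions. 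One first checks that these objects genuinely live on $\mathscr M\cong\mathbb R^4$: by the regularity of the components of $\zeta$ in $\mathcal X$ and the even extension across the axis (Remark~\ref{rem_even}) they are sufficiently differentiable, the bound $1+h>1/2$ on $\mathcal U$ makes $g$ Lorentzian, and the centre condition (\ref{bc_center}) is precisely the elementary-flatness condition guaranteeing that $g$ extends across $\{\varrho=0\}$ without a conical singularity. Asymptotic flatness (\ref{bc_as_flat}) then follows from (\ref{bc_infinity}) together with Lemma~\ref{lem_comp_supp}, which already gives $f$ compact support, so that $f\to0$ at infinity.

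The Vlasov and Maxwell equations are the easy part. Since $E$ and $L$ are conserved along the characteristics of $\mathfrak T$ (Lemma~\ref{lem_conserved_quan}), Corollary~\ref{cor_char} gives $\mathfrak Tf=0$; positivity is inherited from $\phi,\psi\geq0$, and condition~(1) on $\phi$ together with Lemma~\ref{lem_comp_supp} gives $f\in C_c^1(\mathscr P;\mathbb R_+)$. For Maxwell, $\mathrm dF=0$ holds identically because $F=\mathrm dA$. For the inhomogeneous equation I would first observe that $f$ depends on the momenta only through $|v|^2$ and $v^3$, hence is even in $v^1$ and in $v^2$, so the current components $J^\varrho$ and $J^z$ vanish; as $A_\varrho=A_z=0$ one has $\beta=A_{\varrho,z}-A_{z,\varrho}=0$, whence the toroidal equations (\ref{tor_maxwell_1})--(\ref{tor_maxwell_2}), i.e.\ the $\varrho$- and $z$-components of $\nabla_\alpha F^{\alpha\beta}=-4\pi qJ^\beta$, hold trivially (and $\mathscr B_\varphi=0$ by (\ref{tor_mag})) --- this is the converse direction of Lemma~\ref{lem_no_tor}. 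The two remaining components $\beta=t,\varphi$, after inserting $A_\varphi=\varrho^2 a$ and the bookkeeping that produced $\mathfrak g_5,\mathfrak g_6,\mathfrak M_5,\mathfrak M_6$, are nothing but equations (\ref{eq_a0}) and (\ref{final_eq_a}), which hold by assumption.

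The substantial part is the Einstein equations. I would compute $G_{\mu\nu}$ for the metric (\ref{ansatz_metric}) (cf.~\cite{b72}), $T_{\mu\nu}$ from (\ref{eq_em_tensor}) with the above $f$, and $\tau_{\mu\nu}$ from (\ref{el_em_tensor}) with $F=\mathrm dA$, and first verify that the off-diagonal components absent from the ansatz --- the $t\varrho$-, $tz$-, $\varrho\varphi$- and $z\varphi$-ones --- vanish on both sides, using once more the evenness of $f$ in $v^1,v^2$ and the block structure of the metric together with $A_\varrho=A_z=0$; this shows the ansatz is self-consistent. The surviving independent components, after the changes of variables $\mu=\xi-\gamma\nu$, $H=1+h$ and the organisation of the non-matter terms into $g_1,g_3,g_4$ and the matter terms into $M_1,M_2,M_4$, should collapse to the $\nu$-equation (\ref{final_eq_nu}), the $h$-equation (\ref{eq_b}), the $\omega$-equation (\ref{eq_omega}), the first-order equation (\ref{eq_xi}), and one further first-order relation of the form $\xi_{,z}=\tilde{\mathfrak g}_3[\zeta;\gamma]$ (together with whatever further Einstein component is over-determined by the ansatz). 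All of these except the last are among the reduced equations and hold by hypothesis.

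I expect the leftover relation $\xi_{,z}=\tilde{\mathfrak g}_3$ to be the main obstacle, and I would handle it by an integrability argument. Put $\mathcal E_{\mu\nu}:=G_{\mu\nu}-\tfrac{8\pi}{c^4}(T_{\mu\nu}+\tau_{\mu\nu})$; by the preceding steps every component of $\mathcal E$ vanishes except the one encoding $\xi_{,z}$. The twice-contracted Bianchi identity gives $\nabla^\mu G_{\mu\nu}=0$, while the Vlasov equation and the Maxwell equations give $\nabla^\mu(T_{\mu\nu}+\tau_{\mu\nu})=0$ (the Lorentz-force contributions of the two pieces cancelling), so $\nabla^\mu\mathcal E_{\mu\nu}=0$. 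Feeding in the vanishing of all the other components, this identity reduces to a homogeneous first-order linear ordinary differential equation in $\varrho$ (at each fixed $z$, by axial and reflection symmetry) for the remaining component of $\mathcal E$; the regularity of $\zeta$ on the axis together with (\ref{bc_center}) fixes the integration datum and forces that component to vanish identically. Equivalently, one may verify directly from (\ref{final_eq_nu}), (\ref{eq_b}), (\ref{eq_omega}), (\ref{eq_a0}), (\ref{final_eq_a}) that $\partial_z\mathfrak g_3=\partial_\varrho\tilde{\mathfrak g}_3$ --- the integrability condition for $\xi$ --- after which, $\xi$ being pinned on $\{\varrho=0\}$ by (\ref{bc_center}), the relation $\xi_{,z}=\tilde{\mathfrak g}_3$ propagates to all of $\mathscr M$. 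Once this is done $\mathcal E_{\mu\nu}\equiv0$, so $(g,f,A)$ satisfies (\ref{eq_einstein})--(\ref{maxwell_eq_2}), is time independent and axially symmetric by construction, and has $g$ of the form (\ref{ansatz_metric}) and $f$ of the form (\ref{ansatz_f}), as claimed; if desired, a rescaling of $t$ turns it into a solution with $c=1$, as indicated in the outline of Theorem~\ref{main_theorem}.
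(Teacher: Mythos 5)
Your proposal is correct and takes essentially the same route as the paper: the paper likewise observes that the Maxwell equations survive the reduction untouched, reduces the Einstein part to showing that the remaining components $E_{tt},E_{\varrho\varrho},E_{zz},E_{\varphi\varphi},E_{t\varphi},E_{\varrho z}$ of $G_{\mu\nu}-\tfrac{8\pi}{c^4}(T_{\mu\nu}+\tau_{\mu\nu})$ vanish, and disposes of them by the contracted-Bianchi/divergence argument that you spell out but which the paper delegates to \cite[Section 6]{akr14}. The one point the paper flags and you gloss over is that $\xi$ is only $C^{1,\alpha}$ while the Einstein equations (and the Bianchi identity) are second order, so that step needs the same careful interpretation as in \cite{akr14}.
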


Before we prove Proposition \ref{prop_equivalent} we establish the following scaling law.

\begin{lemma} (Scaling law) \label{lem_scaling} \\
Let $(\nu, h, \xi, \omega, f, A_t, a)$ be a solution of the reduced EVM-system (\ref{eq_einstein})--(\ref{maxwell_eq_2}) with parameters $(\lambda, c) \in (-1,1) \times (0,\infty)$. Then the functions $\tilde \nu, \tilde h, \tilde \xi, \tilde \omega, \tilde f, \tilde A_t, \tilde a$, given by
\begin{multline}
\left(\tilde \nu(\varrho, z), \tilde h(\varrho, z), \tilde \xi(\varrho, z), \tilde \omega(\varrho, z), \tilde A_t(\varrho, z), \tilde a(\varrho, z)\right) \\ = \left(\frac{1}{c^2} \nu(c\varrho, cz), h(c\varrho, cz), \xi(c\varrho, cz), \omega(c\varrho, cz), \frac{1}{c^2} A_t(c\varrho, cz), a(c\varrho, cz)\right)
\end{multline}
and
\begin{equation} \label{scaling_f}
\tilde f(\varrho, z, p^\varrho, p^z, p^\varphi) = c^3 f(c\varrho, cz, cp^\varrho, cp^z, p^\varphi)
\end{equation}
satisfy the reduced EVM-system with parameters $(\lambda, 1)$. 
\end{lemma}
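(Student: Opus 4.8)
The statement is a scaling identity, and I would prove it by direct substitution: plug the rescaled tuple $(\tilde\nu,\tilde h,\tilde\xi,\tilde\omega,\tilde f,\tilde A_t,\tilde a)$ into the reduced EVM-system with $\tilde c=1$, that is with $\tilde\gamma:=1$, and verify each of the equations (\ref{final_eq_nu})--(\ref{final_eq_a}) together with the boundary conditions (\ref{bc_infinity})--(\ref{bc_center}), using that the original tuple satisfies them with $\gamma:=c^{-2}$.

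First I would record the elementary scaling rules that drive everything. Writing $(c\,\cdot\,)$ for the dilation $(\varrho,z)\mapsto(c\varrho,cz)$, one has $\partial_\varrho\!\left(u\circ(c\,\cdot\,)\right)=c\,(\partial_\varrho u)\circ(c\,\cdot\,)$ and likewise for $\partial_z$, hence $\Delta_n\!\left(u\circ(c\,\cdot\,)\right)=c^{2}\,(\Delta_n u)\circ(c\,\cdot\,)$ for $n=3,4,5$; consequently, applying the differential operator on the left of each of (\ref{final_eq_nu})--(\ref{final_eq_a}) to the corresponding rescaled unknown reproduces a fixed power $c^{m}$ times that left-hand side evaluated at $(c\,\cdot\,)$, with $m=0$ for the $\nu$- and $A_t$-equations, $m=1$ for the $\xi$-equation, and $m=2$ for the $h$-, $\omega$- and $a$-equations. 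Secondly, the factors $c^{-2}$ in $\tilde\nu$ and $\tilde A_t$ are chosen precisely so that the $\gamma$-weighted quantities are mere precompositions: $\tilde\gamma\tilde\nu=\gamma\,(\nu\circ(c\,\cdot\,))$, $\tilde\gamma\tilde A_t=\gamma\,(A_t\circ(c\,\cdot\,))$, and hence all exponentials $e^{k\gamma\nu}$ occurring in the system, the combination $\tilde\mu:=\tilde\xi-\tilde\gamma\tilde\nu=\mu\circ(c\,\cdot\,)$, and $\tilde H=1+\tilde h=H\circ(c\,\cdot\,)$, $\tilde\omega=\omega\circ(c\,\cdot\,)$ are obtained by precomposition, while a first derivative of any $\gamma$-weighted quantity picks up exactly one power of $c$. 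Thirdly, in the frame momenta the rescaling (\ref{scaling_f}) reads $\tilde f(\varrho,z,v)=c^{3}f(c\varrho,cz,cv^{1},cv^{2},cv^{3})$; substituting $w=cv$ in each momentum integral, the Jacobian $c^{-3}$ cancels this $c^{3}$, and since $v^{0}=\sqrt{1+\tilde\gamma|v|^{2}}$ with $\tilde\gamma|v|^{2}=\tilde\gamma|w|^{2}/c^{2}=\gamma|w|^{2}$, all the momentum weights ($1+2\gamma|v|^{2}$, $\sqrt{1+\gamma|v|^{2}}$, $v^{3}$, $\dots$) pass into the corresponding original weights up to powers of $c$.

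With these rules in hand I would run through the system equation by equation: it remains to verify that the matter function $\mathfrak M_i$ and every monomial of the source function $\mathfrak g_i$ on the right scale with exactly the same power $c^{m}$ as the corresponding left-hand side. For the $\mathfrak M_i$ this is immediate from the momentum substitution and the precomposition identities above. For the $\mathfrak g_i$ it is a term-by-term inspection: each monomial is a product of powers of $\varrho$, first derivatives of solution functions, a power of $\gamma$ and a $\gamma$-weighted exponential, and one reads off its total power of $c$ additively from the rules above and checks that it equals $c^{m}$. The two boundary conditions are then immediate — (\ref{bc_infinity}) survives because precomposition with a dilation and multiplication by a constant preserve decay at infinity, and (\ref{bc_center}) survives since $\tilde\xi(0,z)=\xi(0,cz)=\ln(1+h(0,cz))=\ln(1+\tilde h(0,z))$. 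Finally, if $f=\phi(E)\psi(\lambda,L)$, then — because the conserved quantities $\tilde E,\tilde L$ of the $\tilde\gamma$-system equal $c^{-2}$ times $E,L$ of the $\gamma$-system evaluated at $(c\varrho,cz)$ and momentum $cv$ — the rescaled $\tilde f$ is again of product form in $\tilde E,\tilde L$ with correspondingly rescaled profile functions, nonnegative, compactly supported and $C^{1}$, so it solves the $\tilde\gamma$-Vlasov equation by Corollary \ref{cor_char}, and thus $\tilde f$ together with $(\tilde\nu,\tilde h,\tilde\xi,\tilde\omega,\tilde A_t,\tilde a)$ is an admissible solution.

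The only genuine obstacle is the length of the bookkeeping in the source functions $\mathfrak g_1,\mathfrak g_3,\mathfrak g_4,\mathfrak g_5,\mathfrak g_6$: the monomials already present in the uncharged reduced system of \cite{akr14} are dispatched exactly as there, while for the new Maxwell monomials one must confirm that the power of $\gamma$ attached to each of them is matched by the $c^{-2}$-weighting of $A_t$ (and of $A_\varphi=\varrho^{2}a$), so that every such term still carries the single power $c^{m}$ demanded by its equation. No analytic ingredient is needed beyond these algebraic checks.
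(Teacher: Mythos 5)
Your proof takes essentially the same route as the paper's: the scaling law $\Delta_n(u\circ(c\,\cdot\,))=c^2(\Delta_n u)\circ(c\,\cdot\,)$ for the left-hand sides, the substitution $w=v/c$ in the momentum integrals so that the Jacobian absorbs the factor $c^3$ in (\ref{scaling_f}), and a term-by-term power count in the source functions. The paper's own proof is in fact terser -- it verifies only the two new Maxwell equations (\ref{eq_a0}) and (\ref{final_eq_a}) explicitly and refers to \cite{akr14} for the remaining components -- so your systematic identification of the target power $c^m$ for each equation, together with the checks of the boundary conditions and of the product ansatz for $\tilde f$, is if anything more complete.
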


\begin{proof}
We check the laws for $A_t$ and $a$. For the other functions, cf.~\cite{akr14}. For the Laplace operator we have the transformation law
\begin{equation}
\Delta \tilde A_t (\varrho, z) = (\Delta A_t)(c\varrho, cz).
\end{equation}
Then we use the Maxwell equations (\ref{eq_a0}) and (\ref{final_eq_a}) for $A_t$ and $a$, respectively. Note that for example
\begin{equation}
\left(\nabla A_t\right)(c\varrho, cz) = \frac{1}{c} \nabla (A_t(c\varrho, cz)) = c \nabla \tilde A_t(\varrho, z).
\end{equation}
For the matter function corresponding to $A_t$ we obtain the expression
\begin{align}
&\mathfrak M_5[\zeta; \gamma, \lambda](c\varrho, cz) \\
&= -4\pi q e^{(2\tilde \xi-3 \tilde \nu)(\varrho,z)} \int_{\mathbb R_v^3} f\left(c\varrho, cz, p^\varrho(c\varrho, cz, v^1), p^z (c\varrho, cz, v^2), p^\varphi(c\varrho, cz, v^3)\right) \nonumber \\
&\hspace{5cm} \times \left(e^{2\tilde\nu(\varrho,z)} + \frac{\varrho(1+\tilde h(\varrho,z)\omega(\varrho,z) v^3}{c\sqrt{1+\gamma |v|^2}}\right) \, \mathrm dv^1 \mathrm dv^2 \mathrm dv^3 \nonumber
\end{align}
and for $a$ we have the matter function
\begin{multline}
\mathfrak M_6[\zeta; \gamma, \lambda](c\varrho, cz) = \frac{4\pi q}{c}  \varrho (1 + \tilde h(\varrho,z)) e^{(2\tilde \xi- 3 \tilde \nu)(\varrho,z)} \\\times \int_{\mathbb R_v^3} f\left(c\varrho, cz, p^\varrho(c\varrho, cz, v^1), p^z (c\varrho, cz, v^2), p^\varphi(c\varrho, cz, v^3)\right) \frac{v^3}{\sqrt{1+\gamma |v|^2}}\, \mathrm dv^1 \mathrm dv^2 \mathrm dv^3.
\end{multline}
Now, applying the change of variables $v^i \to w^i = v^i/c$, $i=1,2,3$, and using the scaling law (\ref{scaling_f}) one recovers the original matter functions with $\tilde f$ instead of $f$.
\end{proof}

\begin{proof}[Proof of Proposition \ref{prop_equivalent}]
First we describe how the reduced EVM-system can be derived from the EVM-system. We start with the equations (\ref{final_eq_nu})--(\ref{eq_omega}) which --without electro-magnetic field terms of course-- have been considered in \cite{akr14}. Write down all Einstein equations in the coordinates $t,\varrho, \varphi, z$ and take into account the symmetries by substituting the ansatz (\ref{ansatz_metric}) for $g$. Suitable combinations of the Einstein equations yield the equations (\ref{final_eq_nu})--(\ref{eq_omega}) for $\nu$, $h$, $\xi$, and $\omega$. For equation (\ref{final_eq_nu}) take the combination
\begin{equation} \label{combination_nu}
\frac 12 \left(e^{2\xi - 4\gamma\nu}(G_{tt} + 2\omega G_{t\varphi}) + \frac{1}{\gamma} (G_{\varrho\varrho} + G_{zz}) + e^{2\xi}\left(\frac{1}{\gamma \varrho^2 (1+h)^2} + \omega^2 e^{-4\gamma\nu}\right) G_{\varphi\varphi} \right).
\end{equation}
For equation (\ref{eq_b}) take $(1+h)(G_{\varrho\varrho} + G_{zz})$, for equation (\ref{eq_omega}) take $\frac{2 e^{2\xi}}{\varrho^2 (1+h)^2} (G_{t\varphi} + \omega G_{\varphi\varphi})$, and for equation (\ref{eq_xi}) take
\begin{equation} \label{combination_xi}
(1+h+\varrho h_{,\varrho}) \frac{(1+h)\varrho}{2} (G_{\varrho\varrho} - G_{zz}) + \varrho^2 h_z (1+h) G_{\varrho z}.
\end{equation}
It is important to take the right combination of Einstein equations for the method to work and we follow \cite{akr11}. \par
The components $G_{\mu\nu}$ of the Einstein tensor and the components $\tau_{\mu\nu}$ of the electro-magnetic part of the energy momentum tensor yield the left members and the source functions of equations (\ref{final_eq_nu})--(\ref{eq_omega}). The matter functions $M_i^{(\gamma,\lambda)}$, $i=1,2,4$ are obtained as explained now. First, using the ansatz (\ref{ansatz_f}) for the particle distribution function $f$ and the orthonormal frame (\ref{def_frame}) one can write the components of the kinetic part $T_{\mu\nu}$ of the energy momentum tensor, defined in (\ref{eq_em_tensor}), as the integral expression.
\begin{equation}
T_{\mu\nu} = \int_{\mathbb R_v^3} \phi(E) \psi(\lambda, L) \frac{p_\mu p_\nu}{\sqrt{1+\gamma |v|^2}}\, \mathrm dv^1 \mathrm dv^2 \mathrm dv^3. \label{em_tensor_v}
\end{equation}
For this formula the mass shell relation (\ref{mass_shell_frame}) needs to be used. Furthermore, the variables $p_\mu$, $\mu=0,\dots, 3$ can in terms of the frame components $v^1$, $v^2$, $v^3$, be expressed as
\begin{equation} \label{pd_via_v}
\begin{aligned}
p_0 &= -\frac{e^{\gamma\nu}}{\gamma}\sqrt{1+\gamma |v|^2} - e^{-\gamma\nu}\varrho (1+h) \omega v^3, \\
p_\varrho &= e^\mu v^1, \quad p_z = e^\mu v^2, \quad p_\varphi = e^{-\gamma\nu} \varrho (1+h) v^3.
\end{aligned}
\end{equation}
Now taking the corresponding combinations of $T_{\mu\nu}$ and substituting the expressions (\ref{pd_via_v}) for the $p$-variables one obtains after simplification the matter functions. These matter functions coincide with the corresponding matter terms in \cite{akr14}, the only difference consists in the quantities $E$ and $L$. The matter quantity $M_3^{(\gamma,\lambda)}$ vanishes due to the symmetry $T_{\varrho\varrho} = T_{zz}$. \par
The equations (\ref{eq_a0}) and (\ref{final_eq_a}) for $A_t$ and $a$, respectively, are new with respect to \cite{akr14} and they are obtained by suitable combinations of the Maxwell equation $\nabla_\alpha F^{\alpha \beta} = -4\pi q J^\beta$ for $\beta = t$ and $\beta = \varphi$. These combinations are 
\begin{align}
&\frac{1}{\gamma} e^{2\xi} \nabla_\alpha F^{\alpha t} - \omega \varrho^2 (1+h)^2 e^{2\xi-4\gamma\nu}\left(\omega \nabla_\alpha F^{\alpha t} - \nabla_\alpha F^{\alpha\varphi}\right), \label{combination_a0} \\
&(1+h)^2 e^{2\xi-4\gamma\nu}\left(\omega \nabla_\alpha F^{\alpha t} - \nabla_\alpha F^{\alpha\varphi}\right), \label{combination_a}
\end{align}
respectively. The matter functions $M_5^{(\gamma,\lambda)}$ and $M_6^{(\gamma,\lambda)}$ are obtained by taking the respective combinations of the components of the matter current $J^\beta$, defined in (\ref{maxwell_eq_2}). Using the orthonormal frame (\ref{def_frame}) it can be written as
\begin{equation}
J^\beta = \gamma \int_{\mathbb R_v^3} \phi(E) \psi(\lambda, L) \frac{p^\beta}{\sqrt{1+\gamma |v|^2}}\, \mathrm dv^1 \mathrm dv^2 \mathrm dv^3. \label{matter_current_v}
\end{equation}
The variables $p^\mu$, $\mu= 0,\dots,3$, are given in terms of the frame coordinates as
\begin{equation} \label{frame_reverse}
p^0 = e^{-\gamma\nu} v^0, \quad p^1 = e^{-\mu} v^1, \quad p^2 = e^{-\mu} v^2, \quad p^3 = e^{-\gamma\nu} \omega v^0 + \frac{e^{\gamma\nu}}{(1+h)\varrho} v^3. 
\end{equation}
So far it has been proved that a solution of the EVM-system implies a solution of the reduced EVM-system since the latter one is obtained by linear combinations of certain components of the former one. It remains to verify that the converse is also true, i.e.~that a solution to the reduced EVM-system with parameter $c \in [1,\infty)$ implies an axially symmetric, time independent solution of the EVM-system with $c=1$. First we note that by the scaling laws (Lemma \ref{lem_scaling}) a solution to the reduced EVM-system with $c=1$ can always be obtained. The Maxwell equations are already fulfilled since the number of equations has not been reduced. For the Einstein equations however the number of equations has been reduced, so situation is less clear. We define the quantity
\begin{equation}
E_{\mu\nu} = G_{\mu\nu} - \frac{8\pi}{c^4} \left(T_{\mu\nu} + \tau_{\mu\nu}\right), \quad \mu, \nu = t, \varrho, z, \varphi.
\end{equation}
The non-trivial components are $E_{tt}$, $E_{\varrho\varrho}$, $E_{zz}$, $E_{\varphi \varphi}$, $E_{t\varphi}$, and $E_{\varrho z}$. The other components are trivially zero since the Einstein tensor vanishes under the symmetry assumptions incorporated into the metric ansatz (\ref{ansatz_metric}). It remains to show that the components $E_{tt}$, $E_{\varrho\varrho}$, $E_{zz}$, $E_{\varphi \varphi}$, $E_{t\varphi}$, and $E_{\varrho z}$ vanish, too. This can be done by using the same argument as given in \cite[Section 6]{akr14} since the Einstein part of the reduced EVM-system that we are working with consists in the same linear combinations of Einstein equations which has been considered in \cite{akr14}. A subtlety, which has to be dealt with, consists in the fact that $\xi$ is only $C^{1,\alpha}$, whereas Einstein's equations are of second order. Since in the present setup $\xi$ has the same regularity as in the setup of \cite{akr14} the arguments of \cite{akr14} apply however. \par
Finally, the boundary conditions (\ref{bc_infinity}) clearly imply the boundary conditions (\ref{bc_as_flat}).
\end{proof}

\section{Definition of the solution operator $\mathfrak F$} \label{sect_def_f}

The equations (\ref{final_eq_nu}), (\ref{eq_b}), (\ref{eq_omega})--(\ref{final_eq_a}) of the reduced EVM-system are semi-linear Poisson equations. For this reason the solution operators corresponding to these equations are basically given in terms of the Greens function of the Laplace operator. If $q$ is set to zero, the solution operator introduced here coincides with the solution operator defined in \cite{akr14}. \par
First, we recall some facts about the Poisson equation. Define for $n\geq 3$ the $n$-dimensional Greens function $G^n_y(x)$ of the Laplace operator $\Delta_n$ by
\begin{equation} \label{def_g_y}
G^n_y(x) = \frac{1}{(n-2)|\mathbb S^{n-1}|} \frac{1}{|x-y|^{n-2}},
\end{equation}
where $|\mathbb S^{n-1}|$ is the volume of the $(n-1)$-dimensional unit sphere. For later convenience we also define
\begin{equation}
\hat G^n_y(x) = \frac{1}{(n-2)|\mathbb S^{n-1}|} \left( \frac{1}{|x-y|^{n-2}} - \frac{1}{|y|^{n-2}}\right)
\end{equation}
and the functionals
\begin{equation} \label{def_greens_gn}
G_n[f](x) := \int_{\mathbb R^n} G^n_y(x) f(y)\, \mathrm dy \quad \mathrm{and} \quad \hat G_n[f](x) := \int_{\mathbb R^n} \hat G^n_y(x) f(y)\, \mathrm dy.
\end{equation}
Then, in the sense of distributions, the solution of the Poisson equation $-\Delta_n u = f$ for $f\in L_{\mathrm{loc}}^1(\mathbb R^n)$ on $\mathbb R^n$, $n\geq 1$ is given by $u(x) = G_n[f](x)$, cf.~\cite[Theorem 6.21]{ll00}. \par
Now we give the definition of $\mathfrak F$. To this end we first define the operators $\mathfrak G_i :\mathcal U \to \mathcal X_i$, $i=1,\dots,6$ (by $\mathcal X_5$ and $\mathcal X_6$ we understand $\mathcal X_1$ and $\mathcal X_4$, respectively). We define
\begin{align}
\mathfrak G_i[\zeta; \gamma, \lambda] &:= G_3[\mathfrak g_i[\zeta; \gamma]] + \hat G_3[\mathfrak M_i[\zeta; \gamma, \lambda]], &i &= 1,5,  \label{def_g15}\\
\mathfrak G_2[\zeta; \gamma, \lambda] &:= G_4[\mathfrak M_2[\zeta; \gamma, \lambda]], && \label{def_g2}\\
\mathfrak G_3[\zeta; \gamma, \lambda] &:= \ln(1+h(0,z)) + \int_0^\varrho \mathfrak g_3[\zeta; \gamma](s,z) \, \mathrm ds, &&\label{def_g3} \\
\mathfrak G_i[\zeta; \gamma, \lambda] &:= G_5[\mathfrak g_i[\zeta; \gamma] + \mathfrak M_i[\zeta; \gamma, \lambda]], &i &= 4,6. \label{def_g46}
\end{align}
Then we write compactly
\begin{equation}
\mathfrak G[\zeta; \gamma, \lambda] := (\mathfrak G_1[\zeta; \gamma, \lambda], \dots, \mathfrak G_6[\zeta; \gamma, \lambda]).
\end{equation}
Furthermore we define
\begin{equation}
\mathfrak F:\mathcal U \to \mathcal X,\quad (\zeta; \gamma, \lambda) \mapsto \mathfrak F[\zeta; \gamma, \lambda] := \zeta - \mathfrak  G[\zeta; \gamma, \lambda].
\end{equation}

\begin{lemma} \label{lem_rel_greens}
Let $(\zeta; \gamma, \lambda) \in \mathcal U$. Then $\mathfrak G_i[\zeta; \gamma, \lambda]$ is axially symmetric and even in the $x^n$-coordinate (also referred to as $z$-coordinate) for all $i=1,\dots,6$.
\end{lemma}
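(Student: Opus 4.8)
The plan is to verify the two asserted properties — axial symmetry and evenness in $z$ — separately for each of the four types of operators $\mathfrak G_i$ appearing in \eqref{def_g15}--\eqref{def_g46}, exploiting the fact that the source data $\mathfrak g_i[\zeta;\gamma]$ and $\mathfrak M_i[\zeta;\gamma,\lambda]$ inherit these symmetries from $\zeta$. First I would record the elementary observation (cf.\ Remark \ref{rem_even}) that, since every component of $\zeta$ is axially symmetric and even in $z$ by definition of $\mathcal X$, the composed functions $\mathfrak g_i[\zeta;\gamma](\varrho,z) = g_i^{(\gamma)}(\varrho,\zeta,\zeta_{,\varrho},\zeta_{,z})$ and $\mathfrak M_i[\zeta;\gamma,\lambda](\varrho,z) = M_i^{(\gamma,\lambda)}(\varrho,\zeta(\varrho,z))$ are again axially symmetric, and that they are even in $z$: this uses that $\zeta$ is even in $z$, so $\zeta_{,\varrho}$ is even and $\zeta_{,z}$ is odd in $z$, and one checks that each monomial in the explicit formulas \eqref{for_g_1}--\eqref{for_g_6} contains an even number of $z$-derivative factors (equivalently, the source functions are built to respect the reflection $z\mapsto -z$), while the $v^3$-integrals defining $M_i^{(\gamma,\lambda)}$ are manifestly $z$-independent once $\zeta$ is evaluated at $(\varrho,z)$ with $\zeta$ even.

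Next I would handle the Newtonian-potential operators $G_n[f]$ and $\hat G_n[f]$. The key point is a change-of-variables argument: if $f:\mathbb R^n\to\mathbb R$ is axially symmetric about the $x^n$-axis, then for any rotation $O$ fixing that axis one has, by rotating the integration variable $y\mapsto Oy$ and using $|Ox - Oy| = |x-y|$ together with $f(Oy)=f(y)$, that $G_n[f](Ox) = G_n[f](x)$; hence $G_n[f]$ is axially symmetric. Likewise, writing $S$ for the reflection $x^n\mapsto -x^n$, the same substitution $y\mapsto Sy$ together with $|Sx-Sy|=|x-y|$ and $f(Sy)=f(y)$ gives $G_n[f](Sx)=G_n[f](x)$, so $G_n[f]$ is even in $x^n$. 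The identical argument applies to $\hat G_n$, since the subtracted term $|y|^{-(n-2)}$ is itself rotation- and reflection-invariant under $y\mapsto Oy$ and $y\mapsto Sy$. Combining this with the first paragraph gives the claim for $\mathfrak G_1,\mathfrak G_5$ (sums of a $G_3$ and a $\hat G_3$ term), for $\mathfrak G_2$ (a single $G_4$ term), and for $\mathfrak G_4,\mathfrak G_6$ (single $G_5$ terms). One should note that the operators are applied to functions of $(\varrho,z)$ regarded as axially symmetric functions on $\mathbb R^3$, $\mathbb R^4$, $\mathbb R^5$ respectively, so the convergence of the integrals is part of the well-definedness established in Section \ref{sect_well_defined}, which I may assume here.

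Finally, for $\mathfrak G_3$ defined in \eqref{def_g3} the reasoning is different since it is given by a one-dimensional $\varrho$-integral of $\mathfrak g_3$ plus the boundary term $\ln(1+h(0,z))$. Axial symmetry is immediate because the right-hand side depends only on $\varrho$ and $z$. For evenness in $z$: the boundary term $\ln(1+h(0,z))$ is even because $h$ is even in $z$, and $\int_0^\varrho \mathfrak g_3[\zeta;\gamma](s,z)\,ds$ is even in $z$ because, as noted above, $\mathfrak g_3(s,z)$ is even in $z$ for each fixed $s$ (again every term in \eqref{for_g_3} carries an even number of $z$-derivatives), so the $z\mapsto -z$ reflection passes under the integral. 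I expect the only mildly delicate point — and hence the main thing to be careful about — to be the bookkeeping in the first paragraph: one must confirm for each of $g_1^{(\gamma)},g_3^{(\gamma)},g_4^{(\gamma)},g_5^{(\gamma)},g_6^{(\gamma)}$ that no term breaks the $z$-parity, i.e.\ that $z$-derivatives always occur in even-degree combinations; this is guaranteed by construction (the reduced system was derived from the reflection-symmetric metric and potential ansätze), but it is the step where a term-by-term check is genuinely needed rather than a one-line invariance argument.
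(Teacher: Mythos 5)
Your proposal is correct and follows essentially the same route as the paper: the paper likewise observes that $\mathfrak g_i$ and $\mathfrak M_i$ inherit axial symmetry and $z$-evenness from $\zeta$, and then establishes the symmetries of $G_n[f]$ (and $\hat G_n[f]$) by the same change-of-variables argument in the Green's function integral, i.e.\ $G_n[f](A\cdot(x^1,\dots,x^{n-1})^\intercal,-x^n)=G_n[f](x)$ for $A\in SO(n-1)$. Your additional explicit treatment of $\mathfrak G_3$ and the term-by-term $z$-parity check of the source functions only spell out details the paper leaves implicit.
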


\begin{proof}
Clearly $\mathfrak g_i[\zeta; \gamma]$ and $\mathfrak M_i[\zeta; \gamma, \lambda]$ are axially symmetric and even in $z$ if $\zeta$ is. Consider the following prototype term. Let $f:\mathbb R^n \to \mathbb R^n$ be an axially symmetric function that is even in $x^n=z$. One can check straight forwardly that $G_n[f]$ is axially symmetric and even in $z$ by performing and appropriate change of variables in the integral, i.e.~we have for $A\in SO(n-1)$
$$
G_n[f](A \cdot (x^1, \dots, x^{n-1})^\intercal, -x^n) = G_n[f](x).
$$
\end{proof}

\begin{rem}
The operators $\mathfrak G_1$ and $\mathfrak G_5$ have been defined such that the Fr\'echet derivative of $\mathfrak F$ with respect to $\nu$, $A_t$, at $(\zeta_0;0,0)$ is zero at $(\varrho,z)=0$. Observe the $\hat G$ in equation (\ref{def_g15}). This property is important in the proof that the Fr\'echet derivative at $(\zeta_0; 0, 0)$ is a bijection, cf.~Lemma \ref{lem_bijection} below.
\end{rem}

\begin{prop} \label{prop_consistent}
Let $\zeta\in\mathcal X$ and $(\gamma,\lambda) \in [0,\delta) \times (-\delta,\delta)$. Then $\mathfrak F[\zeta; \gamma, \lambda] = 0$ if and only if $\zeta$ restricted to $\{\varrho \geq 0\}$ is a solution of the reduced EVM-system (\ref{final_eq_nu})--(\ref{final_eq_a}) with parameters $\gamma$, $\lambda$.
\end{prop}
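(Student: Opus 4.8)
The plan is to prove both implications of the equivalence by unwinding the definitions of the operators $\mathfrak F$, $\mathfrak G_i$, $\mathfrak g_i$, $\mathfrak M_i$ and exploiting the defining property of the Green's function of $\Delta_n$, namely that $G_n[\mathfrak g]$ solves $-\Delta_n u = \mathfrak g$ (here with the sign conventions of the reduced system, where the equations are written $\Delta_n u = (\text{source})$, so one should be careful with the sign; note the source functions as defined in (\ref{final_eq_nu})--(\ref{final_eq_a}) already carry the sign so that $u = G_n[-(\text{source})]$, or equivalently the operators $\mathfrak G_i$ are set up with the right sign built in). First I would observe that $\mathfrak F[\zeta;\gamma,\lambda]=0$ means, componentwise, $\zeta_i = \mathfrak G_i[\zeta;\gamma,\lambda]$ for $i=1,\dots,6$, i.e.\ $\nu = \mathfrak G_1$, $h = \mathfrak G_2$, $\xi = \mathfrak G_3$, $\omega = \mathfrak G_4$, $A_t = \mathfrak G_5$, $a = \mathfrak G_6$.

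For the ``only if'' direction, assume $\zeta = \mathfrak G[\zeta;\gamma,\lambda]$. Applying $\Delta_3$ to the identity $\nu = G_3[\mathfrak g_1[\zeta;\gamma]] + \hat G_3[\mathfrak M_1[\zeta;\gamma,\lambda]]$ and using that $\hat G_3$ differs from $G_3$ only by an additive constant (in $x$), which is killed by the Laplacian, one recovers equation (\ref{final_eq_nu}); the same argument with $\Delta_4$, $\Delta_5$ gives (\ref{eq_b}), (\ref{eq_omega}), (\ref{eq_a0}), (\ref{final_eq_a}). For $\xi$, differentiating $\mathfrak G_3[\zeta;\gamma,\lambda] = \ln(1+h(0,z)) + \int_0^\varrho \mathfrak g_3[\zeta;\gamma](s,z)\,ds$ with respect to $\varrho$ recovers (\ref{eq_xi}) by the fundamental theorem of calculus, and evaluating at $\varrho=0$ gives the centre boundary condition (\ref{bc_center}). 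The boundary conditions at infinity (\ref{bc_infinity}) follow from the fact that each $\mathfrak G_i[\zeta;\gamma,\lambda]$, being a Newtonian potential of a suitably decaying/compactly supported source (the decay having been arranged in the function space $\mathcal X$ and using Lemma~\ref{lem_comp_supp} for the matter part), tends to zero at spatial infinity; since $\zeta = \mathfrak G[\zeta;\gamma,\lambda] \in \mathcal X$ this is automatic, and $A_\varrho = A_z = 0$ by Lemma~\ref{lem_no_tor}, so the missing terms in (\ref{bc_infinity}) vanish trivially. One should also invoke Lemma~\ref{lem_rel_greens} to note that the right-hand sides $\mathfrak G_i[\zeta;\gamma,\lambda]$ are automatically axially symmetric and even in $z$, consistent with membership in $\mathcal X$.

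For the ``if'' direction, suppose $\zeta\in\mathcal X$ solves the reduced EVM-system (\ref{final_eq_nu})--(\ref{final_eq_a}) with the boundary conditions. Then $\nu$ and $G_3[\mathfrak g_1[\zeta;\gamma]] + \hat G_3[\mathfrak M_1[\zeta;\gamma,\lambda]]$ both satisfy the same Poisson equation $\Delta_3 u = \mathfrak g_1 + \mathfrak M_1$ and both vanish at infinity; their difference is a harmonic function on $\mathbb R^3$ vanishing at infinity, hence zero by the Liouville-type uniqueness theorem for bounded harmonic functions — this gives $\nu = \mathfrak G_1[\zeta;\gamma,\lambda]$. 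The same uniqueness argument applies verbatim in dimensions $4$ and $5$ to $h,\omega,A_t,a$, yielding $\zeta_i = \mathfrak G_i[\zeta;\gamma,\lambda]$ for $i=1,2,4,5,6$. For $\xi$, integrating (\ref{eq_xi}) from $0$ to $\varrho$ and using the centre condition (\ref{bc_center}) as the constant of integration gives exactly $\xi = \mathfrak G_3[\zeta;\gamma,\lambda]$. Hence $\zeta = \mathfrak G[\zeta;\gamma,\lambda]$, i.e.\ $\mathfrak F[\zeta;\gamma,\lambda]=0$.

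The main obstacle is verifying that the objects are well enough behaved for these manipulations to be legitimate: that $\mathfrak g_i[\zeta;\gamma]$ and $\mathfrak M_i[\zeta;\gamma,\lambda]$ are sufficiently regular and decaying so that the Green's potentials $G_n[\cdot]$, $\hat G_n[\cdot]$ actually land in the spaces $\mathcal X_i$ with the claimed decay rates, are genuinely twice differentiable, and solve the Poisson equation in the classical (not merely distributional) sense; and that the decay is strong enough that the harmonic-difference argument in the ``if'' direction really forces the difference to vanish. This regularity/decay bookkeeping is precisely what Sections~\ref{sect_well_defined} (well-definedness) and the hierarchy-of-regularity lemma (Lemma~\ref{lem_reg}) are designed to supply, so in the proof I would cite those results rather than redo the estimates. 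A secondary subtlety, already flagged in the excerpt, is that $\xi$ is only $C^{1,\alpha}$, but since equation (\ref{eq_xi}) for $\xi$ is only first order this causes no difficulty here — the fundamental theorem of calculus is all that is needed for the $\xi$-component, and one must merely be careful not to apply a second-order operator to $\xi$.
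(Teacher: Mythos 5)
Your overall strategy is the same as the paper's (the paper's own proof is only two sentences: the $\mathfrak G_i$, $i\neq 3$, are the solution operators of the semi-linear Poisson equations, and differentiating $\mathfrak G_3$ in $\varrho$ gives (\ref{eq_xi})), and your treatment of the essential direction --- $\mathfrak F[\zeta;\gamma,\lambda]=0$ implies the equations hold, obtained by applying $\Delta_n$ to the fixed-point identities and the fundamental theorem of calculus to the $\xi$-component --- is correct, granting the regularity and decay results of Section \ref{sect_well_defined} as you do.

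There is, however, a concrete flaw in how you handle the components $\nu$ and $A_t$, stemming from the normalisation of $\hat G_3$. Since $\hat G^3_y(x)=\tfrac{1}{4\pi}\bigl(|x-y|^{-1}-|y|^{-1}\bigr)$, the operators $\mathfrak G_1$ and $\mathfrak G_5$ differ from the genuine Newtonian potential $G_3[\mathfrak g_i+\mathfrak M_i]$ by the constant $\tfrac{1}{4\pi}\int \mathfrak M_i(y)|y|^{-1}\,\mathrm dy$; they are normalised so that the matter contribution vanishes at the \emph{origin}, not at infinity. Consequently $\mathfrak G_1[\zeta;\gamma,\lambda]$ and $\mathfrak G_5[\zeta;\gamma,\lambda]$ do \emph{not} tend to zero at spatial infinity in general --- the paper records precisely this in Lemma \ref{lem_sol_decay} (the limits $\nu_\infty^{\gamma,\lambda}$, $A_\infty^{\gamma,\lambda}$) and has to remove these constants by the rescaling in Proposition \ref{prop_appli}. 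This invalidates two of your steps: (i) your claim in the ``only if'' direction that the boundary conditions (\ref{bc_infinity}) for $\nu$ and $A_t$ follow automatically from $\zeta=\mathfrak G[\zeta;\gamma,\lambda]$ (they do not; note also that the proposition only asserts equivalence with the equations (\ref{final_eq_nu})--(\ref{final_eq_a}), not with (\ref{bc_infinity})); and (ii) the Liouville step in your ``if'' direction, where you subtract $\nu$ and $\mathfrak G_1[\zeta;\gamma,\lambda]$ and argue that the harmonic difference vanishes because ``both vanish at infinity'' --- only one of them does, so the difference is a generally non-zero constant. For the converse direction one must either normalise differently or observe that a decaying solution of (\ref{final_eq_nu}) agrees with $\mathfrak G_1[\zeta;\gamma,\lambda]$ only up to the additive constant fixed by $\hat G_3[\mathfrak M_1](0)=0$. (Your Liouville argument is fine for $h$, $\omega$, $a$, where plain $G_4$, $G_5$ of compactly supported or sufficiently decaying sources are used and the spaces $\mathcal X_2$, $\mathcal X_4$ together with the boundary conditions give decay.) Finally, a cosmetic point: Lemma \ref{lem_no_tor} yields only $A_{\varrho,z}-A_{z,\varrho}=0$, not $A_\varrho=A_z=0$, but since the toroidal equations are excluded from the reduced system by Definition \ref{def_red_evm} this does not affect the proposition.
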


\begin{proof}
The statement is clear for $\mathfrak G_i$ and $\zeta_i$, $i=1,2,4,5,6$ since by Lemma \ref{lem_rel_greens} these operators are the solution operators to the semi-linear Poisson equations (\ref{final_eq_nu}), (\ref{eq_b}), (\ref{eq_omega})--(\ref{final_eq_a}). For the operator $\mathfrak G_3$, we observe that differentiation of $\mathfrak G_3[\zeta;\gamma,\lambda](\varrho,z)$ with respect to $\varrho$ directly yields the right hand side of the $\xi$-equation (\ref{eq_xi}).
\end{proof}

\begin{lemma} \label{lem_n_zero}
Recall $\zeta_0 = (\nu_N,0,0,0,A_N,0)$. We have $\mathfrak F[\zeta_0;0,0] = 0$.
\end{lemma}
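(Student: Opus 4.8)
The plan is to verify that each of the six components of $\mathfrak F[\zeta_0;0,0] = \zeta_0 - \mathfrak G[\zeta_0;0,0]$ vanishes, treating the components one at a time. The key observation is that setting $\gamma = 0$ kills essentially every term in the source functions $g_i^{(\gamma)}$ and in the matter functions $M_i^{(\gamma,\lambda)}$: inspecting the explicit formulas \eqref{for_g_1}--\eqref{for_g_6}, \eqref{def_hat_m_1}--\eqref{def_hat_m_6}, every term carries either an explicit factor of $\gamma$ or a factor of a component of $\zeta_0$ that is zero (namely $h$, $\xi$, $\omega$, or $a$). Since also $\lambda = 0$ forces $\psi \equiv 1$ on the support of the relevant integrand (condition (4) on $\psi$), and $E \to E_N = \tfrac12|v|^2 + \nu_N + qA_N = \tfrac12|v|^2 + U_N$ in the limit $\gamma \to 0$ (this is exactly the computation in the proof of Lemma \ref{lem_comp_supp}), the surviving matter integrals collapse to the Newtonian densities.

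Concretely, I would argue as follows. For the components $h$, $\xi$, $\omega$, $a$ (indices $2,3,4,6$): at $(\zeta_0;0,0)$ one has $\mathfrak g_i[\zeta_0;0] \equiv 0$ and $\mathfrak M_i[\zeta_0;0,0] \equiv 0$ — for $i=4,6$ this uses in addition \eqref{m4_m6_0} since $\psi \equiv 1$ is even in $L$ and $\omega = 0$; for $\mathfrak M_2$ the explicit $\gamma^2$ prefactor does the job. Hence $\mathfrak G_i[\zeta_0;0,0] = 0$ by \eqref{def_g2}, \eqref{def_g46}, while $\mathfrak G_3[\zeta_0;0,0] = \ln(1+h(0,z)) = \ln 1 = 0$ by \eqref{def_g3}. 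Since the corresponding entries of $\zeta_0$ are $0$, these four components of $\mathfrak F$ vanish. For the $\nu$-component (index $1$): $\mathfrak g_1[\zeta_0;0] \equiv 0$, and in the limit $\gamma\to 0$ the matter function $\mathfrak M_1[\zeta_0;0,0]$ reduces, via $v^0 = 1$, $\xi = \nu = h = 0$ at $\gamma=0$ and $E\to E_N$, to $4\pi \int_{\mathbb R^3_v} \phi(\tfrac12|v|^2 + U_N)\,\mathrm d^3 v = 4\pi \rho_N = 4\pi\rho_{\nu_N + qA_N}$. Thus $\mathfrak G_1[\zeta_0;0,0] = \hat G_3[4\pi\rho_{\nu_N+qA_N}]$, and by condition (2) together with the normalization built into $\hat G_3$ (which enforces the value $0$ at the origin), this equals $\nu_N$ solving $\Delta\nu_N = 4\pi\rho_{\nu_N+qA_N}$, $\nu_N(0)=0$ — i.e.\ \eqref{newt_1}. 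Similarly for the $A_t$-component (index $5$): $\mathfrak g_5[\zeta_0;0]\equiv 0$ and $\mathfrak M_5[\zeta_0;0,0]$ reduces, using the $\omega=0$ term and $q\,e^{2\gamma\nu}\to q$, to $4\pi q\rho_N = 4\pi q\rho_{\nu_N+qA_N}$, whence $\mathfrak G_5[\zeta_0;0,0] = \hat G_3[4\pi q\rho_{\nu_N+qA_N}] = -A_N$, using $\Delta A_N = -4\pi q\rho_{\nu_N+qA_N}$, $A_N(0)=0$ — i.e.\ \eqref{newt_2}. Hence the $\nu$- and $A_t$-components of $\mathfrak F[\zeta_0;0,0]$ also vanish, and therefore $\mathfrak F[\zeta_0;0,0] = 0$.

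The only mild obstacle is making the $\gamma \to 0$ reduction of $\mathfrak M_1$ and $\mathfrak M_5$ rigorous rather than formal: one must justify that at $\gamma = 0$ the momentum integral with integrand containing $\tfrac{1+2\gamma|v|^2}{\sqrt{1+\gamma|v|^2}}$ (resp.\ the bracketed expression in $\mathfrak M_5$) literally evaluates to the Newtonian density, and that the energy cutoff from $\phi$ keeps everything compactly supported and finite. This is entirely analogous to the corresponding step in \cite{akr14} and relies on conditions (1)--(2) on $\phi$; once the integrand is shown to be bounded by an integrable $\gamma$-independent function on the relevant compact $v$-region, dominated convergence (or simply direct substitution of $\gamma=0$, since the expressions are continuous in $\gamma$ down to $\gamma=0$) closes the argument. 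Everything else is bookkeeping: checking term by term that each summand of each $g_i^{(0)}$ and $M_i^{(0,0)}$ vanishes at $\zeta_0$, which is immediate from the explicit formulas.
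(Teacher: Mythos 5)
Your overall strategy is the same as the paper's: set $\gamma=\lambda=0$, observe that every term of the source functions and of $\mathfrak M_2,\mathfrak M_4,\mathfrak M_6$ carries a factor of $\gamma$ or of a vanishing entry of $\zeta_0$, reduce $\mathfrak M_1$ and $\mathfrak M_5$ to Newtonian densities, and identify the resulting Green's-operator expressions with $\nu_N$ and $A_N$. The only substantive difference is cosmetic: the paper verifies the single combined identity $U_N=\mathfrak G_1[\zeta_0;0,0]+q\,\mathfrak G_5[\zeta_0;0,0]$ for $U_N=\nu_N+qA_N$ and invokes the stated equivalence between (\ref{newt_1})--(\ref{newt_2}) and (\ref{poisson_eq_un}), whereas you check the $\nu$- and $A_t$-components separately; if anything your version is the more explicit one.

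There is, however, one step in your write-up that does not close as written. You compute $\mathfrak M_5[\zeta_0;0,0]=+4\pi q\rho_N$ (taking the prefactor of (\ref{def_hat_m_5}) at face value) and then conclude $\mathfrak G_5[\zeta_0;0,0]=\hat G_3[4\pi q\rho_N]=-A_N$. If that were so, the fifth component of $\mathfrak F[\zeta_0;0,0]$ would be $A_N-(-A_N)=2A_N\neq 0$, contradicting the lemma; your sentence ``hence the $A_t$-component also vanishes'' does not follow from the line preceding it. What is actually needed is $\mathfrak M_5[\zeta_0;0,0]=-4\pi q\rho_N$, so that $\mathfrak G_5[\zeta_0;0,0]=A_N$, consistent with the target equation $\Delta A_N=-4\pi q\rho_{\nu_N+qA_N}$ in (\ref{newt_2}), with the sign of the Maxwell source $\nabla_\alpha F^{\alpha\beta}=-4\pi qJ^\beta$, and with the expression for $\mathfrak M_5$ appearing in the proof of the scaling law (which carries the minus sign). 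The paper's displayed formula (\ref{def_hat_m_5}) is internally inconsistent on this point, and you have propagated the wrong sign and then silently flipped the conclusion. Fix the sign of $\mathfrak M_5[\zeta_0;0,0]$ (or, equivalently, track the sign convention of $\hat G_3$ consistently between the $\nu$- and $A_t$-components) and the argument is complete.
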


\begin{proof}
We adopt the notation $\rho_N := \rho_{U_N}$, $\alpha_N := \alpha_{U_N}$. The Einstein equations (\ref{eq_b})--(\ref{eq_omega}) for $h$,  $\xi$, and $\omega$ are trivially satisfied for $\zeta=\zeta_0$. So it remains to consider equation (\ref{final_eq_nu}) for $\nu$. The source function $\mathfrak g_1[\zeta_0; 0, 0]$ is zero. For the matter function $\mathfrak M_1$ a calculation yields $\mathfrak M_1[\zeta_0; 0, 0](\varrho,z) = 4\pi \rho_N(r)$, where $r=\sqrt{\varrho^2 + z^2}$. This is the energy density induced by the ansatz (\ref{ansatz_f}) in the Newtonian case. \par 
We see that the Maxwell equation (\ref{final_eq_a}) for $a$ is satisfied with $\gamma=0$ and $a\equiv \omega\equiv h \equiv 0$. Concerning the Maxwell equation (\ref{eq_a0}) for $A_t$, we see that it reduces to
\begin{equation}
\Delta_3 A_t = -4\pi q \rho_N(r).
\end{equation}
 So $U_N = \nu_N + qA_N$ solves the Poisson equation
\begin{equation}
\Delta U_N(r) = 4\pi (1-q^2) \rho_N(r).
\end{equation}
Note also that we are using the assumption $\psi(0,L)=1$. So we actually obtain
\begin{equation}
U_N(r) = \mathfrak G_1[\zeta_0;0,0](\varrho,z) + q \mathfrak G_5[\zeta_0;0,0](\varrho,z)
\end{equation}
and the assertion follows.
\end{proof}

\section{$\mathfrak F$ is well defined} \label{sect_well_defined}

We have to verify that for all $(\zeta; \gamma, \lambda) \in \mathcal U$ the functions $\mathfrak G_i[\zeta; \gamma, \lambda]$ satisfy the regularity conditions and the decay behaviour stated in the definition of $\mathcal X$, for $i=1,\dots,6$. \par

Before we prove the regularity properties of $\mathfrak G[\zeta;  \gamma, \lambda]$ we collect a few facts on axially symmetric functions, proven in \cite{akr11} and \cite{akr14}.

\begin{lemma} (Lemma 7.1 in \cite{akr14}) \label{lem_axially} \\
Let $u:\mathbb R^n \to \mathbb R$ be axially symmetric and $u(x) = \tilde u(\varrho,z)$ where $\tilde u:[0,\infty)\times \mathbb R \to \mathbb R$. Let $k\in \{1,2,3\}$ and $\alpha \in (0,1)$. Then
\begin{enumerate}
\item $u\in C^k(\mathbb R^n) \Leftrightarrow \tilde u \in C^k([0,\infty) \times \mathbb R)$ and all derivatives of $\tilde u$ of order up to $k$ which are of odd order in $\varrho$ vanish for $\varrho = 0$,
\item $u\in C^{0,\alpha}(\mathbb R^n) \Leftrightarrow \tilde u \in C^{0,\alpha}([0,\infty) \times \mathbb R)$.
\end{enumerate}
\end{lemma}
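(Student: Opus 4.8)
The statement to prove is Lemma~\ref{lem_axially}, the characterisation of the regularity of an axially symmetric function $u$ on $\mathbb R^n$ in terms of its representative $\tilde u$ on $[0,\infty)\times\mathbb R$. Note, however, that this lemma is explicitly attributed to \cite[Lemma 7.1]{akr14}, so the ``proof'' at this point in the paper will almost certainly just be a pointer to that reference. Nonetheless, here is how I would argue it directly.

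The plan is to treat the two implications separately and reduce everything to the one-variable fact that a function $w$ on $[0,\infty)$ extends to a $C^k$ even function on $\mathbb R$ if and only if $w\in C^k([0,\infty))$ and all odd-order derivatives of $w$ up to order $k$ vanish at $0$ (this is the classical Whitney-type even-extension criterion, provable by Taylor expansion with integral remainder). The ``$\Leftarrow$'' direction is the substantive one: given $\tilde u\in C^k([0,\infty)\times\mathbb R)$ with the stated vanishing of odd-$\varrho$-order derivatives at $\varrho=0$, I want to show $u(x)=\tilde u(\sqrt{(x^1)^2+\dots+(x^{n-1})^2},x^n)$ is $C^k$ on $\mathbb R^n$. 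Away from the axis $\{\varrho=0\}$ this is immediate from the chain rule since $\varrho$ is a smooth positive function of $(x^1,\dots,x^{n-1})$ there. The issue is regularity across the axis. Here I would first extend $\tilde u$ to an even function of $\varrho$ on $\mathbb R\times\mathbb R$; the vanishing-of-odd-derivatives hypothesis is exactly what guarantees this extension is $C^k$ jointly (apply the one-variable criterion in the $\varrho$-slot, locally uniformly in $z$, and check that difference quotients of the extended function converge with the right continuity — a standard argument). Then one observes that $u = (\text{even extension of }\tilde u)\circ\Phi$ where $\Phi(x^1,\dots,x^n) = (\varrho(x^1,\dots,x^{n-1}),x^n)$, and that even $C^k$ functions of $\varrho$ are precisely $C^k$ functions of $\varrho^2=(x^1)^2+\dots+(x^{n-1})^2$; since $\varrho^2$ is a polynomial in the $x^i$, composing a $C^k$ function with it stays $C^k$. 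That gives $u\in C^k(\mathbb R^n)$.

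For the ``$\Rightarrow$'' direction of (1): if $u\in C^k(\mathbb R^n)$ then restricting to the half-plane $\{x^2=\dots=x^{n-1}=0,\ x^1\geq 0\}$ gives $\tilde u\in C^k([0,\infty)\times\mathbb R)$ trivially, and the vanishing of odd-$\varrho$-order derivatives at $\varrho=0$ follows because such a derivative, say $\partial_\varrho^{2j+1}\tilde u$, is an odd function of $x^1$ along that ray (as $\tilde u$ depends on $x^1$ only through $|x^1|$ near the axis after the even reasoning), while also being continuous — an odd continuous function vanishes at $0$. One has to be a little careful to phrase this in terms of the actual Cartesian derivatives of $u$ and the chain rule, expressing $\partial_\varrho^m\tilde u|_{\varrho=0}$ as a combination of $\partial_{x^1}^m u$ and lower-order terms evaluated on the axis; the parity bookkeeping is the only content. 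Part (2), the Hölder statement, is easier and essentially formal: $|\varrho(x)-\varrho(y)|\leq |(x^1,\dots,x^{n-1})-(y^1,\dots,y^{n-1})|$ (reverse triangle inequality for the Euclidean norm), so $\Phi$ is Lipschitz, hence $u=\tilde u\circ\Phi$ inherits $C^{0,\alpha}$ from $\tilde u$; conversely restriction to the axis ray recovers the $C^{0,\alpha}$-seminorm of $\tilde u$, possibly after the even reflection, which also preserves Hölder seminorms.

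The main obstacle is the joint-continuity-across-the-axis argument in the ``$\Leftarrow$'' direction of (1): one cannot simply differentiate the composition naively at $\varrho=0$ because $\varrho$ is not smooth there, so the whole point is to show the even extension in $\varrho$ is genuinely $C^k$ as a function of two variables and then exploit that $C^k$-even-in-$\varrho$ equals $C^k$-in-$\varrho^2$. I expect that in the paper this is simply cited as \cite[Lemma 7.1]{akr14} (the lemma is even tagged as such), so the displayed proof will be a one-line reference rather than the argument sketched above.
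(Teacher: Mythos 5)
You correctly predicted that the paper supplies no proof here: the lemma is simply recalled from \cite{akr14}, so only your direct argument needs assessment. The ``$\Rightarrow$'' direction of (1) and all of (2) are fine. The genuine gap is the pivotal step in the ``$\Leftarrow$'' direction of (1): the claim that ``even $C^k$ functions of $\varrho$ are precisely $C^k$ functions of $\varrho^2$'' is false for finite $k$. The Whitney even-function theorem costs a factor of two in regularity -- an even $C^{2k}$ function of $\varrho$ is a $C^k$ function of $\varrho^2$, and in general no better. Concretely, $f(\varrho)=|\varrho|^{3}$ is even and $C^{2}$, but $f(\varrho)=g(\varrho^{2})$ forces $g(s)=s^{3/2}$, which is only $C^{1}$ near $s=0$; likewise $|\varrho|^{3/2}$ is even and $C^{1}$ while $s\mapsto s^{3/4}$ is not $C^{1}$ at $0$. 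So from $\tilde u\in C^{k}$ your factorisation $u=g(\varrho^{2},z)$ only yields $g\in C^{\lfloor k/2\rfloor}$, and composing a $C^{\lfloor k/2\rfloor}$ function with the polynomial $\varrho^{2}$ gives roughly half the claimed regularity. The lemma's conclusion is still true in these examples (the function $\varrho^{3}$, viewed on $\mathbb R^{n}$, is indeed $C^{2}$), but only because the composition with $\varrho^{2}$ \emph{regains} regularity at its critical set $\{\varrho=0\}$ -- and that regaining is precisely what has to be proved; your reduction bypasses it by asserting a false equivalence.

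The standard repair is to drop the $\varrho^{2}$ shortcut and argue directly across the axis: for $\varrho>0$ write each Cartesian derivative $\partial^{\sigma}u$, $|\sigma|\le k$, via the chain rule as a finite sum of terms of the form $P\!\left(x^{1}/\varrho,\dots,x^{n-1}/\varrho\right)\varrho^{-m}\,\partial_{\varrho}^{i}\partial_{z}^{j}\tilde u(\varrho,z)$ with $P$ a polynomial and $m\ge 0$, then Taylor-expand $\partial_{\varrho}^{i}\partial_{z}^{j}\tilde u$ in $\varrho$ at $\varrho=0$; the hypothesis that all odd-in-$\varrho$ derivatives up to order $k$ vanish on the axis kills exactly the offending low-order monomials, so each term extends continuously to $\varrho=0$, and one concludes $u\in C^{k}$ by the usual argument that continuous extension of all derivatives implies differentiability. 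Your even-extension step (the one-variable Whitney criterion applied with $z$ as a parameter) is correct and useful for organising this computation, but it must be followed by the term-by-term estimate, not by the $C^{k}$-in-$\varrho^{2}$ claim.
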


\begin{lemma} (Lemma 3.2 in \cite{akr11}) \label{lem_axially_2}\\
Let $\varphi=\varphi(\varrho,z)\in C^4(\mathbb R^2)$ be odd in $\varrho$ and define
\begin{equation}
\zeta(\varrho,z) := \left\{\begin{array}{ll} \varphi(\varrho,z)/\varrho, & \varrho \neq 0, \\ \partial_\varrho \varphi(0,z), & \varrho = 0.\end{array}\right.
\end{equation}
Then $\zeta\in C^3(\mathbb R^2)$ and all derivatives of $\zeta$ up to order $3$ which are of odd oder in $\varrho$ vanish for $\varrho = 0$. By abuse of notation, $\zeta\in C^3(\mathbb R^3)$.
\end{lemma}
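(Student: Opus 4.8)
The plan is to represent $\zeta$ by an integral that is manifestly regular and has the right parity, which reduces the whole statement to differentiation under the integral sign. Since $\varphi$ is odd in $\varrho$ we have $\varphi(0,z)=0$, so by the fundamental theorem of calculus $\varphi(\varrho,z)=\int_0^\varrho \partial_\varrho\varphi(s,z)\,\mathrm ds$ for all $(\varrho,z)\in\mathbb R^2$ (including $\varrho<0$). Substituting $s=t\varrho$ and dividing by $\varrho$ gives
\[
\zeta(\varrho,z)=\int_0^1 \partial_\varrho\varphi(t\varrho,z)\,\mathrm dt
\]
for every $(\varrho,z)\in\mathbb R^2$, where the value at $\varrho=0$ equals $\partial_\varrho\varphi(0,z)$, exactly as in the definition of $\zeta$.

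From this representation the regularity is immediate. The integrand $(\varrho,z,t)\mapsto \partial_\varrho\varphi(t\varrho,z)$ is $C^3$ in $(\varrho,z)$ because $\varphi\in C^4$, and its partial derivatives up to order $3$ are continuous in all variables, hence locally bounded uniformly in $t\in[0,1]$. Therefore one may differentiate under the integral sign up to three times and conclude $\zeta\in C^3(\mathbb R^2)$, with the derivatives given by the corresponding integrals of $t^j\,\partial_\varrho^{j+k}\partial_z^k\varphi(t\varrho,z)$.

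For the vanishing of the odd-order $\varrho$-derivatives at the axis, note that $\partial_\varrho\varphi$ is even in $\varrho$ (the $\varrho$-derivative of an odd function), so $\varrho\mapsto\partial_\varrho\varphi(t\varrho,z)$ is even in $\varrho$ and hence so is $\zeta$, i.e.\ $\zeta(-\varrho,z)=\zeta(\varrho,z)$. Differentiating this identity $j$ times in $\varrho$ with $j$ odd and $j\le 3$ and evaluating at $\varrho=0$ forces $\partial_\varrho^j\partial_z^k\zeta(0,z)=0$ for all $k$ with $j+k\le 3$. Finally, combining that $\zeta\in C^3(\mathbb R^2)$ is even in $\varrho$ with vanishing odd-order $\varrho$-derivatives at $\varrho=0$, Lemma \ref{lem_axially}(1) shows that the axially symmetric function on $\mathbb R^3$ induced by $\zeta$ is again $C^3$, which justifies the abuse of notation $\zeta\in C^3(\mathbb R^3)$.

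There is no genuine obstacle here; the only point demanding a little care is the interchange of differentiation and integration, which is standard given the $C^4$ hypothesis, together with the parity bookkeeping that turns evenness of $\zeta$ in $\varrho$ into the stated vanishing of odd-order derivatives.
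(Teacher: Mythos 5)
Your proof is correct and is essentially the standard Hadamard-type argument via the representation $\zeta(\varrho,z)=\int_0^1 \partial_\varrho\varphi(t\varrho,z)\,\mathrm dt$, which is the same route taken in the cited reference (the present paper only quotes the lemma from \cite{akr11} without reproving it). The regularity via differentiation under the integral and the parity argument for the vanishing of odd-order $\varrho$-derivatives are both sound.
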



Next we establish regularity of the matter functions.

\begin{lemma}
Let $(\zeta; \gamma, \lambda) \in \mathcal U$. Then the functions $\mathfrak M_i[\zeta; \gamma, \lambda]$, $i=1,2,4,5,6$, if extended to negative values of $\varrho$ and thus seen as functions on $\mathbb R^2$, are even in $\varrho$.
\end{lemma}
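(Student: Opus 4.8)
The plan is to reduce the claim to a single symmetry of the integrands, namely invariance under the simultaneous reflection $\varrho \mapsto -\varrho$, $v^3 \mapsto -v^3$. First I would record that, by Remark~\ref{rem_even}, each component of $\zeta = (\nu,h,\xi,\omega,A_t,a)$ is, when extended to $\varrho<0$, even in $\varrho$; hence so is $A_\varphi = \varrho^2 a$. By the definition (\ref{def_all_ms}) one has $\mathfrak M_i[\zeta;\gamma,\lambda](-\varrho,z) = M_i^{(\gamma,\lambda)}(-\varrho,\zeta(\varrho,z))$, so it suffices to prove that, with the \emph{values} of the solution functions held fixed, the map $\varrho \mapsto M_i^{(\gamma,\lambda)}(\varrho,\cdot)$ is even, where now $\varrho$ ranges only over the explicit occurrences of the coordinate in the prefactors of (\ref{def_hat_m_1})--(\ref{def_hat_m_6}) and in the formulas (\ref{formula_e}), (\ref{formula_l}) for $E$ and $L$.

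Next I would check the symmetry of $E$ and $L$. Since $v^0 = \sqrt{1+\gamma|v|^2}$ depends only on $|v|$ by (\ref{mass_shell_frame}), and since in $L = \varrho(1+h)e^{-\gamma\nu}v^3 - q\varrho^2 a$ and $E = \gamma^{-1}(e^{\gamma\nu}v^0 - 1) + \omega\varrho(1+h)e^{-\gamma\nu}v^3 + qA_t$ the coordinate $\varrho$ enters the $v^3$-linear terms with exactly one power of $v^3$ (the term $q\varrho^2 a$ being even in $\varrho$ on its own), both $E$ and $L$ --- and therefore $\phi(E)\psi(\lambda,L)$ --- are invariant under the simultaneous replacement $\varrho\to-\varrho$, $v^3\to-v^3$. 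Performing the change of variables $v^3\to-v^3$ in the integral (a measure-preserving bijection of $\mathbb R$), it then remains only to account for the parities of the prefactors and of the explicit momentum factors in the integrands. The prefactors of $M_1,M_2,M_5$ contain no explicit $\varrho$ and are built from the even solution functions, hence even; those of $M_4,M_6$ carry one factor $1/\varrho$ and are therefore odd. The extra integrand factors are $\tfrac{1+2\gamma|v|^2}{v^0}$ and $\tfrac{(v^1)^2+(v^2)^2}{v^0}$ for $M_1,M_2$ (even in $v^3$), a single factor $v^3$ for $M_4$ and $\tfrac{v^3}{v^0}$ for $M_6$ (odd in $v^3$), and $e^{2\gamma\nu} + \tfrac{\gamma\varrho(1+h)\omega v^3}{v^0}$ for $M_5$, whose $\varrho$-dependent piece is again of the form $\varrho v^3$ and so invariant under the combined reflection. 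Hence for $i=1,2,5$ the full integrand times prefactor is unchanged, while for $i=4,6$ it picks up one minus sign from the $1/\varrho$ prefactor and a compensating minus sign from the odd $v^3$-factor; in every case $M_i^{(\gamma,\lambda)}(-\varrho,\cdot) = M_i^{(\gamma,\lambda)}(\varrho,\cdot)$, and combined with the first paragraph this gives that $\mathfrak M_i[\zeta;\gamma,\lambda]$ is even in $\varrho$. All manipulations are legitimate: on $\mathcal U$ one has $1+h>1/2$, and $\phi,\psi$ are compactly supported, so every integral converges and the prefactors are finite for $\varrho\neq 0$.

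The computation is essentially bookkeeping; the one point that needs a little care --- and which I would present explicitly term by term --- is the matching, in each $M_i$, of the parity in $\varrho$ of the prefactor with the parity in $v^3$ of the accompanying explicit momentum factor, in particular for the $1/\varrho$-weighted functions $M_4$ and $M_6$ and for the $\varrho$-containing integrand term of $M_5$. Once the distinction between ``explicit'' $\varrho$ and the $\varrho$ hidden inside the even solution functions is set up cleanly, there is no further obstacle.
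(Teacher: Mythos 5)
Your argument is correct and is essentially the paper's own: both rest on the observation that $\phi(E)\psi(\lambda,L)$ is invariant under the simultaneous reflection $\varrho\mapsto-\varrho$, $v^3\mapsto-v^3$, followed by matching the parity in $\varrho$ of the prefactors against the parity in $v^3$ of the explicit momentum factors (in particular for the $1/\varrho$-weighted $M_4$, $M_6$ and the $\varrho v^3$-term of $M_5$). The only difference is presentational --- the paper first passes to the variables $\eta$, $s=(1+h)e^{-\gamma\nu}v^3$ and phrases the reflection as $s\mapsto-s$ in the auxiliary operator $M_{(\gamma,\lambda)}[g]$, a representation it needs anyway for the later regularity analysis --- whereas you perform the equivalent substitution $v^3\mapsto-v^3$ directly in the original integrals.
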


\begin{proof}
That the matter functions are even in $\varrho$ has already been observed in \cite{akr14} and the new matter functions $M_5^{(\gamma,\lambda)}$ and $M_6^{(\gamma,\lambda)}$ can be treated with the same ideas. We perform in the integrals of the formulas (\ref{def_hat_m_1})--(\ref{def_hat_m_6}) for the matter functions $M_i^{(\gamma, \lambda)}$, $i = 1,2,4,5,6$, a change of variables, given by 
\begin{equation}
\eta = \frac{e^{\gamma\nu}\sqrt{1+\gamma |v|^2} -1}{\gamma}, \quad s = (1+h) e^{-\gamma\nu} v^3.
\end{equation}
Let
\begin{equation}
m(\eta, h, \nu) := (1+h)e^{-\gamma \nu} \sqrt{\frac{e^{-2\gamma \nu}(\gamma \eta + 1)^2-1}{\gamma}}.
\end{equation}
Then the domain of integration can be parameterised by $\eta \in ((e^{\gamma\nu}-1)/\gamma,\infty)$, $s\in(-m,m)$. Further, for a function $g = g(s, \eta, h, \nu, \varrho\omega)$, which will be chosen among the choices
$$
1 + 4\gamma\eta+2\gamma^2 \eta^2, \quad m^2-s^2, \quad s(1+\gamma\eta), \quad s, \quad 1+\gamma\eta + \gamma\omega\varrho s,
$$
we define $M_{(\gamma, \lambda)}$ to be the operator which assigns to $g$ the function
\begin{equation}
\begin{aligned} 
&M_{(\gamma, \lambda)}[g] : \mathbb R^2 \times \left(- \frac 12, \infty\right) \times \mathbb R^3 \to \mathbb R, \\
&(\varrho, \nu, h, \omega, A_t, a) \mapsto M_{(\gamma, \lambda)}[g](\varrho, \nu, h, \omega, A_t, a) \label{def_capital_m} \\
&\hspace{1cm} = \int_{\frac{e^{\gamma\nu} - 1}{\gamma}}^\infty \int_{-m(\eta, h, \nu)}^{m(\eta, h, \nu)} \phi(\eta + \varrho\omega s + q A_t) \psi(\lambda, \varrho s - q \varrho^2 a)\, g(s, \eta, h, \nu, \varrho\omega)\, \mathrm ds \mathrm d\eta.
\end{aligned}
\end{equation}
The range $(-1/2,\infty)$ of $h$ is motivated by the definition of the set $\mathcal U$ of functions that we consider. Then the matter functions $M_i^{(\gamma,\lambda)}$ can be written in the form
\begin{align}
M_1^{(\gamma,\lambda)}(\varrho, \zeta) &= \frac{8\pi^2}{1+h} e^{2\xi - 4\gamma\nu} M_{(\gamma, \lambda)}\left[ 2(1+\gamma\eta)^2 - e^{2\gamma\nu} \right](\varrho, \nu, h, A_t, a), \label{m1_mg} \\
M_2^{(\gamma,\lambda)}(\varrho, \zeta) &= \frac{16 \pi^2 \gamma^2}{(1+h)^2} e^{2\xi}  M_{(\gamma, \lambda)}\left[m^2 - s^2\right](\varrho, \nu, h, A_t, a), \label{m2_mg}\\
M_4^{(\gamma,\lambda)}(\varrho, \zeta) &= -\frac{32 \pi^2 \gamma}{\varrho (1+h)^3}e^{2\xi} M_{(\gamma, \lambda)}[s(1 + \gamma\eta)](\varrho, \nu, h, A_t, a), \label{m4_mg}  \\
M_5^{(\gamma,\lambda)}(\varrho, \zeta) &= \frac{8\pi^2 q}{1+h} e^{2(\xi-\gamma\nu)} M_{(\gamma, \lambda)}\left[1 + \gamma\eta + \gamma\omega \varrho s\right](\varrho, \nu, h, A_t, a), \label{m5_mg} \\
M_6^{(\gamma,\lambda)}(\varrho, \zeta) &= -\frac{8\pi^2 q \gamma}{\varrho(1+h)} e^{2(\xi-\gamma\nu)} M_{(\gamma, \lambda)}[s](\varrho, \nu, h, A_t, a). \label{m6_mg}
\end{align}
Given these representations (\ref{m1_mg})--(\ref{m6_mg}) of the matter functions we observe the following fact. If $g(s, \eta, h, \nu, \varrho\omega)$ is even or odd in $s$ then $M_{(\gamma,\lambda)}[g](\varrho, \nu, h, A_t,  a)$ is even or odd in $\varrho$, respectively. To see this we substitute $-\varrho$ for $\varrho$ in the formula (\ref{def_capital_m}) for $M_{(\gamma,\lambda)}[g](\varrho, \nu, h, \omega, A_t,  a)$ and make then the change of variables $s \to \hat s = -s$. If $g$ is even in $s$ we obtain the same expression as for ``$+\varrho$'', whereas if $g$ is odd in $s$ we obtain its negative. \par
Then we observe that $\mathfrak M_i[\zeta; \gamma, \lambda]$ is even in $\varrho$ for all $i\in\{1,2,4,5,6\}$. Consider for example $\mathfrak M_5[\zeta; \gamma, \lambda]$, given by
\begin{align*}
&\mathfrak M_5[\zeta; \gamma, \lambda](\varrho, z) \\
&\qquad = \frac{8\pi^2 q}{1+h(\varrho, z)} e^{2(\xi-\gamma\nu)(\varrho, z)} M_{(\gamma, \lambda)}\left[1 + \gamma\eta\right](\varrho, \nu(\varrho, z), h(\varrho, z), A_t(\varrho, z), a(\varrho, z)) \\
&\qquad \quad + \frac{8\pi^2 q \gamma \omega(\varrho, z)  \varrho }{1+h(\varrho, z)} e^{2(\xi-\gamma\nu)(\varrho, z)}M_{(\gamma, \lambda)}\left[s\right](\varrho, \nu(\varrho, z), h(\varrho, z), A_t(\varrho, z), a(\varrho, z)).
\end{align*}
Here we view $\zeta \in \mathcal X$ as even functions in $\varrho$, cf.~Remark \ref{rem_even}. By the observation on $M_{(\gamma, \lambda)}$ which is mentioned above the first term is a product of functions that are even in $\varrho$. For the second term we observe that the fraction is odd in $\varrho$ since it contains $\varrho$ as explicit factor. The second factor is also odd in $\varrho$ by the upper observation. So in total the second term is even in $\varrho$.
\end{proof}

\begin{lemma} (Regularity of the matter functions) \label{lem_reg_m} \\
Let $\phi \in C_c^\kappa(\mathbb R), \psi\in C^\infty_c(\mathbb R^2)$, and $\gamma \in [0,1]$, $\lambda\in [-1/2,1/2]$, where $\kappa \geq 1$. Further, let $g\in C^\sigma(\mathbb R^5)$, for $\sigma \geq 1$. Then all partial derivatives up to order $\min\{\kappa+1,\sigma\}$ of the function $M_{(\gamma, \lambda)}[g]$, defined in (\ref{def_capital_m}), exist and are continuous. Furthermore, if
\begin{equation}
g(s, \eta, h, \nu,\varrho\omega) |_{\eta = l(s, \nu, h)} = 0,
\end{equation}
where $ l(s, \nu, b)$ is defined as
\begin{equation} \label{def_little_l}
l(s, \nu, h) := \frac 1 \gamma \left(e^{\gamma \nu} \sqrt{1+\gamma\frac{s^2 e^{2 \gamma \nu}}{(1+h)^2}}-1\right),
\end{equation}
then all partial derivatives up to order $\min\{\kappa+2,\sigma\}$ of $M_{(\gamma, \lambda)}[g]$ exist and are continuous.
\end{lemma}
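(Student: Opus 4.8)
The plan is to first reparametrise the domain of integration so that $\phi$ no longer depends on any of the parameters. Solving $s^2<m(\eta,h,\nu)^2$ for $\eta$ shows that the region $\{\eta>(e^{\gamma\nu}-1)/\gamma,\ |s|<m(\eta,h,\nu)\}$ equals $\{s\in\mathbb R,\ \eta>l(s,\nu,h)\}$, with $l$ as in (\ref{def_little_l}). Performing then the translation $u=\eta+\varrho\omega s+qA_t$, which has unit Jacobian, one rewrites
\begin{equation*}
M_{(\gamma,\lambda)}[g]=\int_{\mathbb R}\int_{w}^{\infty}\phi(u)\,G(u,s,\varrho,\nu,h,\omega,A_t,a)\,\mathrm du\,\mathrm ds,
\end{equation*}
where $w=w(s,\varrho,\nu,h,\omega,A_t):=l(s,\nu,h)+\varrho\omega s+qA_t$ and $G:=\psi(\lambda,\varrho s-q\varrho^2 a)\,g(s,u-\varrho\omega s-qA_t,h,\nu,\varrho\omega)$. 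Both $w$ and the inserted argument of $g$ are $C^\infty$-functions of $(u,s)$ and of the six parameters $(\varrho,\nu,h,\omega,A_t,a)$ --- also in the limiting case $\gamma=0$, where $l$ and $m$ are read off from their Taylor expansions --- so, since $\psi\in C^\infty_c$, we have $G\in C^\sigma$ with the same regularity as $g$, while $\phi$ now enters only through the integration variable $u$ and through the lower endpoint $w$. Using that $\phi,\psi$ are compactly supported and that $l(s,\nu,h)\to\infty$ as $|s|\to\infty$ locally uniformly in the parameters (quadratically in $s$ as $\gamma\to0$, and still to $+\infty$ for $\gamma>0$), one checks that the integrand is supported in a fixed compact subset of $(u,s)$-space, locally uniformly in the parameters; this legitimises the differentiations under both integral signs below.

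It thus suffices to establish the analogous statement for
\begin{equation*}
I[G](w):=\int_{w}^{\infty}\phi(u)\,G(u)\,\mathrm du,
\end{equation*}
viewed as a function of $w$ and of the further parameters on which $G$ depends: $I[G]\in C^{\min\{\kappa+1,\sigma\}}$ whenever $G\in C^\sigma$, and $I[G]\in C^{\min\{\kappa+2,\sigma\}}$ if in addition $G|_{u=w}\equiv0$. Indeed $M_{(\gamma,\lambda)}[g]$ arises from $I[G]$ by inserting the $C^\infty$ map $w=w(s,\,\cdot\,)$ and integrating in $s$ over a range that is compact locally uniformly in the parameters (so this last integration preserves regularity), while $G|_{u=w}=\psi(\lambda,\varrho s-q\varrho^2 a)\,g|_{\eta=l(s,\nu,h)}$ vanishes identically precisely under the extra hypothesis of the lemma. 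The first statement for $I[G]$ I would prove by induction on $\sigma$, the engine being the Leibniz formula
\begin{equation*}
\partial_\theta I[G]=-(\partial_\theta w)\,\phi(w)\,G(w)+I[\partial_\theta G],
\end{equation*}
valid for every parameter $\theta$, including $\theta=w$. The first summand lies in $C^{\min\{\kappa,\sigma\}}$, since the composition of $\phi$ with the smooth function $w$ is of class $C^{\kappa}$ and the restriction of $G$ to $\{u=w\}$ is of class $C^{\sigma}$; the second summand has the same form as $I[G]$ but with $G$ replaced by $\partial_\theta G\in C^{\sigma-1}$, hence lies in $C^{\min\{\kappa+1,\sigma-1\}}$ by the inductive hypothesis (the base case $\sigma=1$ being immediate from continuity of the integrand and of its parameter-derivatives). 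Therefore every first-order partial of $I[G]$ lies in $C^{\min\{\kappa,\sigma-1\}}=C^{\min\{\kappa+1,\sigma\}-1}$, which gives $I[G]\in C^{\min\{\kappa+1,\sigma\}}$. The second statement is then a corollary: when $G|_{u=w}\equiv0$ the first summand drops out, so each first-order partial of $I[G]$ equals $I[\partial_\theta G]$ with $\partial_\theta G\in C^{\sigma-1}$, hence lies in $C^{\min\{\kappa+1,\sigma-1\}}=C^{\min\{\kappa+2,\sigma\}-1}$ by the first statement, whence $I[G]\in C^{\min\{\kappa+2,\sigma\}}$. Thus the single extra order in the first claim comes from the fact that $\phi$ is differentiated only in the boundary term, which already costs one parameter-derivative, and the further order in the second claim comes from the boundary term disappearing altogether.

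I expect the only genuinely delicate step to be the uniform-support bookkeeping underlying the passage from $I[G]$ back to $M_{(\gamma,\lambda)}[g]$: one must check that, for parameters ranging over any fixed compact set, the integrand and all derivatives that occur in the computation are supported in one fixed compact $(u,s)$-region and are dominated there, so that differentiation under the integral and continuity of the resulting integrals are legitimate, and that this holds uniformly across $\gamma\in[0,1]$, including the limit $\gamma=0$. This hinges on the coercivity of $s\mapsto l(s,\nu,h)$, combined, where $\varrho$ is bounded away from zero, with the compact support of $\psi$ in its second argument; together these bound the range of $s$ locally uniformly. Everything else is routine Leibniz-rule manipulation, and in particular the additional Maxwell contributions --- the new integrands $M_5^{(\gamma,\lambda)}$ and $M_6^{(\gamma,\lambda)}$, i.e.\ the choices $g=1+\gamma\eta+\gamma\omega\varrho s$ and $g=s$ --- are covered with no change, since the argument uses nothing about $g$ beyond $g\in C^\sigma$ and, for the sharper statement, the vanishing of $g$ on $\{\eta=l(s,\nu,h)\}$.
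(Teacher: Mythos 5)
Your proposal is correct and follows essentially the same route as the paper: the same change of integration variable moving all parameter dependence into the lower limit and into the argument of $g$, followed by the observation that derivatives fall on $\phi$ only through the boundary term, which disappears under the hypothesis $g|_{\eta=l}=0$. Your explicit induction on the reduced one-variable operator $I[G]$ and the attention to uniform compact support of the integrand are merely a more structured write-up of the argument the paper carries out directly (and more tersely) on the double integral.
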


\begin{proof}
We write down the integral representation (\ref{def_capital_m}) of $M_{(\gamma, \lambda)}[g]$ with respect to the new integration variable $\hat \eta := \eta + \varrho \omega + q A_t$. We obtain
\begin{multline} \label{exp_cm_1}
M_{(\gamma, \lambda)}[g](\varrho, \nu, h, \omega, A_t, a) \\= \int_{-\infty}^\infty \int_{l(s, \nu, h)+\varrho\omega + q A_t}^\infty \phi(\hat\eta)\psi(\lambda, \varrho s- q \varrho^2 a) g(s, \hat\eta - \varrho\omega- q A_t, h, \nu, \varrho\omega)\, \mathrm d\hat \eta \mathrm ds.
\end{multline}
We write this in a schematic form in order to make the analysis clearer. Let $x=(x_1, \dots, x_6)$. In the following this vector represents $(\varrho, \nu, h, \omega, A_t, a)$. We write
\begin{equation} \label{exp_cm_2}
M_{(\gamma, \lambda)}[g](x) = \int_{-\infty}^\infty \int_{\ell(s, x)}^\infty \phi(\hat\eta) \, \hat \psi(s, x) \, \hat g(s, \hat \eta, x)\, \mathrm d\hat\eta \mathrm ds,
\end{equation}
Where $\ell$, $\hat \psi$, and $\hat g$ are defined in the obvious way such that the expressions (\ref{exp_cm_1}) and (\ref{exp_cm_2}) agree, i.e.
\begin{align}
\ell(s,x) &= l(s, x_2, x_3) + x_1 x_4 + q x_5, \\
\hat \psi(s, x) &= \psi(\lambda, x_1 s - q x_1^2 x_ 6), \\
\hat g(s, \hat \eta, x) &= g(s, \hat \eta - x_1 x_4 - q x_5, x_3, x_2, x_1x_4).
\end{align}
Note that $\ell \in C^\infty(\mathbb R^3)$, since $l \in C^\infty(\mathbb R^3)$ already. To see the latter remind that $h > -1/2$ is assumed on the domain of $M_{(\gamma, \lambda)}$.\par
We have for $i=1,\dots,6$
\begin{align}
\partial_{x_i} M_{(\gamma, \lambda)}[g](x) &= \int_{-\infty}^\infty \int_{\ell(s,x)}^\infty \phi(\hat\eta) \,  \partial_{x_i}\left(\hat \psi(s, x) \hat g(s, \hat \eta, x)\right)\, \mathrm d\hat\eta \mathrm ds \\
&\quad + \int_{-\infty}^\infty \phi(\ell(s,x)) \, \hat \psi(s, x) \, \hat g(s, \ell(s,x), x)\, \partial_{x_i} \ell(s,x)\, \mathrm ds. \nonumber
\end{align}
Now we see that each additional derivative $\partial_{x_j}$, $j=1,\dots,6$ leads to a derivative acting on $\phi$, unless $\hat g(s, \ell(s,x), x) = 0$. In this case, only if there are three or more derivatives, there act one or more derivatives on $\phi$. Since $\hat\psi, \ell\in C^\infty$, and $\phi$ and $\psi$ are compactly supported, the regularity of $\phi$ and $g$ determines the regularity of $M_{(\gamma, \lambda)}[g]$ in the asserted way.
\end{proof}

Now we check the regularity properties of $\mathfrak G[\zeta; \gamma, \lambda]$.

\begin{lemma} \label{lem_reg}
Let $(\zeta; \gamma, \lambda)\in \mathcal U$. Then we have $\mathfrak G_1[\zeta; \gamma, \lambda], \mathfrak G_2[\zeta; \gamma, \lambda], \mathfrak G_5[\zeta; \gamma, \lambda]  \in C^{3,\alpha}(\mathbb R^2)$, $\mathfrak G_3[\zeta; \gamma, \lambda] \in C^{1,\alpha}(Z_R)$, and $\mathfrak G_4[\zeta; \gamma, \lambda], \mathfrak G_6[\zeta; \gamma, \lambda] \in C^{2,\alpha}(\mathbb R^2)$.
\end{lemma}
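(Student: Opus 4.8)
The strategy is to decompose the claim into two parts: (i) the data entering the Poisson equations, namely the source functions $\mathfrak g_i[\zeta;\gamma]$ and the matter functions $\mathfrak M_i[\zeta;\gamma,\lambda]$, lie in the appropriate Hölder classes on $\mathbb R^2$ (with decay, or with compact support); and (ii) the Newtonian potential operators $G_n$ --- and hence also $\hat G_n$, which differs from $G_n$ only by an additive constant and so has the same regularity --- gain two Hölder derivatives. By the definitions (\ref{def_g15})--(\ref{def_g46}) it then suffices to show $\mathfrak g_1,\mathfrak g_5,\mathfrak M_1,\mathfrak M_2,\mathfrak M_5\in C^{1,\alpha}(\mathbb R^2)$ and $\mathfrak g_4,\mathfrak g_6,\mathfrak M_4,\mathfrak M_6\in C^{0,\alpha}(\mathbb R^2)$, all with decay at spatial infinity (the matter functions in fact with compact support, by Lemma \ref{lem_comp_supp}, for $\gamma$ small). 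The operator $\mathfrak G_3$ is of a different nature and is treated directly at the end.

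\emph{Regularity of the source functions.} I would inspect the definitions (\ref{for_g_1}) and (\ref{for_g_3})--(\ref{for_g_6}) term by term, using the regularity hierarchy built into $\mathcal X$: $\nu,A_t\in C^{3,\alpha}$, $h\in C^{3,\alpha}$, $\omega,a\in C^{2,\alpha}$, $\xi\in C^{1,\alpha}$. Products, sums and quotients of Hölder functions are again Hölder (the quotients being harmless because $1+h>1/2$ on $\mathcal U$ and the denominator of $g_3^{(\gamma)}$ is likewise bounded away from zero there), and composition with the entire function $x\mapsto e^x$ preserves the class. Reading off the highest-order derivative occurring in each term --- at worst $h_{\varrho\varrho},h_{\varrho z},\omega_{,\varrho},a_{,\varrho}$, which are all of class $C^{1,\alpha}$ --- one obtains $\mathfrak g_1,\mathfrak g_3,\mathfrak g_4,\mathfrak g_5,\mathfrak g_6\in C^{1,\alpha}$, which is more than required. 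The only terms needing care are those carrying an explicit factor $1/\varrho$, such as $\frac 2\varrho A_{t,\varrho}a$ in $g_4^{(\gamma)},g_6^{(\gamma)}$: viewing the solution functions as even in $\varrho$ (Remark \ref{rem_even}), the corresponding numerator is odd in $\varrho$, so by Lemma \ref{lem_axially_2} and its lower-regularity counterpart dividing by $\varrho$ costs exactly one derivative and the resulting quotient is still Hölder of the order claimed. Decay of the $\mathfrak g_i$ follows from the weighted parts of the $\mathcal X$-norms, because every summand of each $g_i^{(\gamma)}$ contains at least one factor among $\nabla\nu,\nabla h,\nabla\omega,\nabla a,\nabla A_t$.

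\emph{Regularity of the matter functions and the potential estimate.} Here I would use the representations (\ref{m1_mg})--(\ref{m6_mg}): each $\mathfrak M_i[\zeta;\gamma,\lambda](\varrho,z)$ is the product of explicit smooth prefactors in $h,\xi,\nu,\omega,\varrho$ (and, for $i=4,6$, an additional factor $1/\varrho$) with $M_{(\gamma,\lambda)}[g]$ evaluated at $(\varrho,\nu(\varrho,z),h(\varrho,z),A_t(\varrho,z),a(\varrho,z))$. By Lemma \ref{lem_reg_m}, applied with $\kappa=2$ since $\phi\in C^2_c$, the map $M_{(\gamma,\lambda)}[g]$ is $C^3$ in its arguments; composing with $\zeta\in\mathcal X$ and multiplying by the prefactors produces a function whose Hölder regularity is limited by its least regular factor, which is the prefactor $e^{2\xi}$ with $\xi\in C^{1,\alpha}$ (the composition with $a\in C^{2,\alpha}$ being less restrictive). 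Hence $\mathfrak M_1,\mathfrak M_2,\mathfrak M_5\in C^{1,\alpha}(\mathbb R^2)$, and --- dividing the odd-in-$\varrho$ expression in (\ref{m6_mg}) (and its analogue for $M_4$) by $\varrho$, again via Lemma \ref{lem_axially_2} --- $\mathfrak M_4,\mathfrak M_6\in C^{0,\alpha}(\mathbb R^2)$; compact support comes from Lemma \ref{lem_comp_supp}. Now for $f\in C^{k,\alpha}_{\mathrm{loc}}(\mathbb R^2)$ with compact support or with the decay established above, the classical interior Schauder estimate for the Newtonian potential gives $G_n[f]\in C^{k+2,\alpha}_{\mathrm{loc}}$, and the decay of $G_n[f]$ at infinity promotes this to a bound in $C^{k+2,\alpha}(\mathbb R^2)$. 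Applying this with $k=1$ to the data of $\mathfrak G_1,\mathfrak G_2,\mathfrak G_5$ and with $k=0$ to the data $\mathfrak g_i+\mathfrak M_i$ of $\mathfrak G_4,\mathfrak G_6$ yields the asserted memberships. Finally, $\mathfrak G_3[\zeta;\gamma,\lambda](\varrho,z)=\ln(1+h(0,z))+\int_0^\varrho\mathfrak g_3[\zeta;\gamma](s,z)\,\mathrm ds$: the first summand is $C^{3,\alpha}$ in $z$ since $z\mapsto h(0,z)\in C^{3,\alpha}(\mathbb R)$ and $x\mapsto\ln(1+x)$ is smooth near the range of $h$ (as $1+h>1/2$), and since $\mathfrak g_3\in C^{1,\alpha}\subset C^{0,\alpha}$ its $\varrho$-antiderivative is $C^{1,\alpha}$ on the cylinder $Z_R$; thus $\mathfrak G_3\in C^{1,\alpha}(Z_R)$.

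\emph{Main obstacle.} The substantive difficulty is not any single estimate but the sheer bookkeeping: one must check, term by term across the lengthy source and matter functions, that nothing drops below the Hölder class dictated by the target space, and in particular that every explicit $1/\varrho$ factor produces an honest Hölder quotient of the reduced order --- this is exactly the point at which the even/odd-in-$\varrho$ structure (Remark \ref{rem_even}, Lemmas \ref{lem_axially}--\ref{lem_axially_2}) and the deliberate regularity hierarchy of $\mathcal X$ must be invoked. Securing the estimates uniformly up to spatial infinity, rather than merely locally, via the weighted $\mathcal X$-norms and the compact support of the matter terms, is the other recurring technical demand.
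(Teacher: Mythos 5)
Your proposal is correct and follows essentially the same route as the paper: reduce to Hölder regularity of the source and matter functions, gain two derivatives via the Newtonian potential (Lieb--Loss, Theorem 10.3), use the representations (\ref{m1_mg})--(\ref{m6_mg}) together with Lemma \ref{lem_reg_m} for the matter functions, and handle every explicit $1/\varrho$ factor through the odd-in-$\varrho$ structure and Lemma \ref{lem_axially_2}, with $\mathfrak G_3$ treated directly from (\ref{def_g3}). The only cosmetic difference is that the paper additionally records that the (evenly extended) data are even in $\varrho$ and $z$ and invokes Lemma \ref{lem_axially} to pass between regularity on $\mathbb R^2$ and on $\mathbb R^n$, a point your argument leaves implicit.
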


\begin{proof}
By \cite[Theorem 10.3]{ll00} the regularity of the axially symmetric solution functions $\mathfrak G_i[\zeta; \gamma, \lambda]$, $i=1,2,4,5,6$ follows from the regularity of the right members of the semi-linear Poisson equations (\ref{final_eq_nu}), (\ref{eq_b}), (\ref{eq_omega})--(\ref{final_eq_a}). These right members consist in the source functions $\mathfrak g_i[\zeta; \gamma]$ and the matter functions $\mathfrak M_i[\zeta; \gamma, \lambda]$. This regularity is now established. \par
We have already observed that all matter functions $\mathfrak M_j[\zeta; \gamma, \lambda]$, $j=1,2,4,5,6$ and all source functions $\mathfrak g_i[\zeta; \gamma, \lambda]$, $i = 1,4,5,6$, if extended to negative values of $\varrho$ and thereby seen as functions on $\mathbb R^2$, are even in $\varrho$ and $z$. So by Lemma \ref{lem_axially} it suffices to establish the necessary regularity in $\varrho$ and $z$. We start by analysing the matter functions $\mathfrak M_j[\zeta;  \gamma, \lambda]$, $j \in \{1,2,4,5,6\}$. By inspection of the formulas (\ref{m1_mg}), (\ref{m2_mg}), and (\ref{m5_mg}) and using Lemma \ref{lem_reg_m} (which yields that all the $M_{(\gamma, \lambda)}[g]$ are at least $C^3$ in $\varrho$ and $z$), we see that the regularity of $\mathfrak M_1[\zeta; \gamma, \lambda]$, $\mathfrak M_2[\zeta; \gamma, \lambda]$, and $\mathfrak M_5[\zeta; \gamma, \lambda]$ is at least that of $\xi$, i.e.~$C^{1,\alpha}(\mathbb R^2)$. In the formulas (\ref{m4_mg}) and (\ref{m6_mg})  for $\mathfrak M_6[\zeta; \gamma, \lambda]$ and $\mathfrak M_4[\zeta; \gamma, \lambda]$, respectively, we have the factors
\begin{align} 
&\frac{1}{\varrho} M_{(\gamma, \lambda)}[s(1 + \gamma \eta)](\varrho, \nu(\varrho, z), h(\varrho, z), A_t(\varrho, z), a(\varrho, z)), \label{already} \\
&\frac{1}{\varrho} e^{2(\xi-\gamma\nu)} M_{(\gamma, \lambda)}[s](\varrho, \nu, h, A_t, a). \label{new_sim}
\end{align} 
Since, as already observed, $M_{(\gamma, \lambda)}[g](\varrho, \nu(\varrho, z), h(\varrho, z), A_t(\varrho, z), a(\varrho, z))$ is odd in $\varrho$ if $g$ is odd in $s$ Lemma \ref{lem_axially_2} can be applied and this yields a regularity of $C^3$ in $\varrho$ and $z$, so in particular $C^{2,\alpha}(\mathbb R^2)$. \par
The term (\ref{already}) emerged already in the uncharged case treated in \cite{akr14}, the term (\ref{new_sim}) is new but similar. In the charged case, there appear some more problematic terms with factors $\varrho^{-1}$ in the source functions $\mathfrak g_4[\zeta; \gamma, \lambda]$ and $\mathfrak g_6[\zeta; \gamma, \lambda]$. Except for these problematic terms the source functions $\mathfrak g_i[\zeta; \gamma, \lambda]$ consist in products, sums, and compositions of functions which are at least $C^{1,\alpha}$ (namely the solution functions $\zeta$ and their derivatives which are chosen in $\mathcal X$). Consequently $\mathfrak g_1[\zeta; \gamma, \lambda] , \mathfrak g_2[\zeta; \gamma, \lambda], \mathfrak g_5[\zeta; \gamma, \lambda]$ are already in $C^{1,\alpha}(\mathbb R^2)$. It remains to consider the terms with $\varrho^{-1}$. These terms are 
\begin{equation} \label{crit_fun}
\frac{A_{t,\varrho} a}{\varrho}, \quad \frac{\nu_{,\varrho} a}{\varrho}, \quad \frac{h_{,\varrho} a}{\varrho}.
\end{equation}
We view $A_t$, $a$, $\nu$, and $h$ now as functions in $\varrho$, $z$ on $\mathbb R^2$ that are even in $\varrho$, cf.~Remark \ref{rem_even}. The functions $A_{t,\varrho} a$, $\nu_{,\varrho} a$, and $h_{,\varrho} a$ are odd in $\varrho$ and in $C^{2,\alpha}(\mathbb R^2)$, so in particular in $C^{2}(\mathbb R^2)$. So, by Lemma \ref{lem_axially_2}, the functions (\ref{crit_fun}) are in $C^1(\mathbb R^2)$ and consequently also in $C^{0,\alpha}(\mathbb R^2)$. This is sufficient to prove the asserted regularity. \par
Finally we consider the operator $\mathfrak G_3[\zeta; \gamma, \lambda]$. The asserted regularity is easy to see since the source function $\mathfrak g_3[\zeta; \gamma, \lambda]$ is obviously sufficiently regular, i.e.~$C^{0,\alpha}$.
\end{proof}

Next we check the decay properties of $\mathfrak G[\zeta; \gamma, \lambda]$. First we recall a technical lemma.

\begin{lemma} (Lemma 5.1 in \cite{akr14}) \label{lem_decay} \\
Let $f\in C^{0,\alpha}(\mathbb R^n)$, $n\geq 3$, fulfil $|f| \leq C (1+|x|)^{-(n+\epsilon)}$ for some constant $C>0$ and $\epsilon > 0$. Then $G_n[f] \in C^{2,\alpha}(\mathbb R^n)$, where $G_n[f]$ is defined in (\ref{def_greens_gn}), and there exists a constant $\tilde C>0$ such that for all multi indices $\sigma$, $|\sigma|\leq 2$, and for all $x\in\mathbb R^n$ we have
\begin{equation}
|\partial^\sigma G_n[f](x) | \leq \frac{\tilde C}{(1+|x|)^{n+|\sigma|-2}}.
\end{equation}
\end{lemma}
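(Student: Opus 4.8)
The plan is to derive the pointwise decay of $G_n[f]$ and of each of its derivatives of order at most two by the classical ``near/far'' decomposition of the convolution, and then to obtain the global $C^{2,\alpha}$-regularity from interior Schauder estimates with uniform constants (alternatively, the regularity statement is contained in \cite[Theorem 10.3]{ll00}). Write $\Gamma(z):=\bigl((n-2)|\mathbb S^{n-1}|\bigr)^{-1}|z|^{2-n}$, so that $G^n_y(x)=\Gamma(x-y)$ and $-\Delta_n\Gamma=\delta_0$, and let $[f]_\alpha$ denote the $C^{0,\alpha}(\mathbb R^n)$-seminorm. First I would note that $G_n[f]$ is well defined: $n\geq3$ makes $|z|^{2-n}$ locally integrable, $f$ is bounded, and $|x-y|^{2-n}|f(y)|\lesssim|y|^{-(2n-2+\epsilon)}$ is integrable at infinity; in particular $f\in L^1(\mathbb R^n)\cap L^\infty(\mathbb R^n)$.

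For the order-zero estimate, fix $x$ with $|x|$ large and split $\mathbb R^n$ into $B_{|x|/2}(x)$ and its complement. On $B_{|x|/2}(x)$ one has $|y|\geq|x|/2$, hence $|f(y)|\lesssim(1+|x|)^{-(n+\epsilon)}$, while $\int_{B_{|x|/2}(x)}|x-y|^{2-n}\,\mathrm dy\lesssim|x|^{2}$, so this part is $\lesssim(1+|x|)^{-(n+\epsilon-2)}\leq(1+|x|)^{-(n-2)}$; on the complement $|x-y|\geq|x|/2$, so $|x-y|^{2-n}\lesssim|x|^{2-n}$ and integrating against $\|f\|_{L^1}$ gives the same bound, while for bounded $|x|$ continuity of $G_n[f]$ suffices. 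The order-one estimate is handled identically after differentiating under the integral: $\partial_i\Gamma=O(|z|^{1-n})$ is still locally integrable, and the same split gives $|\nabla G_n[f](x)|\lesssim(1+|x|)^{-(n-1)}$.

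The order-two estimate is the part I expect to be the main obstacle, because $\partial_{ij}\Gamma=O(|z|^{-n})$ is no longer locally integrable, so one cannot simply differentiate under the integral near $x$. Here I would fix a smooth cut-off $\eta$ with $\eta\equiv1$ on $B_1(x)$ and $\mathrm{supp}\,\eta\subset B_2(x)$ and write $f=f\eta+f(1-\eta)$. The potential of $f(1-\eta)$ is smooth near $x$, with second derivatives there given by the absolutely convergent integral $\int_{\mathbb R^n\setminus B_1(x)}\partial_{ij}\Gamma(x-y)f(y)(1-\eta(y))\,\mathrm dy$, estimated by the near/far split just as above --- the only new feature being a factor $\log|x|$ on the intermediate annulus $\{1\leq|x-y|\leq|x|/2\}$, which is harmless since $\epsilon>0$. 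For the potential of $f\eta$ I would invoke the classical representation of the second derivatives of a Newtonian potential (see \cite{ll00}): for any ball $B_3(x)\supset\mathrm{supp}(f\eta)$,
\begin{equation*}
\partial_{ij}G_n[f\eta](x)=\int_{B_3(x)}\partial_{ij}\Gamma(x-y)\bigl(f\eta(y)-f(x)\bigr)\,\mathrm dy-f(x)\int_{\partial B_3(x)}\partial_i\Gamma(x-y)\,\nu_j(y)\,\mathrm dS_y.
\end{equation*}
The boundary term is $O\bigl(|f(x)|\bigr)=O\bigl((1+|x|)^{-(n+\epsilon)}\bigr)$. In the volume integral, on $B_3(x)\setminus B_1(x)$ the kernel is bounded and $|f\eta(y)-f(x)|\lesssim(1+|x|)^{-(n+\epsilon)}$; on $B_1(x)$ I would split once more at a radius $t_0\leq1$, using $|f(y)-f(x)|\leq[f]_\alpha|x-y|^\alpha$ on $B_{t_0}(x)$ (contribution $\lesssim[f]_\alpha t_0^\alpha$) and the pointwise bound $|f(y)-f(x)|\lesssim(1+|x|)^{-(n+\epsilon)}$ against the kernel $O(|x-y|^{-n})$ on $B_1(x)\setminus B_{t_0}(x)$ (contribution $\lesssim(1+|x|)^{-(n+\epsilon)}\log(1/t_0)$). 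Choosing $t_0=\bigl((1+|x|)^{-(n+\epsilon)}/[f]_\alpha\bigr)^{1/\alpha}$ balances the two and yields $\lesssim(1+|x|)^{-(n+\epsilon)}\log(1+|x|)\lesssim(1+|x|)^{-n}$ for $|x|$ large; together with the far estimate this gives $|\partial^\sigma G_n[f](x)|\lesssim(1+|x|)^{-n}$ for $|\sigma|=2$.

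Finally, since $G_n[f]\in L^\infty(\mathbb R^n)$ by the order-zero bound and $f\in C^{0,\alpha}(\mathbb R^n)$ is bounded, interior Schauder estimates on the unit ball about each point give $\|G_n[f]\|_{C^{2,\alpha}(B_1(x_0))}\lesssim\|G_n[f]\|_{L^\infty}+\|f\|_{C^{0,\alpha}(\mathbb R^n)}$ uniformly in $x_0$, hence $G_n[f]\in C^{2,\alpha}(\mathbb R^n)$. The delicate point to get right is the choice of the splitting scale $t_0$ in the order-two estimate, together with the verification that the borderline logarithm coming from the nearly scale-invariant kernel $\partial^2\Gamma$ is absorbed by the strict inequality $\epsilon>0$.
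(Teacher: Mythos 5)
Your argument is correct. Note that the paper itself offers no proof of this lemma: it is imported verbatim as Lemma 5.1 of \cite{akr14}, so there is no in-paper argument to compare against; what you have written is a self-contained reconstruction along the standard lines (near/far splitting of the Newtonian potential, plus the classical H\"older representation formula for the second derivatives). The steps all check out: the orders zero and one follow by differentiating under the integral and splitting at $|x-y|=|x|/2$, with the surplus $\epsilon$ of decay paying for the factor $|x|^{2}$ (resp.\ $|x|$) from the near region; the order-two case correctly isolates the non-locally-integrable kernel $\partial_{ij}\Gamma=O(|z|^{-n})$ via the cut-off and the representation formula with the boundary term, and the genuinely delicate point --- that the global H\"older seminorm $[f]_\alpha$ does not decay --- is resolved exactly right by shrinking the ball on which the H\"older bound is used to radius $t_0\sim\bigl((1+|x|)^{-(n+\epsilon)}/[f]_\alpha\bigr)^{1/\alpha}$ and absorbing the resulting $\log(1+|x|)$ into the strict inequality $\epsilon>0$. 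The only cosmetic caveat is that the cut-off $\eta$ is centred at the evaluation point, so one should fix $x_0$ first and differentiate the two pieces at $x=x_0$ (the piece $G_n[f(1-\eta)]$ being harmonic, hence smooth, on $B_1(x_0)$); and for $[f]_\alpha=0$ or bounded $|x|$ one takes $t_0=1$. Neither affects the validity of the proof.
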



\begin{lemma} \label{lem_sol_decay}
Let $(\zeta; \gamma, \lambda)\in\mathcal U$. Then, there exists a constant $C>0$ such that for all $(\varrho,z)\in\mathbb R^2$, the following bounds hold:
\begin{align}
(\partial_\varrho + \partial_z) \mathfrak G_i[\zeta; \gamma, \lambda](\varrho,z) &\leq C \left(1+\sqrt{\varrho^2 + z^2}\right)^{-2}, \quad i=1,5, \\
(\partial_\varrho + \partial_z) \mathfrak G_2[\zeta; \gamma, \lambda](\varrho,z)  &\leq C \left(1+\sqrt{\varrho^2 + z^2}\right)^{-3}, \\
(\partial_\varrho + \partial_z) \mathfrak G_j[\zeta; \gamma, \lambda](\varrho,z) &\leq C \left(1+\sqrt{\varrho^2 + z^2}\right)^{-4}, \quad j = 4,6, \\
\mathfrak G_j[\zeta; \gamma, \lambda](\varrho,z) &\leq C \left(1+\sqrt{\varrho^2 + z^2}\right)^{-3}, \quad j = 4,6.
\end{align}
Furthermore the limits
\begin{equation} \label{def_lim_inf}
\nu_{\infty}^{\gamma, \lambda} := \lim_{|(\varrho,z)|\to\infty} \mathfrak G_1[\zeta; \gamma, \lambda](\varrho,z), \quad A_{\infty}^{\gamma, \lambda} := \lim_{|(\varrho,z)|\to\infty} \mathfrak G_5[\zeta; \gamma, \lambda](\varrho,z)
\end{equation}
exist.
\end{lemma}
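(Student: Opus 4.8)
The plan is to unwind each operator via its defining formula (\ref{def_g15})--(\ref{def_g46}) and reduce everything to Lemma \ref{lem_decay}: once the integrand of a Newtonian potential $G_n[\cdot]$ is known to lie in $C^{0,\alpha}(\mathbb R^n)$ and to decay like $(1+|x|)^{-(n+\epsilon)}$ for some $\epsilon>0$, that lemma delivers $|\partial^\sigma G_n[\cdot](x)|\le\tilde C(1+|x|)^{-(n+|\sigma|-2)}$ for $|\sigma|\le2$. Applying this with $|\sigma|=1$ yields the three gradient estimates (the asserted rates are exactly $(1+|x|)^{-(n-1)}$, with $n=3$ for $i=1,5$, $n=4$ for $i=2$, $n=5$ for $j=4,6$), and with $|\sigma|=0$ it yields $\mathfrak G_j\le C(1+|x|)^{-3}$ for $j=4,6$. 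For $i=1,5$ one also uses that $\hat G^n_y(x)-G^n_y(x)$ is constant in $x$, so $\partial\hat G_3[\mathfrak M_i]=\partial G_3[\mathfrak M_i]$ and the $\hat G_3$-contributions to $\mathfrak G_1,\mathfrak G_5$ are covered by the same estimates; only the existence of the limits (\ref{def_lim_inf}) needs the $\hat G_3$-modification itself.

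\textbf{Matter parts.} First I would dispose of the matter functions $\mathfrak M_i[\zeta;\gamma,\lambda]$. Assuming $\delta$ small enough that the smallness hypothesis of Lemma \ref{lem_comp_supp} is met throughout $\mathcal U$ (which we may do, since $\delta$ is part of the construction), each $\mathfrak M_i$ is supported in the ball of radius $R$, and by Lemma \ref{lem_reg} it is $C^{0,\alpha}$. A compactly supported $C^{0,\alpha}$ function satisfies the hypothesis of Lemma \ref{lem_decay} for every $\epsilon>0$, so $G_3[\mathfrak M_i]$ ($i=1,5$), $G_4[\mathfrak M_2]$ and $G_5[\mathfrak M_i]$ ($i=4,6$) all have the asserted gradient decay, and the two $G_5$-potentials in addition decay like $(1+|x|)^{-3}$.

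\textbf{Source parts.} The substantive part is the source functions $\mathfrak g_i$, which are \emph{not} compactly supported. Here I would estimate each summand of (\ref{for_g_1}), (\ref{for_g_4}), (\ref{for_g_5}), (\ref{for_g_6}) by reading the polynomial decay of every factor off the weighted norms defining $\mathcal X$ --- $|\nabla\nu|,|\nabla A_t|\lesssim(1+|x|)^{-(1+\beta)}$, $|\nabla h|\lesssim(1+|x|)^{-3}$, $|\omega|,|a|\lesssim(1+|x|)^{-3}$, $|\nabla\omega|,|\nabla a|\lesssim(1+|x|)^{-4}$ --- and using $\varrho\le|x|$ to absorb explicit powers of $\varrho$; one then checks term by term that $|\mathfrak g_i(\varrho,z)|\le C(1+|x|)^{-(n_i+\epsilon)}$ for some $\epsilon>0$, with $n_1=n_5=3$ and $n_4=n_6=5$. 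The only summands not immediately of this shape are the $\varrho^{-1}$-terms $\varrho^{-1}A_{t,\varrho}a$, $\varrho^{-1}\nu_{,\varrho}a$, $\varrho^{-1}h_{,\varrho}a$ in $\mathfrak g_4$ and $\mathfrak g_6$: away from the axis they are $O(|x|^{-1})$ times a gradient times $a$, which is amply fast, while near the axis I would invoke the axis-regularity established in the proof of Lemma \ref{lem_reg} via Lemma \ref{lem_axially_2} to bound, e.g., $\varrho^{-1}A_{t,\varrho}a$ by (essentially) $\|A_{t,\varrho\varrho}\|_\infty|a|$, and recover the needed decay of the second derivatives of $\nu,h,A_t$ from the $|\sigma|=2$ case of Lemma \ref{lem_decay} applied to $\mathfrak G_1,\mathfrak G_2,\mathfrak G_5$. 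Feeding these bounds into Lemma \ref{lem_decay} and adding the matter contributions proves the three gradient estimates and $\mathfrak G_j\le C(1+|x|)^{-3}$, $j=4,6$.

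\textbf{Limits at infinity and main obstacle.} For (\ref{def_lim_inf}), write $\mathfrak G_i=G_3[\mathfrak g_i]+\hat G_3[\mathfrak M_i]$, $i=1,5$. Since $\mathfrak M_i$ is bounded and compactly supported, dominated convergence gives $\hat G_3[\mathfrak M_i](x)\to-\tfrac{1}{4\pi}\int\tfrac{\mathfrak M_i(y)}{|y|}\,\mathrm dy$ as $|(\varrho,z)|\to\infty$, a finite, direction-independent value; and $|\mathfrak g_i|\le C(1+|x|)^{-(3+\epsilon)}$, in particular $\mathfrak g_i\in L^1(\mathbb R^3)$, gives $G_3[\mathfrak g_i](x)\to0$ by the standard splitting of $\int\tfrac{\mathfrak g_i(y)}{|x-y|}\,\mathrm dy$ into the regions $|y|\le|x|/2$, $|x|/2<|y|<2|x|$, $|y|\ge2|x|$. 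Hence $\nu_\infty^{\gamma,\lambda}$ and $A_\infty^{\gamma,\lambda}$ exist. I expect the main obstacle to be the bookkeeping of the third step: checking that all of the many new, charge-induced terms in $\mathfrak g_1,\mathfrak g_4,\mathfrak g_5,\mathfrak g_6$ --- in particular the quadratic electro-magnetic terms and the $\varrho^{-1}$-terms --- decay fast enough \emph{uniformly up to the symmetry axis} to clear the threshold $(1+|x|)^{-(n_i+\epsilon)}$ demanded by Lemma \ref{lem_decay}.
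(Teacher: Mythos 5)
Your proposal is correct and follows essentially the same route as the paper: reduce everything to Lemma \ref{lem_decay}, dispose of the matter functions by their compact support (Lemma \ref{lem_comp_supp}), verify term by term from the weighted norms of $\mathcal X$ that the source functions decay like $(1+|x|)^{-(n_i+\epsilon)}$, and obtain the limits (\ref{def_lim_inf}) by observing that $\mathfrak G_i + \tfrac{1}{4\pi}\int \mathfrak M_i(y)|y|^{-1}\,\mathrm dy$ tends to zero. Your explicit treatment of the $\varrho^{-1}$-terms near the axis and of the convergence of $G_3[\mathfrak g_i]$ is somewhat more detailed than the paper's, which simply asserts that "the remaining terms are treated in a similar fashion."
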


\begin{proof}
By Lemma \ref{lem_decay} it suffices to check that the source functions $\mathfrak g_i[\zeta; \gamma, \lambda]$, $i=1,4,5,6$ and the matter functions $\mathfrak M_j[\zeta; \gamma, \lambda]$, $j=1,2,4,5,6$ have the right decay behaviour. In fact the matter functions do not have to be taken into account here, because they are of compact support, cf.~Lemma \ref{lem_comp_supp}. The source functions have to be investigated term by term. Since these terms consist in products of derivatives of the functions $\zeta_j$, $j=1,\dots,6$, it is easy to see that the necessary decay is available. \par
We illustrate this with the example of $\mathfrak g_1[\zeta; \gamma, \lambda]$. We have
\begin{align}
\mathfrak g_1[\zeta; \gamma, \lambda] &= -\frac{ h_{,\varrho}  \nu_{,\varrho} +  h_{,z}  \nu_{,z}}{1+ h} + \frac{\varrho^2}{2}(1+ h)^2 e^{-4\gamma \nu} \left( \omega_{,\varrho}^2 +  \omega_{,z}^2\right) \\
&\quad -\gamma^2 e^{-2\gamma\nu}  \left( ( A_{t,\varrho} + 2\omega\varrho  a + \omega \varrho^2  a_{,\varrho})^2 + ( A_{t,z} + \omega \varrho^2  a_{,z})^2 \right) \nonumber \\
&\quad - \gamma \frac{e^{2\gamma \nu}}{(1+ h)^2} \left((2 a+\varrho  a_{,\varrho})^2 + \varrho^2  a_{,z}^2 \right). \nonumber
\end{align}
We consider the first term $(h_{,\varrho} \nu_{,\varrho})/(1+h)$. Since $h\in \mathcal X_2$, $h > -1/2$ and $\nu \in \mathcal X_1$ we have
\begin{equation}
\frac{|h_{,\varrho}(\varrho, z) \nu_{,\varrho}(\varrho,z)|}{1+h} \leq 2\frac{ \left\| (1+|x|)^{3} \nabla h \right\|_\infty \left\| (1+|x|)^{1+\beta} \nabla \nu \right\|_\infty}{ \left(1+|x|\right)^{4+\beta}}\leq \frac{C}{\left(1+|x|\right)^{4+\beta}}.
\end{equation}
The remaining terms are treated in a similar fashion. \par
Finally, by inspecting the formula (\ref{def_g15}) for the solution operators $\mathfrak G_1$ and $\mathfrak G_5$ corresponding to $\nu$ and $A_t$, respectively, we see that 
\begin{align*}
&\mathfrak G_1[\zeta; \gamma, \lambda](\varrho,z) + \frac{1}{|\mathbb S^2|} \int_{\mathbb R^3} \frac{\mathfrak M_1[\zeta; \gamma, \lambda](\varrho_y, z_y)}{|y|} \mathrm dy, \\
&\mathfrak G_5[\zeta; \gamma, \lambda](\varrho,z) + \frac{1}{|\mathbb S^2|} \int_{\mathbb R^3} \frac{\mathfrak M_5[\zeta; \gamma, \lambda](\varrho_y, z_y)}{|y|} \mathrm dy
\end{align*}
decay towards spatial infinity, also by Lemma \ref{lem_decay}.
\end{proof}

\begin{rem}
Note that in Lemma \ref{lem_decay} we have seen that for the functions $\nu$ and $A_t$ the decay is improved, form $(1+|x|)^{-(1+\beta)}$ to $(1+|x|)^{-2}$, i.e.~assuming the weaker decay of $\nu, A_t \in \mathcal X_1$ we obtain the stronger decay of $\mathfrak G_1[\zeta; \gamma, \lambda]$, $\mathfrak G_5[\zeta; \gamma, \lambda]$. This is important in the proof that the Fr\'echet derivative of these components at $(\zeta_0; 0,0)$ is a compact operator, which in turn plays a role in the proof that this derivative is a bijection, cf.~Lemma \ref{lem_bijection} below and \cite[Lemma 6.2]{akr11}.
\end{rem}
\noindent All required properties of $\mathfrak G[\zeta; \gamma, \lambda]$ are now verified, thus the operator $\mathfrak F$ is well defined.

\section{The Fr\'echet derivative of $\mathfrak F$} \label{sect_frechet_dir}

We denote the functions $\nu$, $h$, $\xi$, $\omega$, $A_t$, $a$ constituting the collection $\zeta$ by $\zeta_1, \dots, \zeta_6$, if convenient. The Fr\'echet derivative of $\mathfrak G_i$ with respect to $\zeta_j$ at $(\zeta; \gamma, \lambda)$ is a linear operator from $\mathcal X_j$ to $\mathcal X_i$, $i,j=1,\dots,6$. Here and in the remainder of the article by $\mathcal X_5$ and $\mathcal X_6$ we mean $\mathcal X_1$ and $\mathcal X_4$, respectively, since these are the function spaces corresponding to $\zeta_5$ and $\zeta_6$, respectively. We denote the Fr\'echet derivative by
\begin{equation}
D_{\zeta_j} \mathfrak G_i[\zeta; \gamma, \lambda] : \mathcal X_j \to \mathcal X_i, \quad \delta \zeta_j \mapsto \left(D_{\zeta_j} \mathfrak G_i[\zeta; \gamma, \lambda]\right)\delta \zeta_j.
\end{equation}

\begin{prop}
The operators $\mathfrak G_i:\mathcal U \to \mathcal X_i$, $i=1,\dots,6$ are continuous and continuously Fr\'echet differentiable with respect to $\nu, \xi, h, \omega, A_t, a$.
\end{prop}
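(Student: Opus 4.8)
The plan is to recognise each $\mathfrak G_i$ as the composition of a superposition (Nemytskii) map $\zeta \mapsto \bigl(\mathfrak g_i[\zeta;\gamma],\, \mathfrak M_i[\zeta;\gamma,\lambda]\bigr)$, with values in suitable weighted H\"older spaces, followed by a \emph{bounded linear} operator: for $i=1,5$ this linear operator is $(u,w)\mapsto G_3[u]+\hat G_3[w]$, for $i=2$ it is $w\mapsto G_4[w]$, for $i=4,6$ it is $(u,w)\mapsto G_5[u+w]$, and for $i=3$ it is $u\mapsto \int_0^\varrho u(s,z)\,\mathrm ds$ plus the affine dependence on $h$ through $\ln(1+h(0,z))$. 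Because the chain rule holds for Fr\'echet derivatives and a bounded linear operator is its own derivative, I would reduce the claim to two points: (a) the linear operators above are bounded between the relevant spaces, and (b) the maps $\zeta\mapsto\mathfrak g_i[\zeta;\gamma]$ and $\zeta\mapsto\mathfrak M_i[\zeta;\gamma,\lambda]$ are continuously Fr\'echet differentiable with respect to each of $\nu,h,\xi,\omega,A_t,a$, into the H\"older classes identified in Lemma~\ref{lem_reg}. Point (a) is essentially already done: Lemmas~\ref{lem_decay}, \ref{lem_reg} and \ref{lem_sol_decay} show that $G_n,\hat G_n$ gain two derivatives and send the decaying H\"older classes in question boundedly into $\mathcal X_1,\mathcal X_2,\mathcal X_4$, while the operator for $i=3$ is manifestly bounded from $C^{0,\alpha}(Z_R)$ to $C^{1,\alpha}(Z_R)$ and the evaluation $\eta\mapsto\eta(0,z)$ is bounded.

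For the source functions in (b) I would argue that each $g_i^{(\gamma)}$ is a polynomial in $\zeta$ and its derivatives (up to second order for $g_3$) whose coefficients are smooth functions of $\nu$ and $h$ alone --- assembled from $e^{\pm2\gamma\nu}$, $e^{\pm4\gamma\nu}$, $(1+h)^{-1}$, $(1+h)^{-2}$ and, for $g_3$, a denominator bounded away from zero on $\mathcal U$ --- and that $\xi$ does not occur in the source functions at all. Since the spaces $C^{k,\alpha}$ are Banach algebras, the superposition operator of a $C^\infty$ function is $C^\infty$ between them, and $\zeta_j\mapsto(\partial_\varrho\zeta_j,\partial_z\zeta_j)$ is bounded linear on the $\mathcal X_j$, this makes $\zeta\mapsto\mathfrak g_i[\zeta;\gamma]$ continuously Fr\'echet differentiable in each solution function; the partial derivatives are obtained by the product and chain rules and inherit the required weighted decay from the weights built into $\|\nabla\nu\|$, $\|\nabla h\|$, $\|\nabla\omega\|$, $\|\nabla a\|$, exactly as in the decay estimates of Lemma~\ref{lem_sol_decay}. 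The $\varrho^{-1}$ terms in $\mathfrak g_4$ and $\mathfrak g_6$, i.e.\ those in (\ref{crit_fun}), I would treat as in the proof of Lemma~\ref{lem_reg} via Lemma~\ref{lem_axially_2}: the map $\varphi\mapsto\varphi/\varrho$ used there is linear and bounded, hence trivially continuously differentiable, so these terms create no new difficulty.

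For the matter functions I would use $\mathfrak M_i[\zeta;\gamma,\lambda](\varrho,z)=M_i^{(\gamma,\lambda)}(\varrho,\zeta(\varrho,z))$ together with the representations (\ref{m1_mg})--(\ref{m6_mg}) in terms of the integral operator $M_{(\gamma,\lambda)}[g]$ of (\ref{def_capital_m}). Since $\phi\in C_c^2$ and $\psi\in C_c^\infty$, Lemma~\ref{lem_reg_m} gives that $M_{(\gamma,\lambda)}[g]$ is at least $C^3$ jointly in all its arguments; differentiation under the integral sign with respect to $\nu,h,\omega,A_t,a$ is legitimate because $\phi,\psi$ are compactly supported and, by Lemma~\ref{lem_comp_supp}, the momentum integral effectively runs over a bounded set. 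Hence each $M_i^{(\gamma,\lambda)}$ is a $C^2$ function of $(\varrho,\nu,h,\xi,\omega,A_t,a)$ on the relevant domain, so $\zeta\mapsto\mathfrak M_i[\zeta;\gamma,\lambda]$ is continuously Fr\'echet differentiable in each solution function into the H\"older class of Lemma~\ref{lem_reg}, with the less regular $\xi$ entering only through the smooth factor $e^{2\xi}$; compact support of $\mathfrak M_i$ makes the decay conditions vacuous, and the $\varrho^{-1}$ in $\mathfrak M_4,\mathfrak M_6$ is again handled via Lemma~\ref{lem_axially_2}. Combining (a), (b) and the chain rule then yields continuous Fr\'echet differentiability of each $\mathfrak G_i$, and continuity of $\mathfrak G_i$ itself follows.

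The hard part will not be any single estimate but the bookkeeping forced by the hierarchy of regularity built into $\mathcal X$: for every pair $(i,j)$ one has to verify that the partial derivative $D_{\zeta_j}\mathfrak G_i[\zeta;\gamma,\lambda]$ really maps $\mathcal X_j$ into $\mathcal X_i$ --- that the differentiated source terms still decay at the weighted rates prescribed by $\|\cdot\|_{\mathcal X_i}$, that no derivative is lost beyond what the operators $G_n$ recover, and that the derivative depends continuously on $\zeta$ in the operator norm. Since including charge only produces more terms of the same structural type as in \cite{akr14}, I expect this to reduce to a term-by-term check along the lines of that reference, the genuinely new inputs being the $\varrho^{-1}$ terms in $\mathfrak g_4,\mathfrak g_6$ and the matter functions $M_5^{(\gamma,\lambda)},M_6^{(\gamma,\lambda)}$, which the argument above already fits into the existing scheme.
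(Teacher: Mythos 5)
Your proposal is correct and follows essentially the same route as the paper: the paper likewise writes each $\mathfrak G_i$ schematically as a Green's operator applied to $\Phi^{(\gamma,\lambda)}(\varrho,\zeta,\zeta_{,\varrho},\zeta_{,z})$, reduces Fr\'echet differentiability via the Schauder estimate to sufficient smoothness of $\Phi^{(\gamma,\lambda)}$ in its arguments, and then verifies this separately for the source functions (smooth compositions with $1+h$ bounded away from zero on $\mathcal U$) and for the matter functions (via the $C^3$ regularity of $M_{(\gamma,\lambda)}[g]$ from Lemma \ref{lem_reg_m}), treating $\mathfrak G_3$ by direct expansion. Your explicit Nemytskii-plus-bounded-linear-operator framing and the remark on the linearity of $\varphi\mapsto\varphi/\varrho$ are only presentational refinements of the same argument.
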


\begin{proof}
The operators $\mathfrak G_i$, $i=1,2,4,5,6$ are of similar structure and we will start by analysing these operators. Schematically one can write these operators as sums of expressions of the form
\begin{equation}
\mathfrak G_\Phi[\zeta; \gamma,\lambda](\varrho,z) = \int_{\mathbb R^n} G^n_y(|\varrho|,0,\dots, 0, z) \, \Phi^{(\gamma, \lambda)}(\varrho(y), \zeta(y), \zeta_{,\varrho}(y), \zeta_{,z}(y)) \, \mathrm dy
\end{equation}
where the function $\Phi^{(\gamma, \lambda)}:\mathbb R^{19}\to\mathbb R$ is a placeholder for either $g_i^{(\gamma)}$ or $M_i^{(\gamma, \lambda)}$. In order to write this in a compact and handy way we define the functional $\tilde G_n$ (which is slightly different from $G_n$, cf.~the definition (\ref{def_greens_gn}) of $G_n$) by
\begin{equation}
\tilde G_n\!\left[\Phi^{(\gamma,\lambda)},\zeta \right]\!(\varrho, z) := \int_{\mathbb R^n} G^n_y(|\varrho|,0,\dots, 0, z) \, \Phi^{(\gamma, \lambda)}(\varrho(y), \zeta(y), \zeta_{,\varrho}(y), \zeta_{,z}(y)) \, \mathrm dy.
\end{equation}
We will check now that the Fr\'echet derivative of $\mathfrak G_\Phi$ with respect to $\zeta_j$ is given by
\begin{multline} \label{f_der_dphi}
\left(D_{\zeta_j} \mathfrak G_\Phi[\zeta; \gamma, \lambda] \delta \zeta_j\right)(\varrho, z) \\= \tilde G_n\!\left[\left(\partial_{\zeta_j}\Phi^{(\gamma, \lambda)} \delta \zeta_j\right) + \left(\partial_{\zeta_{j,\varrho}}\Phi^{(\gamma, \lambda)} \, \partial_\varrho\left(\delta \zeta_{j}\right) \right) + \left(\partial_{\zeta_{j,z}} \Phi^{(\gamma, \lambda)} \, \partial_z\left(\delta \zeta_{j}\right)\right), \zeta\right].
\end{multline}
So we have to check that 
\begin{align}
&\Big\| \tilde G_n[\Phi^{(\gamma, \lambda)},\zeta+\delta \zeta_j] - \tilde  G_n[\Phi^{(\gamma, \lambda)},\zeta] \\
&\quad - \tilde G_n\!\left[\left(\partial_{\zeta_j}\Phi^{(\gamma, \lambda)} \delta \zeta_j\right) + \left(\partial_{\zeta_{j,\varrho}}\Phi^{(\gamma, \lambda)} \, \partial_\varrho\left(\delta \zeta_{j}\right) \right) + \left(\partial_{\zeta_{j,z}} \Phi^{(\gamma, \lambda)} \, \partial_z\left(\delta \zeta_{j}\right)\right), \zeta\right]\Big\|_{\mathcal X_\Phi} \nonumber \\
&= o\left(\|\delta \zeta_j\|_{\mathcal X_\Phi}\right). \nonumber 
\end{align}
Here $\mathcal X_\Phi$ is the function space corresponding to $\Phi^{(\gamma,\lambda)}$. I.e.~if $\Phi^{(\gamma,\lambda)}$ is for example $M_1^{(\gamma,\lambda)}$ then $\mathcal X_\Phi$ is $\mathcal X_1$. Define $m$ as the number how often functions in $\mathcal X_\Phi$ are continuously differentiable, i.e.~the largest number such that $\mathcal X_\Phi \subset C^{m,\alpha}$. By the standard elliptic estimate \cite[Theorem 10.3]{ll00} and the inclusion $C^{m+1} \subset C^{m,\alpha}$ it suffices to check
\begin{align}
&\sum_{|\sigma|\leq {m-1}} \bigg\| \partial^\sigma \bigg( 
\Phi^{(\gamma, \lambda)}(\cdot, \zeta + \delta \zeta_j, \nabla (\zeta+\delta \zeta_j)) 
- \Phi^{(\gamma, \lambda)}(\cdot, \zeta, \nabla \zeta) \label{cond_diffbar} \\
& \hspace{2.5cm} - \partial_{\zeta_j}\Phi^{(\gamma, \lambda)}(\cdot,\zeta,\nabla\zeta) \delta \zeta_j 
- \partial_{\zeta_{j,\varrho}}\Phi^{(\gamma, \lambda)}(\cdot,\zeta,\nabla\zeta) \, \partial_\varrho\left(\delta \zeta_{j}\right) \nonumber \\
& \hspace{6.1cm} - \partial_{\zeta_{j,z}} \Phi^{(\gamma, \lambda)}(\cdot,\zeta,\nabla\zeta) \, \partial_z \left(\delta \zeta_{j}\right)
\bigg) \bigg\|_\infty \leq o( \| \delta \zeta_j \|_{\mathcal X_i}). \nonumber
\end{align}
It turns out that (\ref{cond_diffbar}) holds if the functions $\Phi^{(\gamma, \lambda)}$ are sufficiently regular, i.e.~in $C^{m}$ to be precise. Now, $\Phi^{(\gamma, \lambda)}$ is either a source function $g_i^{(\gamma)}$ or a matter function $M_i^{(\gamma,\lambda)}$. The source functions are smooth in all of the variables $\zeta$, $\zeta_{,\varrho}$, and $\zeta_{,z}$, since they involve only the exponential function and addition, multiplication and division by $1+h$. Note here that $1+h > \frac 12$ if $(\zeta; \gamma, \lambda)\in\mathcal U$. \par
For the matter functions $M_i^{(\gamma, \lambda)}$, $i=1,2,4,5,6$, defined in equations (\ref{def_hat_m_1})--(\ref{def_hat_m_6}), we note that they do not depend on derivatives of $\zeta$ and that the regularity is determined by the functions $M_{(\gamma,\lambda)}[g]$ which are all $C^3$ by Lemma \ref{lem_reg_m} and this is sufficient. \par
The operator $\mathfrak G_3$ is easier to treat since the expression (\ref{def_g3}) can be expanded explicitly in powers of $\delta h$, $\delta \nu$, $\delta \omega$, $\delta A_t$, and $\delta a$. Note again that $1+h$ is bounded away from zero for all $(\zeta; \gamma, \lambda)\in \mathcal U$.
\end{proof}

In the next step we calculate the Fr\'echet derivatives of $\mathfrak G_i$, $i=1,\dots,6$ and evaluate them at $(\zeta_0;0,0)$. The parts of $\mathfrak G_i$, $i = 1,\dots,6$ involving the source functions $\mathfrak g_i$ can be expanded directly, i.e.~we calculate the Fr\'echet derivative at $(\zeta_0; 0, 0)$ by replacing $\mathfrak g_i[\zeta;\gamma](\varrho, z)$ in the integral expressions (\ref{def_g15})--(\ref{def_g46}) with the $\epsilon$-derivatives of $\mathfrak g_i[\zeta + \epsilon \delta\zeta_j;\gamma](\varrho, z)$ evaluated at $\epsilon = 0$ and then at $(\zeta_0;0,0)$. The non-zero derivatives are
\begin{align}
\left[\partial_\epsilon \mathfrak g_1[\zeta + \epsilon \delta h;\gamma](\varrho, z) \Big|_{\epsilon = 0}\right]_{(\zeta; \gamma, \lambda)=(\zeta_0;0,0)} &= - (\nabla U_N \cdot \nabla \delta h)(\varrho, z), \\
\left[\partial_\epsilon \mathfrak g_3[\zeta + \epsilon \delta h;\gamma](\varrho, z) \Big|_{\epsilon = 0}\right]_{(\zeta; \gamma, \lambda)=(\zeta_0;0,0)} &= \frac{\varrho}{2} (\partial_{\varrho\varrho} \delta h - \partial_{zz} \delta h)( \varrho, z) + \partial_\varrho \delta h(\varrho, z), \\
\left[\partial_\epsilon \mathfrak g_5[\zeta + \epsilon \delta h;\gamma](\varrho, z) \Big|_{\epsilon = 0}\right]_{(\zeta; \gamma, \lambda)=(\zeta_0;0,0)} &= - (\nabla A_N \cdot \nabla \delta h)(\varrho, z).
\end{align}
The notation here should be interpreted as $\zeta + \epsilon \delta h = (\nu, h + \epsilon \delta h, \xi, \omega, A_t, a)$. For the parts involving the matter functions we use formula (\ref{f_der_dphi}), where $\Phi^{(\gamma, \lambda)}$ is replaced by the matter functions $M_i^{(\gamma, \lambda)}$, $i=1,\dots,6$, given in  (\ref{def_hat_m_1})--(\ref{def_hat_m_6}). The matter functions  $M_i^{(\gamma, \lambda)}$, $i=1,\dots,6$ depend only on $\zeta$ and not on its derivatives. \par
First we consider the matter functions 
\begin{align*}
M_1^{(\gamma,\lambda)}(\varrho, \zeta) &:= 4\pi e^{2(\xi-\gamma\nu)} \int_{\mathbb R_v^3} \phi(E)\psi(\lambda, L) \frac{1 + 2\gamma |v|^2}{\sqrt{1+\gamma |v|^2}} \, \mathrm d^3v, \\
M_2^{(\gamma,\lambda)}(\varrho, \zeta) &:= 8\pi \gamma^2 (1+h) e^{2(\xi-\gamma\nu)} \int_{\mathbb R_v^3} \phi(E)\psi(\lambda, L) \frac{(v^1)^2+(v^2)^2}{\sqrt{1+\gamma |v|^2}} \, \mathrm d^3 v,\\
M_4^{(\gamma,\lambda)}(\varrho, \zeta) &:= -\frac{16\pi \gamma}{\varrho(1+h)} e^{2\xi}\int_{\mathbb R_v^3} \phi(E)\psi(\lambda, L) v^3\,\mathrm d^3 v, \\
\end{align*}
of the Einstein equations, given in (\ref{def_hat_m_1})--(\ref{def_hat_m_4}), where $\mathrm d^3v = \mathrm dv^1\mathrm dv^2 \mathrm dv^3$. If one calculates the derivative of $M_i^{(\gamma, \lambda)}(\varrho, \zeta)$, $i=1,2,4$, with respect to any of the arguments $\nu, h, \xi, \omega, A_t, a$ one obtains back an expression with the same structure, possibly with the function $\partial_{\zeta_j}(\phi(E)\psi(\lambda,L))$ instead of $\phi(E)\psi(\lambda,L)$ in the integral. \par
In the limit $\gamma\to 0$ only the terms where the $\gamma$-factors cancel will remain. Thus $\partial_{\zeta_j}M_i^{(\gamma, \lambda)}(\varrho, \zeta) |_{\gamma=0} = 0$ for $i=2,4$, and 
\begin{align}
\partial_{\xi} M_1^{(\gamma, \lambda)}(\varrho, \zeta) \Big|_{\gamma=0} &= 2 \mathfrak M_1[\zeta; \gamma, \lambda] \Big|_{\gamma=0}, \\
\partial_{\zeta_j} M_1^{(\gamma, \lambda)}(\varrho, \zeta) \Big|_{\gamma=0} &= 4\pi  e^{2(\xi - \gamma\nu)} \int_{\mathbb R_v^3} \partial_{\zeta_j} \left( \phi(E) \psi(\lambda,L) \right) \sqrt{1 + \gamma |v|^2}\, \mathrm d^3 v\Big|_{\gamma=0}, \label{second_d_m}
\end{align} 
where $j=1,2,4,5,6$. Consider now the term $\partial_{\zeta_j}(\phi(E)\psi(\lambda,L))$ in (\ref{second_d_m}). First we observe that the assumption $\psi(0,L) = 1$ implies $\partial_L \psi(0,L) = 0$. This yields already
\begin{align}
\lim_{(\gamma,\lambda) \to (0,0)} \left[\partial_{\zeta_j}(\phi(E) \psi(\lambda, L))\right]_{\zeta = \zeta_0} &= \lim_{(\gamma,\lambda) \to (0,0)} \left[\psi(\lambda, L) \partial_{\zeta_j}\phi(E) \right]_{\zeta = \zeta_0} \\
&= \lim_{(\gamma,\lambda) \to (0,0)} \left[\partial_{\zeta_j}\phi(E) \right]_{\zeta = \zeta_0}. \label{ext_psi_ass}
\end{align}
If we now set $\zeta=\zeta_0$ and consider the limit $(\gamma, \lambda) \to (0, 0)$ only the derivatives with respect to $\nu$ and $A_t$ are non-vanishing. The derivative with respect to $a$ vanishes due to (\ref{ext_psi_ass}) and the fact that $E$ is independent of $a$. The derivatives with respect to $h$, $\xi$, $\omega$ vanish by symmetry. This can be seen as follows. We have 
\begin{equation}
E|_{\zeta = \zeta_0} =  \frac{e^{\gamma \nu_N} \sqrt{1+\gamma |v|^2}- 1}{\gamma} + q A_N, 
\end{equation}
and therefore
\begin{equation} \label{lim_p_p}
\lim_{(\gamma, \lambda) \to (0, 0)} \phi \left(E|_{\zeta = \zeta_0}\right) \psi \left( \lambda, L|_{\zeta = \zeta_0} \right) = \phi\left(\frac{|v|^2}{2} + \nu_N + q A_N\right)
\end{equation}
where the Newtonian limit (\ref{newtonian_l_e}) of the energy and the assumption $\psi(0, L)=1$ on $\psi$ has been used. Observe that the limit (\ref{lim_p_p}) is even in $v^1$, $v^2$, $v^3$. Consider next the derivatives
\begin{equation}
\partial_h E = \varrho \omega e^{-\gamma\nu} v^3, \quad \partial_\xi E = 0, \quad \partial_\omega E = \varrho (1+h) e^{-\gamma\nu} v^3.
\end{equation}
These derivatives are either zero or odd in $v^3$. Integration over an odd-in-$v^3$ function yields zero. \par
For the derivatives with respect to $\nu$ and $A_t$ the same principles apply, however not all terms vanish. Consider for example $\partial_\nu M_1^{(\gamma, \lambda)}(\varrho, \zeta)$. One obtains
\begin{align*}
\partial_\nu M_1^{(\gamma,\lambda)}(\varrho, \zeta) &= -8\pi \gamma e^{2(\xi-\gamma\nu)} \int_{\mathbb R_v^3} \phi(E)\psi(\lambda, L) \frac{1 + 2\gamma |v|^2}{\sqrt{1+\gamma |v|^2}}\, \mathrm d^3v \\
& \quad + 4\pi e^{2(\xi-\gamma\nu)} \int_{\mathbb R_v^3} \phi(E)' \psi(\lambda, L) \left(e^{\gamma\nu} + \frac{\gamma \omega \varrho (1+h) v^3}{\sqrt{1+\gamma |v|^2}} \right) \left(1 + 2\gamma |v|^2\right)\, \mathrm d^3v \\
& \quad - 4\pi \gamma \varrho (1+h) e^{2\xi - 3\gamma\nu} \int_{\mathbb R_v^3} \partial_\nu \phi(E) \partial_L\psi(\lambda, L) v^3 \frac{1 + 2\gamma |v|^2}{\sqrt{1+\gamma |v|^2}}\, \mathrm d^3v.
\end{align*}
So the derivatives of the matter function $M_1^{(\gamma, \lambda)}$ of the Einstein equations which are non-vanishing at $(\zeta_0; 0, 0)$ are
\begin{align}
\partial_{\nu} M_1^{(\gamma, \lambda)} \Big|_{(\zeta_0; 0,0)} &= 4\pi \alpha_N, \\
\partial_{\xi} M_1^{(\gamma, \lambda)} \Big|_{(\zeta_0; 0,0)} &= 8\pi \rho_N, \\
\partial_{A_t} M_1^{(\gamma, \lambda)} \Big|_{(\zeta_0; 0,0)} &= 4\pi q \alpha_N,
\end{align} 
where $\rho_N$ and $\alpha_N$ are defined in (\ref{def_rho_u}) and (\ref{def_alpha_u}), respectively. Next we consider the matter functions 
\begin{align*}
M_5^{(\gamma,\lambda)}(\varrho, \zeta) &:= 4\pi q e^{2\xi-3\gamma\nu} \int_{\mathbb R_v^3} \phi(E)\psi(\lambda, L) \left(e^{2\gamma\nu} + \frac{\gamma \varrho (1+h) \omega v^3}{\sqrt{1+\gamma |v|^2}}\right) \, \mathrm d^3 v, \\
M_6^{(\gamma,\lambda)}(\varrho, \zeta) &:= -4\pi q \gamma \varrho (1+h) e^{2\xi-3\gamma\nu} \int_{\mathbb R_v^3} \phi(E)\psi(\lambda, L) \frac{v^3}{\sqrt{1+\gamma |v|^2}}\, \mathrm d^3 v,
\end{align*}
of the Maxwell equations in the representations given in (\ref{def_hat_m_5})--(\ref{def_hat_m_6}). The first observation is that if $\gamma=0$ then all terms but the first one of $M_5^{(\gamma, \lambda)}$ vanish. So we only need to discuss the derivatives of
\begin{equation}
4\pi q e^{2\xi-\gamma\nu} \int_{\mathbb R_v^3} \phi(E) \psi(\lambda,L) \, \mathrm d^3 v.
\end{equation}
By the same reasoning as above we obtain
\begin{align}
\partial_{\nu} M_5^{(\gamma, \lambda)} \big|_{(\zeta_0; 0,0)} &= 4 \pi q \alpha_N, \\
\partial_{\xi} M_5^{(\gamma, \lambda)}  \big|_{(\zeta_0; 0,0)} &= 8 \pi q \rho_N, \\
\partial_{A_t} M_5^{(\gamma, \lambda)} \big|_{(\zeta_0; 0,0)} &= 4 \pi q^2 \alpha_N.
\end{align}
We denote the Fr\'echet derivative of $\mathfrak F$ with respect to $\zeta$, at $(\zeta_0;0,0)$, by $\mathfrak L$, i.e.
\begin{align}
&\mathfrak L := D\mathfrak F[\zeta_0; 0,0]: \mathcal X \to \mathcal X, \label{def_l} \\
&\delta \zeta \mapsto \mathfrak L(\delta \zeta) = (\delta \nu - \mathfrak L_1(\delta \nu, \delta h, \delta \xi, \delta A_t), \delta h, \delta \xi - \mathfrak L_3(\delta h), \delta \omega, \delta A_t - \mathfrak L_5(\delta\nu, \delta \xi, \delta A_t), \delta a), \nonumber
\end{align}
where
\begin{align}
\mathfrak L_1(\delta \nu, \delta h, \delta \xi, \delta A_t) &= -\mathfrak L_1^{(1)}(\delta \nu + q \delta A_t) - \mathfrak L_1^{(2)} (\delta \xi) + \mathfrak L_1^{(3)}(\delta h), \\
\mathfrak L_3(\delta h) &= \delta h(0,z) + \int_0^\varrho \left(\frac{s}{2} (\partial_{\varrho\varrho} \delta h - \partial_{zz} \delta h)(s,z) + \partial_\varrho \delta h(s, z) \right) \mathrm ds,\\
\mathfrak L_5(\delta \nu, \delta h, \delta \xi, \delta A_t) &= q \mathfrak L_1^{(1)}(\delta \nu + q \delta A_t) + q \mathfrak L_1^{(2)} (\delta \xi) + q \mathfrak L_5^{(3)}(\delta h),
\end{align}
where
\begin{align}
\mathfrak L_1^{(1)}(\delta u) &= \int_{\mathbb R^3} \left(\frac{1}{|x-y|} - \frac{1}{|y|} \right) \alpha_N(|y|) \delta u(\varrho_y, z_y) \, \mathrm dy, \\
\mathfrak L_1^{(2)}(\delta \xi) &=  2 \int_{\mathbb R^3}  \left(\frac{1}{|x-y|} - \frac{1}{|y|} \right) \rho_N(|y|) \delta \xi(\varrho_y, z_y) \, \mathrm dy, \\
\mathfrak L_1^{(3)}(\delta h) &=  \frac{1}{4\pi} \int_{\mathbb R^3} \frac{1}{|x-y|} \nabla \nu_N(|y|) \cdot \nabla (\delta h)(\varrho_y, z_y) \, \mathrm dy, \\
\mathfrak L_5^{(3)}(\delta h) &=  \frac{1}{4\pi} \int_{\mathbb R^3} \frac{1}{|x-y|} \nabla A_N(|y|) \cdot \nabla (\delta h)(\varrho_y, z_y) \, \mathrm dy.
\end{align}

The shorthands $\rho_N = \rho_{U_N}$ and $\alpha_N = \alpha_{U_N}$ are defined in (\ref{def_rho_u}) and (\ref{def_alpha_u}), respectively, where $U_N = \nu_N + q A_N$, and the functions $\nu_N$ and $A_N$ are defined as the solutions of the system (\ref{newt_1})--(\ref{newt_2}). 

\begin{lemma} \label{lem_bijection}
$\mathfrak L$ is a bijection.
\end{lemma}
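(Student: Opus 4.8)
The plan is to write $\mathfrak L$ as the identity minus a compact operator, invoke the Fredholm alternative so that only injectivity has to be checked, and then compute the kernel. The place where the charge genuinely enters is the coupling of the $\nu$- and $A_t$-components; I would remove it by a linear change of variables, reducing the whole problem to the uncharged situation of \cite[Lemma 6.2]{akr11} with the coupling constant $1$ replaced by $1-q^2>0$.

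\emph{Step 1 (Fredholm reduction).} From the explicit formula \eqref{def_l} one reads off $\mathfrak L=\mathrm{id}_{\mathcal X}-\mathfrak K$, where $\mathfrak K\,\delta\zeta$ has vanishing $h$-, $\omega$- and $a$-components and its $\nu$-, $\xi$- and $A_t$-components are $\mathfrak L_1$, $\mathfrak L_3$ and $\mathfrak L_5$. I would show $\mathfrak K:\mathcal X\to\mathcal X$ is compact, along the lines of \cite[Lemma 6.2]{akr11}: the building blocks $\mathfrak L_1^{(1)},\mathfrak L_1^{(2)},\mathfrak L_1^{(3)},\mathfrak L_5^{(3)}$ are convolutions of the (modified) Newtonian kernel with a compactly supported density, hence by Lemma~\ref{lem_decay}, the smoothing of $\rho_N$ and $\alpha_N$, and the improved spatial decay $(1+|x|)^{-2}$ for the $\nu$- and $A_t$-components noted in the remark after Lemma~\ref{lem_sol_decay}, they map bounded subsets of $\mathcal X_1$ (resp.\ of $\mathcal X_2$) into subsets of $\mathcal X_1$ that are bounded in a slightly stronger H\"older norm and decay strictly faster than required; $\mathfrak L_3$ gains a derivative through the radial integration and acts into $C^{1,\alpha}$ over the bounded set $Z_R$. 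Arzel\`a--Ascoli together with the decay then gives precompactness, so $\mathfrak L$ is Fredholm of index $0$ and it suffices to prove $\ker\mathfrak L=\{0\}$.

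\emph{Step 2 (reduction of the kernel equation).} Let $\delta\zeta\in\ker\mathfrak L$. The $h$-, $\omega$- and $a$-components of $\mathfrak L\,\delta\zeta=0$ give $\delta h=\delta\omega=\delta a=0$, and then the $\xi$-component gives $\delta\xi=\mathfrak L_3(\delta h)=0$. What remains is $\delta\nu=-\mathfrak L_1^{(1)}(\delta\nu+q\,\delta A_t)$ and $\delta A_t=q\,\mathfrak L_1^{(1)}(\delta\nu+q\,\delta A_t)$. Putting $u:=\delta\nu+q\,\delta A_t\in\mathcal X_1$ and forming the first relation plus $q$ times the second yields
\[
u+(1-q^2)\,\mathfrak L_1^{(1)}(u)=0 ,
\]
and conversely, once $u\equiv0$ is known, the two relations force $\delta\nu=\delta A_t=0$. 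Since $\mathfrak L_1^{(1)}$ uses the kernel $|x-y|^{-1}-|y|^{-1}$ we have $u(0)=0$; also $u\to0$ at spatial infinity, and applying $-\Delta_3$ gives
\[
\Delta_3 u=4\pi(1-q^2)\,\alpha_N\,u\quad\text{on }\mathbb R^3 ,
\]
which is precisely the linearisation at $U_N$ of the Newtonian equation $\Delta U_N=4\pi(1-q^2)\rho_N$; note $1-q^2>0$ because $q<1$.

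\emph{Step 3 (vanishing of $u$).} I would expand $u=\sum_\ell u_\ell(r)P_\ell(\cos\theta)$; axial symmetry together with evenness in $z$ leaves only even $\ell$, in particular no $\ell=1$ term — this is where the symmetry restrictions built into $\mathcal X_1$ are essential, since the translation modes $\partial_i U_N$ solve the same homogeneous equation and would otherwise lie in the kernel. For $\ell=0$ the substitution $w=ru_0$ turns the radial equation into the regular ODE $w''=4\pi(1-q^2)\alpha_N w$; boundedness of $u_0$ at the origin forces $w(0)=0$ and $u(0)=0$ forces $w'(0)=0$, whence $w\equiv0$ and $u_0\equiv0$. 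For even $\ell\geq2$ each mode satisfies $\Delta_3(u_\ell P_\ell)=4\pi(1-q^2)\alpha_N u_\ell P_\ell$ separately; multiplying by $u_\ell P_\ell$ and integrating over $\mathbb R^3$ (the boundary term at infinity vanishes since $u$ is harmonic, hence $O(|x|^{-1})$, outside $\mathrm{supp}\,\alpha_N$) gives, up to a positive constant,
\[
\int_0^\infty\!\Big(r^2(u_\ell')^2+\big(\ell(\ell+1)+4\pi(1-q^2)r^2\alpha_N(r)\big)u_\ell^2\Big)\,\mathrm dr=0 .
\]
As $\ell(\ell+1)\geq6$ for $\ell\geq2$ and condition~(3) gives $6+4\pi(1-q^2)r^2\alpha_N(r)>0$, the integrand is non-negative, so $u_\ell\equiv0$; hence $u\equiv0$, thus $\delta\nu=\delta A_t=0$ and $\delta\zeta=0$. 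With Step 1 this proves that $\mathfrak L$ is a bijection. The technically heaviest part is the compactness in Step 1, which leans entirely on the regularity and decay bookkeeping already assembled for $\mathfrak F$ (Lemmas~\ref{lem_reg}, \ref{lem_sol_decay} and the ensuing remark); the conceptually new point is the decoupling in Step 2, after which condition~(3), stated precisely with the factor $1-q^2$, is exactly what makes the energy identity in Step 3 positive.
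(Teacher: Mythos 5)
Your Steps 2 and 3 are correct and reproduce the paper's injectivity argument: the same triangular elimination $\delta h=\delta\omega=\delta a=0\Rightarrow\delta\xi=0$, the same combination $u=\delta\nu+q\,\delta A_t$ leading to $\Delta u=4\pi(1-q^2)\alpha_N u$ with $u(0)=0$, and your spherical-harmonics analysis is exactly the content of \cite[Section 6]{akr11} that the paper cites (including the role of evenness in $z$ in excluding the translation mode and of condition (3) for $\ell\geq 2$). The gap is in Step 1. The operator $\mathfrak K=\mathrm{id}_{\mathcal X}-\mathfrak L$ is \emph{not} compact on $\mathcal X$, because $\mathfrak L_3$ is not compact from $\mathcal X_2$ to $\mathcal X_3=C^{1,\alpha}(Z_R)$: the integral $\int_0^\varrho\tfrac s2(\partial_{\varrho\varrho}\delta h-\partial_{zz}\delta h)\,\mathrm ds$ gains a derivative only in $\varrho$, not in $z$. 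For $\delta h$ bounded in $C^{3,\alpha}$ the derivative $\partial_z\mathfrak L_3(\delta h)$ contains $\int_0^\varrho\tfrac s2\,\partial_{zzz}\delta h\,\mathrm ds$, which is merely \emph{bounded} in $C^{0,\alpha}$; taking $\delta h_n\sim n^{-3-\alpha}\sin(nz)\chi(\varrho)$ shows that the image of the unit ball is not precompact in $C^{1,\alpha}$. So ``identity minus compact'' fails and the Fredholm alternative cannot be invoked for the whole of $\mathfrak L$ in one stroke.

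The fix is to use for surjectivity the same block-triangular structure you already exploit in Step 2; this is what the paper does. Given $b\in\mathcal X$ set $\delta h=b_2$, $\delta\omega=b_4$, $\delta a=b_6$, then $\delta\xi=b_3+\mathfrak L_3(\delta h)$ (no compactness needed, only boundedness of $\mathfrak L_3$), and finally solve the scalar equation $\delta u+(1-q^2)\mathfrak L_1^{(1)}(\delta u)=(b_1+qb_5)+\dots$ for $\delta u=\delta\nu+q\,\delta A_t$. Only at this last stage is compactness required, and only of the single operator $\mathfrak L_1^{(1)}$ on $\mathcal X_1$, which is \cite[Lemma 6.2]{akr11} and does hold (smoothing of the Newton kernel against the compactly supported $\alpha_N$, together with the decay improvement noted after Lemma \ref{lem_sol_decay}). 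Injectivity of this scalar equation is your Step 3, so $\mathrm{id}+(1-q^2)\mathfrak L_1^{(1)}$ is invertible on $\mathcal X_1$, and $\delta\nu$, $\delta A_t$ are then recovered from $\delta u$ and the first component of $\mathfrak L(\delta\zeta)=b$. In short: your kernel computation is right and is essentially the paper's; your surjectivity mechanism must be replaced by successive elimination, reserving the Fredholm alternative for the decoupled $(\nu,A_t)$-block.
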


\begin{proof}
First we prove that $\mathfrak L$ is injective. Since $\mathfrak L$ is linear it suffices to show that $\mathrm{ker}(\mathfrak L) = 0$. Let $\delta \zeta \in \mathcal X$ such that $\mathfrak L(\delta \zeta) = 0$. From the definition of $\mathfrak L$ in (\ref{def_l}) we immediately read off $\delta h = \delta \omega = \delta a = 0$. Consequently $\mathfrak L_3(\delta h)=0$ and therefore also $\delta \xi = 0.$ Since $\delta h = \delta\xi = 0$ and thus $\mathfrak L_1^{(3)}(\delta h) = \mathfrak L_5^{(3)}(\delta h) = \mathfrak L_1^{(2)}(\delta \xi) =0$ we can furthermore read off $\delta A_t = -q \delta \nu$. We finish the proof of injectivity by showing that $\delta \nu + q \delta A_t = 0$. To simplify notation we denote in the following $\delta u = \delta\nu + q\delta A_t \in \mathcal X_1$. Those two identities will then imply $(1-q^2)\delta u=0$ and therefore $\delta \nu=0$ and $\delta A_t = 0$. \par
Adding the first and $q$ times the fifth component of $\mathfrak L(\delta\zeta) = 0$ yields
\begin{equation}
\delta u = - \left(1-q^2\right) \int_{\mathbb R^3} \left(\frac{1}{|x-y|} - \frac{1}{|y|} \right) \alpha_N(|y|) \, \delta u(\varrho_y, z_y) \, \mathrm dy.
\end{equation}
This is a solution of
\begin{align}
\Delta (\delta u) &= \left( 1 - q^2 \right)  \alpha_N \, \delta u, \label{zero_p_eq_1} \\
(\delta u)(0) &= 0. \label{zero_p_eq_2}
\end{align}
In \cite[Section 6]{akr11} it has been shown that this is the only solution of (\ref{zero_p_eq_1})--(\ref{zero_p_eq_2}), provided that $6+4\pi r^2 (1-q^2) \alpha_N(r) > 0$ which is assumed. \par
Next we show that $\mathfrak L$ is surjective. Let $b=(b_1,\dots, b_6)\in \mathcal X$ be given. The aim is now to construct $\delta \zeta = (\delta \nu, \delta h, \delta \xi, \delta\omega, \delta A_t, \delta a)\in\mathcal X$ such that 
\begin{equation} \label{ldzg}
\mathcal L(\delta \zeta) = b.
\end{equation}
By inspecting the formula (\ref{def_l}) of $\mathfrak L$ we immediately see that we have to choose $\delta h = b_2$, $\delta \omega = b_4$, $\delta a = b_6$. In the third component of (\ref{ldzg}) we obtain
\begin{equation}
\delta \xi =  b_3 + \mathfrak L_3(\delta h),
\end{equation}
which is in $\mathcal X_3$ since $\mathfrak L_3(\delta h) \in \mathcal X_3$. It remains to construct $\delta \nu$ and $\delta A_t$. Note first that $\mathfrak L_1^{(2)}(\delta\xi), \mathfrak L_1^{(3)}(\delta h), \mathfrak L_5^{(3)}(\delta h) \in \mathcal X_1$ (recall $\mathcal X_5 = \mathcal X_1$). We add the first component of (\ref{ldzg}) and $q$ times the fifth component of (\ref{ldzg}). We obtain
\begin{equation}
\delta u - \left(1-q^2\right) \mathfrak L_1^{(1)}(\delta u) = (b_1 + q b_5) - \left(1-q^2\right) \mathfrak L_1^{(2)}(\delta \xi) + \left(\mathfrak L_1^{(3)} + q^2 \mathfrak L_5^{(3)}\right) (\delta h). 
\end{equation}
This equation has a solution $\delta u \in \mathcal X_1$ since the operator $\mathfrak L_1^{(1)}$ is compact. This has been established in \cite[Lemma 6.2]{akr11}. Then, considering the first component of (\ref{ldzg}) again, we can construct $\delta \nu$ via
\begin{equation}
\delta \nu = b_1 - \mathfrak L_1^{(1)}(\delta u) - \mathfrak L_1^{(2)}(\delta \xi) +\mathfrak L_1^{(3)}(\delta h).
\end{equation}
Finally, we obtain $\delta A_t$ via $\delta A_t = \frac{1}{q} (\delta u-\delta\nu)$.
\end{proof}

\section{Application of the implicit function theorem}

In the preceding sections we have established that the solution operator $\mathfrak F$ fulfils the assumptions of the implicit function theorem for Banach spaces. Now we can prove the following proposition.

\begin{prop} \label{prop_appli}
There exist solutions $\zeta = (\nu, h, \xi, \omega, A_t, a)$ to the reduced EVM-system (\ref{final_eq_nu})--(\ref{final_eq_a}) with parameters $\gamma \in [0,\delta)$, $\lambda \in (-\delta, \delta)$ if $\delta$ is chosen sufficiently small that satisfy the boundary conditions (\ref{bc_infinity}) and (\ref{bc_center}).
\end{prop}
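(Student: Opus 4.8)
The plan is to apply the implicit function theorem on Banach spaces, cf.~\cite[Theorem 15.1]{d85}, to the operator $\mathfrak F:\mathcal U\to\mathcal X$ at the point $(\zeta_0;0,0)$, and then to correct the resulting solution so that the asymptotic boundary condition (\ref{bc_infinity}) holds. First I would collect the hypotheses of the implicit function theorem, all of which were prepared in the preceding sections: $\mathfrak F$ is well defined on $\mathcal U$ (Section \ref{sect_well_defined}); $\mathfrak F[\zeta_0;0,0]=0$ (Lemma \ref{lem_n_zero}); $\mathfrak F$ is continuous at $(\zeta_0;0,0)$ and continuously Fr\'echet differentiable with respect to $\zeta$ there (Section \ref{sect_frechet_dir}); and the Fr\'echet derivative $\mathfrak L=D\mathfrak F[\zeta_0;0,0]:\mathcal X\to\mathcal X$ is a bounded linear bijection (Lemma \ref{lem_bijection}), hence a topological isomorphism by the open mapping theorem. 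A minor point to settle here is that $\gamma$ ranges over the half-open interval $[0,\delta)$: since every quantity entering $\mathfrak F$ (in particular $v^0=\sqrt{1+\gamma|v|^2}$ and $E=(e^{\gamma\nu}v^0-1)/\gamma+\dots$) extends real-analytically in $\gamma$ across $\gamma=0$, the operator extends to $\gamma$ in a full neighbourhood of $0$, the usual implicit function theorem applies, and one restricts to $\gamma\ge 0$. This produces $\delta>0$ and a continuous map $\mathfrak Z:[0,\delta)\times(-\delta,\delta)\to\mathcal X$ with $\mathfrak Z(0,0)=\zeta_0$ and $\mathfrak F[\mathfrak Z(\gamma,\lambda);\gamma,\lambda]=0$; shrinking $\delta$ one keeps $(\mathfrak Z(\gamma,\lambda);\gamma,\lambda)\in\mathcal U$.

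By Proposition \ref{prop_consistent}, $\zeta:=\mathfrak Z(\gamma,\lambda)$ then solves the equations (\ref{final_eq_nu})--(\ref{final_eq_a}) with parameters $\gamma,\lambda$. The central boundary condition (\ref{bc_center}) is automatic from the definition (\ref{def_g3}) of $\mathfrak G_3$, since $\xi(0,z)=\mathfrak G_3[\zeta;\gamma,\lambda](0,z)=\ln(1+h(0,z))$. For the decay at spatial infinity (\ref{bc_infinity}) I would argue that the components $h,\xi,\omega,a$ tend to zero: $\omega$ and $a$ by $\omega,a\in\mathcal X_4$ together with the bounds $|\omega|,|a|\le C(1+|x|)^{-3}$ of Lemma \ref{lem_sol_decay}; $h=\mathfrak G_2[\zeta;\gamma,\lambda]$ because, by (\ref{def_g2}), it is the four-dimensional Newtonian potential of the compactly supported source $\mathfrak M_2$ (Lemma \ref{lem_comp_supp}); and $\xi=\ln(1+h(0,z))+\int_0^\varrho\mathfrak g_3[\zeta;\gamma](s,z)\,\mathrm ds\to 0$ as $|z|\to\infty$ because $h(0,z)\to0$ and $\mathfrak g_3$ decays in $z$.

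The genuine obstacle, and the heart of this proposition, is that $\nu$ and $A_t$ do \emph{not} decay to zero but to the limits $\nu_\infty^{\gamma,\lambda}$ and $A_\infty^{\gamma,\lambda}$ of Lemma \ref{lem_sol_decay}, which are non-zero even at $(\gamma,\lambda)=(0,0)$, where $\nu_\infty^{0,0}=\lim_{|x|\to\infty}\nu_N(x)\ne 0$. To obtain an asymptotically flat solution I would subtract these limits, essentially setting $\hat\nu:=\nu-\nu_\infty^{\gamma,\lambda}$, $\hat A_t:=A_t-A_\infty^{\gamma,\lambda}$, and adjusting $\omega$, $\mu$ and the $(\varrho,z)$-coordinates accordingly: at the level of the metric (\ref{ansatz_metric}), shifting $\nu$ by a constant together with a rescaling $t\mapsto e^{-\gamma\nu_\infty^{\gamma,\lambda}}t$, $(\varrho,z)\mapsto e^{\gamma\nu_\infty^{\gamma,\lambda}}(\varrho,z)$ is an isometry that preserves the form (\ref{ansatz_metric}) and forces $\hat\mu\to0$ and $\hat\omega\to0$, while the shift of $A_t$ is a gauge transformation of $A$ leaving $F$ unchanged. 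Exactly as in \cite{akr14}, propagating this transformation through the conserved quantities $E$ and $L$ shows that it amounts to replacing the ansatz function $\phi$ by its composition with an affine map --- still admissible for the construction and compatible with Propositions \ref{prop_consistent} and \ref{prop_equivalent} --- and that the reduced equations (\ref{final_eq_nu})--(\ref{final_eq_a}) transform covariantly under the accompanying coordinate rescaling. The main effort in carrying this out rigorously is the term-by-term bookkeeping of the additional Maxwell terms under the shift and rescaling; since each of them either already occurs in \cite{akr14} or transforms in the same way, the argument of \cite{akr14} carries over, so that $\hat\zeta=(\hat\nu,h,\xi,\omega,\hat A_t,a)$ is a solution of the reduced EVM-system satisfying both (\ref{bc_infinity}) and (\ref{bc_center}), which is the assertion. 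Together with Proposition \ref{prop_equivalent} and, for normalising the speed of light, the scaling law Lemma \ref{lem_scaling}, this yields the solutions of Theorem \ref{main_theorem}.
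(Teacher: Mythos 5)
Your proposal is correct and follows essentially the same route as the paper: apply the implicit function theorem at $(\zeta_0;0,0)$ using the properties established in Sections \ref{sect_well_defined} and \ref{sect_frechet_dir}, read off the boundary conditions for $h,\xi,\omega,a$ and (\ref{bc_center}) directly from the structure of the solution operators and the decay estimates, and then remove the non-zero limits $\nu_\infty^{\gamma,\lambda}$, $A_\infty^{\gamma,\lambda}$ by the same shift-and-rescale of $\nu,\mu,\omega,A_t,a$ that the paper performs. You in fact supply slightly more justification than the paper for why the rescaled functions still solve the reduced system (gauge freedom of $A$, affine reparametrisation of $\phi$, covariance under the coordinate rescaling), which is consistent with the argument in \cite{akr14} that the paper implicitly relies on.
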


\begin{proof}
The solution $\zeta  = (\nu, h, \xi, \omega, A_t, a)$ exists by virtue of the implicit function theorem. The functions $\omega$, $\xi$, $h$, and $a$ fulfil the boundary condition 
\begin{equation*}
\lim_{|(\varrho, z)\to \infty} (|\omega| + |\xi| + |h| + |a|) = 0
\end{equation*}
by construction. For $\omega$ and $a$ see the definition (\ref{def_norm_x4}) of the norm of the space $\mathcal X_4$. Analogously, with Lemma \ref{lem_decay}, it follows that $h$ fulfils the boundary condition. By inspecting the structure (\ref{def_g3}) of the solution operator $\mathfrak G_3$ one easily sees that the boundary condition 
\begin{equation} \label{bc_center_2}
\xi(0,z) = \ln(1+h(0,z))
\end{equation}
is satisfied, too. For the boundary condition of $\xi$ at infinity one infers first from (\ref{bc_center_2}) that $\lim_{|z|\to\infty}\xi(0,z) = 0$, and then the decay as $\varrho\to\infty$ can be deduced from the decay of the integrand of the solution operator $\mathfrak G_3$, cf.~formula (\ref{def_g3}) and \cite[Prop. 2.3]{akr11}. The solution functions $\nu$, $A_t$ obtained from the implicit function theorem do however a priori not satisfy the boundary condition
\begin{equation*}
\lim_{|(\varrho, z)| \to \infty} (|\nu| + A_t) = 0
\end{equation*}
and we define
\begin{equation*}
 \nu_\infty^{\gamma, \lambda} := \lim_{|(\varrho, z)| \to \infty} |\nu|, \quad A_\infty^{\gamma, \lambda} := \lim_{|(\varrho, z)| \to \infty} |A_t|.
\end{equation*}
A rescaling is necessary. The functions 
\begin{equation*}
\nu - \nu_\infty^{\gamma, \; \lambda}, \; \mu + \gamma \nu_\infty^{\gamma, \lambda}, \; h, \; e^{-\gamma \nu_\infty^{\gamma, \lambda}}, \; e^{-\gamma\nu_\infty^{\gamma, \lambda}} (A_t-A_\infty^{\gamma,\lambda}), \; e^{\gamma \nu_\infty^{\gamma, \lambda}} a
\end{equation*}
then fulfil the reduced EVM-system with the boundary conditions which correspond to an asymptotically flat solution.
\end{proof}

\section*{Appendix}

\begin{proof}[Proof of Lemma \ref{lem_conserved_quan}]
Recall the definition of the transport operator,
\begin{equation*}
\mathfrak T = p^\mu \partial_\mu + \left(q F^\gamma{}_{\mu} \, p^\mu - \Gamma^\gamma_{\alpha\beta} p^\alpha p^\beta\right) \partial_{p^\gamma}.
\end{equation*}
Now it shall be expressed with respect to the frame coordinates (\ref{def_frame}). First we derive the form of $\mathfrak T$ with respect to a general orthonormal frame $e_a = e_a{}^\alpha \partial_{x^\alpha}$ (and corresponding co-frame $\alpha^b = e^b{}_\beta \mathrm dx^\beta$). Using the definitions
\begin{equation}
\Gamma_{\alpha\beta}^\gamma = \mathrm dx^\gamma \left(\nabla_\beta \partial_\alpha \right), \quad \Gamma_{ab}^c = \alpha^c \left(\nabla_{e_b} e_a \right)
\end{equation}
for $\Gamma_{\alpha\beta}^\gamma$ and $\Gamma_{ab}^c$ one derives the transformation law
\begin{equation} \label{trafo_christoffels}
\Gamma_{ab}^c = e^c{}_\alpha e_b{}^\beta \partial_\beta e_a{}^\alpha + e^c{}_\gamma e_b{}^\beta e_a{}^\alpha \Gamma_{\alpha\beta}
^\gamma.
\end{equation}
Furthermore the change of variables
\begin{equation}
x^\mu \mapsto y^\mu = x^\mu, \quad p^\nu \mapsto v^a = e^a{}_\nu p^\nu
\end{equation}
entails the replacements
\begin{equation} \label{trafo_derivatives}
\partial_{x^\mu} = \partial_{y^\mu} + e_b{}^\alpha v^b \partial_\mu e^a{}_\alpha \partial_{v^a}, \quad  \partial_{p^\nu} = e^a{}_\nu \partial_{v^a}.
\end{equation}
This yields
\begin{equation}
\mathfrak T = v^a e_a{}^\alpha \partial_\alpha + \left(q F^c{}_a v^a- \Gamma_{ab}^c v^a v^b \right) \partial_{v^c}.
\end{equation}
In order to obtain the explicit expression for the transport operator $\mathfrak T$ with respect to the frame coordinates (\ref{the_coords}) we apply the transformation laws (\ref{trafo_christoffels}) and (\ref{trafo_derivatives}) to the frame (\ref{frame_matrix}), where the Christoffel symbols
\begin{equation}
\Gamma_{\alpha\beta}^\gamma = \frac 12 g^{\gamma\delta} \left( \partial_\alpha g_{\beta\delta} + \partial_\beta g_{\delta\alpha} - \partial_\delta g_{\alpha\beta} \right)
\end{equation}
are calculated from the ansatz (\ref{ansatz_metric}) for the metric. The transport operator is then explicitly given by
\begin{align*}
\mathfrak T &= v^0 e^{-\gamma\nu} \partial_t + e^{-\mu} (v^1 \partial_\varrho + v^2 \partial_z) + \left(v^0 e^{-\gamma\nu} \omega + v^3 \frac{e^{\gamma\nu}}{\varrho H}\right) \partial_\varphi \\
&\quad -q e^{-\mu - \frac{\nu}{c^2}} \left(\left( A_{t,\varrho} + \omega A_{\varphi,\varrho} \right) \Omega_{01}^V + \left(A_{t,z} + \omega A_{\varphi,z}\right) \Omega_{02}^V \right) \\
&\quad + \frac{q}{\varrho H} e^{-\mu + \frac{\nu}{c^2}} \left(A_{\varphi,\varrho}\Omega_{13}^V + A_ {\varphi,z} \Omega_{23}^V \right) + q e^{-2\mu} \left(A_{\varrho,z} - A_{z,\varrho}\right) \Omega_{21}^V \\
&\quad + e^{-\mu} \frac{v^3}{c^2} \left(\nu_{,\varrho} \Omega_{13}^V + \nu_{,z} \Omega_{23}^V \right) - e^{-\mu} v^0 \left(\nu_{,\varrho} \Omega_{01}^V + \nu_{,z} \Omega_{02}^V \right) + e^{-\mu} \left(v^2 \mu_{,\varrho} - v^1 \mu_{,z}\right) \Omega_{21}^V \\
&\quad+ e^{-\mu} \frac{v^3}{H} \left(H_{,\varrho} \Omega_{31}^V + H_{,z} \Omega_{32}^V \right) + \frac{v^3}{\varrho} e^{-\mu} \Omega_{31}^V - e^{-\mu - 2\frac{\nu}{c^2}} \varrho H v^3 \left(\omega_{,\varrho}\Omega_{01}^V + \omega_{,z} \Omega_{02}^V \right)
\end{align*}
where we use the shorthands
\begin{align}
\Omega_{ij}^V := v^i \partial_{v^j} - v^j \partial_{v^i}, \quad \Omega_{0i}^V := \frac{v^i}{c^2} \partial_{v^0} + v^0 \partial_{v^i}.
\end{align}
Now, the transport operator can be applied to the quantities
\begin{align*}
L &= \varrho H e^{ -\gamma\nu} v^3 - q A_\varphi, \\
E &= \frac{e^{\gamma\nu} v^0 - 1}{\gamma} + \omega \varrho H e^{ -\gamma\nu} v^3 + q A_t,
\end{align*}
where we note that $E$ only depends on the variables $\varrho$, $z$, $v^0$, and $v^3$, and $L$ only depends on the variables $\varrho$, $z$, and $v^3$.
\end{proof}


\begin{thebibliography}{99}
\bibitem{abs09} {\sc L.~Andersson, R.~Beig, B.~Schmidt}, {\sl Rotating elastic bodies in Einstein gravity}, Commun.~Pure and Applied Math.~{\bf 63}, 559-589 (2009)
\bibitem{abs08} {\sc L.~Andersson, R.~Beig, B.~Schmidt}, {\sl Sttic self-gravitating elastic bodies in Einstein gravity}, Commun.~Pure and Applied Math.~{\bf 61}, 988-1023 (2008)
\bibitem{akr14}  {\sc H.~Andr\'easson, M.~Kunze, G.~Rein}, {\sl Rotating, stationary, axially symmetric spacetimes with collisionless matter}, Commun.~Math.~Phys.~{\bf 329}, 787-808 (2014)
\bibitem{akr11} {\sc H.~Andr\'easson, M.~Kunze, G.~Rein}, {\sl Existence of axially symmetric static solutions of the Einstein-Vlasov system}, Commun.~Math.~Phys.~{\bf 308}, 23-47 (2011)
\bibitem{b72} {\sc J.~Bardeen}, {\sl Rapidly rotating stars, disks, and black holes}, in ``Black Holes / Les Astres Occlus'', ed.~by C.~DeWitt, B.~S.~DeWitt, Les Houches, 1972
\bibitem{bfh86} {\sc J.~Batt, H.~Berestycki, E.~Horst}, {\sl Stationary spherically symmetric models in stellar dynamics}, Arch.~Rational Mech.~Anal.~{\bf 93}, 159-183 (1986)
\bibitem{galactic_dynamics} {\sc J.~Binney, S.~Tremaine}, {\sl Galactic Dynamics}, Princeton University Press (2008)
\bibitem{bbgn95} {\sc M.~Bocquet, S.~Bonazzola, E.~Gourgoulhon, J.~Novak}, {\sl Rotating neutron star models with a magnetic field}, Astronomy and Astrophysics, {\bf 301}, 757-775 (1995)
\bibitem{cpl00} {\sc C.~Cardall, M.~Parkash, J.~Lattimer}, {\sl Effects of Strong Magnetic Fields on Neutron Star Structures}, The Astrophysical Journal, {\bf 1}, 322 (2001)
\bibitem{cdkkmr18} {\sc P.~Chru\'sciel, E.~Delay, P.~Klinger et al.} {\sc Non-singular spacetimes with a negative cosmological constant: V. Boson stars}, Lett.~Math.~Phys.~{\bf 9}, Vol.~108 (2018)
\bibitem{d85} {\sc K.~Deimling}, {\sl Nonlinear Functional Analysis}. Springer, Berlin- New York 1985.
\bibitem{f12} {\sc T.~Frankel}, {\sl The geometry of physics - An introduction}, Cambridge University Press, 2012
\bibitem{fr12} {\sc J.~Frieben, L.~Rezzolla}, {\sl Equilibrium models of relativistic stars with a toroidal magnetic field}, Mon.~Not.~R.~Astron.~Soc.~{\bf 000}, 1-21 (2012)
\bibitem{h95} {\sc U.~Heilig}, {\sl On the existence of rotating stars in general relativity}, Commun.~Math.~Phys.~{\bf 166}, 457-493 (1995)
\bibitem{h94} {\sc U.~Heilig}, {\sl On Lichtenstein's analysis of rotating newtonian stars}, Annales de l’I.H.P., section A, tome 60, {\bf 4}, 457-487 (1994)
\bibitem{l18} {\sc L.~Lichtenstein}, {\sl Untersuchung \"uber die Gleichgewichtsfiguren rotierender Fl\"ussigkeiten, deren Teilchen einander nach dem Newtonschen Gesetze anziehen. Erste Abhandlung. Homogene Fl\"ussigkeiten. Allgemeine Existenzs\"atze}, Math.~Z., Vol. {\bf 1}, 229-284 (1918)
\bibitem{l33} {\sc L.~Lichtenstein}, {\sl Untersuchung \"uber die Gleichgewichtsfiguren rotierender Fl\"ussigkeiten, deren Teilchen einander nach dem Newtonschen Gesetze anziehen. Dritte Abhandlung. Nichthomogene Fl\"ussigkeiten. Figur der Erde}, Math.~Z., Vol. {\bf 36}, 481-562 (1933) 
\bibitem{ll00} {\sc E.~Lieb, M.~Loss}, {\sl Analysis}, American Mathematical Society, Providence, Rhode Island, 2000
\bibitem{mm69} {\sc T.~Morgan, L.~Morgan}, {\sl The Gravitational Field of a Disk}, Phys.~Rev.~{\bf 183}, 1097 (1969)
\bibitem{rap12} {\sc J.~Ramos-Caro, C.~A.~Ag\'on, J.~F.~Pedraza}, {\sl Kinetic theory of collisionless self-gravitating gases. II. Relativistic corrections in galactic dynamics}, Phys.~Rev.~D {\bf 86}, 043008 (2012)
\bibitem{r00} {\sc G.~Rein}, {\sl Stationary and static stellar dynamic models with axial symmetry}, Nonlinear Analysis; Theory, Methods \& Applications {\bf 41}, 313-344 (2000)
\bibitem{rr00} {\sc G.~Rein, A.~Rendall}, {\sl Compact support of spherically symmetric equilibria in non-relativistic and relativistic galactic dynamics}, Math.~Proc.~Camb.~Phil.~Soc.~{\bf 128}, 363-380 (2000)
\bibitem{sz14} {\sc O.~Sarbach, T.~Zannias}, {\sl Tangent bundle formulation of a charged gas}, AIP Conf.~Proc.~1577, 192-207 (2014)
\end{thebibliography}
\end{document}